\documentclass[a4paper,onecolumn,accepted=2024-02-25]{quantumarticle}
\pdfoutput=1
\usepackage[utf8]{inputenc}
\usepackage[T1]{fontenc}
\usepackage[english]{babel}
\usepackage{amsmath,amssymb,amsthm}
\usepackage{hyperref}
\usepackage{url}
\usepackage{subcaption}
\usepackage{mathtools}
\usepackage{booktabs}
\usepackage{csquotes}
\usepackage[block=space,
backend=biber,
bibstyle=alphabetic,
citestyle=alphabetic,
maxbibnames=10,
]{biblatex}
\renewbibmacro{in:}{}
\bibsetup{
  \setcounter{abbrvpenalty}{9000}
  \setcounter{highnamepenalty}{9000}
  \setcounter{lownamepenalty}{9000}
  \setcounter{biburlnumpenalty}{9000}
}



\newtheorem{question}{Question}

\newtheorem{prop}{Proposition}

\newcommand{\chsh}{\mathcal{C}}

\newcommand{\SP}{\emph{SP }}
\newcommand{\LP}{{\emph{LP }}}
\newcommand{\zs}{{\mathtt{S}}}
\newcommand{\zl}{{\mathtt{L}}}
\newcommand{\Q}{{\mathcal{Q}}}
\newcommand{\SRQ}{{\mathcal{Q}_{SR}}}
\newcommand{\LRQ}{{\mathcal{Q}_{LR}}}
\newcommand{\QC}{{\mathcal{M}_{qc}}}
\newcommand{\QQ}{{\mathcal{M}_{qq}}}

\DeclareMathSymbol{\shortminus}{\mathbin}{AMSa}{"39}

\usepackage{tikz}
\usetikzlibrary{automata,arrows.meta,positioning,quotes}
\usetikzlibrary{calc}
\usetikzlibrary{shapes.symbols}
\usetikzlibrary{fit}
\usetikzlibrary{shapes.geometric}
\usepackage{siunitx}
\usepackage{physics}

\usepackage{subcaption} 

\addbibresource{references2.bib}
\AtEveryBibitem{\clearfield{month}}
\AtEveryBibitem{\clearfield{issue}}
\AtEveryBibitem{\clearfield{primaryclass}}
\AtEveryBibitem{\clearfield{archivePrefix}}

\begin{document}
\definecolor{black}{rgb}{0,0,0}

\title{Certifying long-range quantum correlations through routed Bell tests}
\author{Edwin Peter Lobo}
\author{Jef Pauwels}
\author{Stefano Pironio}
\affiliation{\normalsize Laboratoire d'Information Quantique, Université libre de Bruxelles (ULB), Belgium}
\date{24 April 2024}

\maketitle
\begin{abstract}
    Losses in the transmission channel, which increase with distance, pose a major obstacle to photonics demonstrations of quantum nonlocality and its applications.
    Recently, Chaturvedi, Viola, and Pawlowski (CVP) [arXiv:2211.14231] introduced a variation of standard Bell experiments with the goal of extending the range over which quantum nonlocality can be demonstrated.
    These experiments, which we call `routed Bell experiments', involve two distant parties, Alice and Bob, and allow Bob to route his quantum particle along two possible paths and measure it at two distinct locations – one near and another far from the source.
    The premise is that a high-quality Bell violation in the short-path should constrain the possible strategies underlying the experiment, thereby weakening the conditions required to detect nonlocal correlations in the long-path. Building on this idea, CVP showed that there are certain quantum correlations in routed Bell experiments such that the outcomes of the remote measurement device cannot be classically predetermined, even when its detection efficiency is arbitrarily low. 
    In this paper, we show that the correlations considered by CVP, though they cannot be classically predetermined, do not require the transmission of quantum systems to the remote measurement device. 
    This leads us to define and formalize the concept of `short-range' and `long-range' quantum correlations in routed Bell experiments. We show that these correlations can be characterized through standard semidefinite-programming hierarchies for non-commutative polynomial optimization. We then explore the conditions under which short-range quantum correlations can be ruled out and long-range quantum nonlocality can be certified in routed Bell experiments. We point out that there exist fundamental lower-bounds on the critical detection efficiency of the distant measurement device, implying that routed Bell experiments cannot demonstrate long-range quantum nonlocality at arbitrarily large distances. However, we do find that routed Bell experiments allow for reducing the detection efficiency threshold necessary to certify long-range quantum correlations. The improvements, though, are significantly smaller than those suggested by CVP's analysis. 
\end{abstract}

\section{Introduction}

Losses in the transmission channel, which increase with the distance 
(exponentially in fibers and quadratically in free-space),  are a major obstacle for demonstrating the 
violation of Bell inequalities in photonic experiments 
\cite{Pearle1970,Clauser1974,Garg1987,Eberhard1993,Larsson1998,Massar2003,Brunner2014}. They represent a daunting challenge for long-distance applications that rely on quantum nonlocality, such as entanglement certification between distant parties or device-independent quantum key distribution (DIQKD) \cite{Acin2007,zapatero2023advances}.

Chaturvedi, Viola, and Pawlowski (CVP) recently proposed an interesting idea that could potentially address this limitation \cite{Chaturvedi2022}. They looked at a Bell experiment that involves the usual two distant parties, Alice and Bob. However, in their setup, Bob can measure his quantum particles at two distinct locations – one close to the source, $B_\zs$, and another far away, $B_\zl$, as illustrated in Fig.~1. ($\zs$ and $\zl$ stand for `short' and `long' distance, respectively). This can be accomplished, for example, by using a switch that directs Bob's quantum particle either to the nearby measurement device $B_\zs$ or to the distant one $B_\zl$, depending on a classical input $z\in\{\zs,\zl\}$. Like all other components of the experiment (the source, the transmission channel, the measurement devices), the switch does not need to be trusted. As in any Bell experiment, certain causality constraints are assumed: events at Alice's side cannot causally influence those at Bob's side, and vice versa. In particular, the switch input $z$ cannot affect the quantum state or measurement of Alice. 
We refer to such Bell experiments with selective routing of quantum particles to different locations as \emph{routed Bell experiments}.

\begin{figure}[t]
    \centering
    \begin{subfigure}{0.485\textwidth}
    \includegraphics[width=\textwidth]{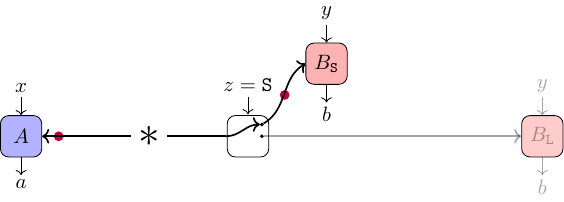}
     \caption{The particle is routed to the closeby device when $z=\zs$.}
    \end{subfigure}
    \hfill
    \begin{subfigure}{0.485\textwidth}
    \includegraphics[width=\textwidth]{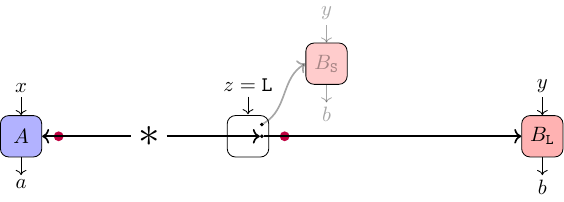}
    \caption{The particle is routed to the distant device when $z=\zl$.}
    \end{subfigure}
    \caption{Routed Bell experiment. Depending on the value $z\in\{\zs,\zl\}$, a switch directs Bob's quantum particle either to (a) a nearby measurement device $B_\zs$ or (b) to a distant one $B_\zl$. The experiment is characterized by the joint output|input probabilities $p(a,b|x,y,z)$. \label{fig:routedBell} }
\end{figure}

In their paper, CVP examine a situation where Alice's measurement device $A$ receives a binary input $x\in\{0,1\}$ and produces a binary outcome $a\in\{\pm 1\}$. Similarly, the measurement device $B_\zs$ or $B_\zl$, selected by the switch, receives a binary input $y\in\{0,1\}$ and produces a binary output $b\in\{\pm 1\}$. 
We denote Alice's observables as $A_x$, those of Bob as  $B_{yz}$ (which may depend on the switch setting $z$), and the expectation of their product as $\langle A_x B_{yz}\rangle$. Alice and Bob can then evaluate two CHSH expressions,
\begin{equation}
\label{eq:CHSH}
\chsh_z = \langle A_0 B_{0z}\rangle + \langle A_0 B_{1z}\rangle + \langle A_1 B_{0z}\rangle - \langle A_1 B_{1z}\rangle\,,
\end{equation}
involving either the nearby or faraway measurement devices, depending on the switch value $z\in\{\zs,\zl\}$. We refer to $\chsh_\zs$ as the short-path (\emph{SP}) CHSH value and $\chsh_\zl$ as the long-path (\emph{LP}) CHSH value.

In a standard Bell experiment, the condition for ruling out classical models and thus certifying genuine quantum properties in 
the \LP test would be the violation of the well-known local bound: $\chsh_\zl\leq  2$. 
CVP argue, however, that in any quantum model where the output of the distant measurement device $B_\zl$ is predetermined 
by classical variables, the \LP value is upper-bounded by
\begin{equation}
    \label{eq:CHSHtradeoff}
    \chsh_\zl \leq \sqrt{8 - \chsh_\zs^{\;2}}\,,
\end{equation}
whenever the \SP value violates the local bound, i.e., whenever $\chsh_\zs>2$. This bound implies that the \SP test can be used to weaken the conditions for ruling out classical models in the \LP test, since the right-hand side of (\ref{eq:CHSHtradeoff}) is strictly smaller than 2 for any value of $\chsh_\zs>2$. 

This result suggests that routed Bell experiments might provide a way to dramatically extend  
the range over which nonlocality can be demonstrated. Indeed, assume, as an illustration, an 
ideal CHSH realization where the source prepares the maximally entangled two-qubit state $|\phi_+\rangle$, $A$ measures in the bases $Z,X$, and $B_\zs$ and $B_\zl$ in the bases $(Z\pm X)/\sqrt{2}$. Then $\chsh_\zs = \chsh_\zl = 2\sqrt{2}$. 
Typically, however, the \LP CHSH value will be lower than the \SP CHSH value due to additional 
losses and noise in the transmission channel. For instance, let's assume that $B_\zs$ has a device 
with global detection efficiency $\eta_\zs$, while $B_\zl$, being farther away from the source, 
has a smaller efficiency $\eta_\zl<\eta_\zs$ (and for simplicity that $A$ has a measurement device with 
unit detection efficiency $\eta_A=1$). Hence, with some non-zero probability, $B_\zs$ and $B_\zl$
 will occasionally fail to click. Simply discarding the `no-click' outcomes $\varnothing$ can lead 
 to the detection loophole and is only valid under the fair sampling assumption \cite{Pearle1970,Clauser1974}. 
 However, one can deal with these `no-click' results by mapping them to one of the  $\pm 1$ 
 outcomes, say, $+1$ \cite{Clauser1974,Wilms2008,Brunner2014}. Taking into account losses in this way, the \SP and \LP CHSH values become
\begin{equation}
    \label{eq:CHSHefficiencies}
    \chsh_\zs  = \eta_\zs 2\sqrt{2} \quad \text{ and } \quad \chsh_\zl  = \eta_\zl 2\sqrt{2}\,.
\end{equation}
If we substitute these values into (\ref{eq:CHSHtradeoff}), we find that classical models for $B_\zl$ are ruled out if 
\begin{equation}
    \label{eq:etatradeoff}
    \eta_\zl > \sqrt{1-\eta_\zs^2}\,.
\end{equation}
For instance, if $\eta_\zs = 1-\delta$, then an efficiency $\eta_\zl > \sqrt{2\delta}$ for the far-away device is sufficient and can be made arbitrarily small as $\delta\rightarrow 0$. Taking into account that detection efficiencies decrease with transmission distance, the implication is that by performing high-quality CHSH tests close to the source (which are achievable with current technology), the bound \eqref{eq:CHSHtradeoff} can be violated even if the measurement device $B_\zl$ is at an arbitrarily large distance from the source.

The significance of this result, however, hinges on the assumptions used to derive the bound \eqref{eq:CHSHtradeoff} and in particular on what one means by `demonstrating nonlocality' and `ruling out classical models' in routed Bell experiments.  Evidently, the aim is not to rule out local hidden-variable models à la Bell for the \emph{entire} routed Bell experiment. This is because such models are already ruled out by the \SP test, without any need to even consider the \LP test. Furthermore, the relation \eqref{eq:CHSHtradeoff} explicitly assumes that there is no local hidden-variable model for the \SP test, since it relies on the violation $\chsh_\zs>2$ of the local bound.

The idea is thus to take for granted that the devices $A$ and $B_\zs$, which are located close to the source, behave quantumly and ask whether the observed correlations can, or cannot, be explained if the faraway device $B_\zl$ behaves classically. However, various definitions are possible for what it means for $B_\zl$ to behave classically.

In this paper, we adopt the following view. We assume that on Bob's side the transmission of quantum information is only possible at short distance, where by `short distance' we mean that it cannot reach the remote device $B_\zl$. Thus, beyond some point on the line connecting the source to $B_\zl$, quantum data can no longer be transmitted or processed and the experimental setup becomes entirely classical. In particular, the device $B_\zl$ functions as a purely classical device (e.g., a classical computer) receiving classical data through a classical transmission channel (e.g., radio waves from a wifi emitter). If such a model can reproduce the experimental data, then it is not possible to claim in a device-independent way that long-range quantum correlations have been demonstrated, since they can be replicated without any entanglement or quantum communication reaching $B_\zl$. Conversely, if no such model can reproduce the experimental data, then long-range quantum correlations can be certified. 

We view this framing of the problem as the relevant one in the context of transmission losses and their impact on the demonstration of quantum nonlocality over long distances.
Indeed, if at short distance quantum nonlocality is already established and quantum resources are known to be transmitted  (to perform the \SP CHSH test), then the pertinent question is whether transmission of quantum resources is also necessary over longer distances, or whether it is possible to achieve the same results without them.

The above standpoint differs from the one adopted by CVP. CVP's definition of what it means for the distant device $B_\zl$ to behave classically, and which is used to derive the bound \eqref{eq:CHSHtradeoff}, is that $B_\zl$'s outcome is determined by classical variables already set at the source. We point out in this paper that this definition does not encompass the most general class of short-range quantum models in the sense outlined above. We do this by presenting a simple strategy in which the remote device $B_\zl$ is entirely classical and no quantum information ever reaches it, yet which achieves the standard classical bound $\chsh_\zl=2$ for any value of $\chsh_\zs>2$, i.e., that violates the bound \eqref{eq:CHSHtradeoff} satisfied by CVP models.

This motivates us to introduce a proper definition of \emph{short-range quantum correlations}, which suitably captures long-range nonlocality and the potential tradeoff between \SP and \LP tests in routed Bell experiments. We then show that short-range quantum correlations (and incidentally the correlations considered by CVP) can be characterized through standard semidefinite-programming hierarchies for non-commutative polynomial optimization. 
Based on this formulation, we derive new \LP bounds and show that \SP tests do lead to weakened conditions for \LP tests.  Although they allow for reducing the detection efficiency threshold necessary to certify long-range quantum correlations, the improvements are considerably smaller than those suggested by CVP's analysis and the bound \eqref{eq:CHSHtradeoff}. In particular, we point out that there exists fundamental lower-bounds on the critical detection efficiency $\eta_\zl$ of the distant measurement device $B_\zl$, namely $\eta_\zl>1/M$ where $M$ is the number of measurement settings of $B_\zl$. The same lower-bounds hold for regular Bell tests and imply that routed Bell experiments cannot be used to demonstrate long-range quantum nonlocality at arbitrarily large distances. 

This paper is organized as follows. We first briefly review, in Section~\ref{sec:CVPmodels}, the class of models considered by CVP and present an example of a short-range quantum correlation that violates the bound \eqref{eq:CHSHtradeoff}. The reader not interested in an extensive analysis of CVP models may skip Section~\ref{sec:CVPmodels} without loss of continuity. In Section~\ref{sec:RoutedBellSRQ}, we introduce more formally our definition of short-range and long-range quantum correlations in routed Bell experiments. We also analyze more generally various classes of correlations that can be obtained in routed Bell experiments and show how they can be characterized through standard semidefinite-programming hierarchies for non-commutative polynomial optimization. In Section~\ref{sec:SPvsLP}, we derive new \emph{SP}/\LP relations valid according to our definition. In Section~\ref{sec:dec}, we analyze the required detection efficiencies required to demonstrate long-range quantum correlations in routed Bell experiments. We conclude with a discussion of our results.

\section{CVP models vs short-range quantum correlations} \label{sec:CVPmodels}

As they form the initial motivation for the present paper, we begin by examining CVP models as a potential mechanism for the observed correlations in routed Bell experiments.

Assume, for concreteness, a routed Bell experiment characterized by quantum correlations as in  \eqref{eq:CHSHefficiencies} where $\eta_\zs$ and $\eta_\zl$ are such that $\chsh_\zs>2$, but $\chsh_\zl \le 2$. A natural question to ask is: Can such correlations be reproduced by a classical remote device $B_\zl$ without any distribution of entanglement between $A$ and $B_\zl$? In a standard Bell experiment, this would be the case if and only if the output of $B_\zl$ were fully determined by classical variables $\lambda$ already set at the source and shared with $A$. Hence, it seems reasonable to make the same assumption here.
However, since the \SP test does violate the CHSH inequality, the measurement devices $A$ and $B_\zs$ must, as discussed earlier, share and exploit quantum entanglement.

This leads us to consider a hybrid quantum-classical model where the source generates an entangled quantum state $\rho_{AB_\zs}$ that can reach $A$ and $B_\zs$, along with classical variables $\lambda$ that determine $B_\zl$'s measurement outcomes. In full generality, these classical variables can also be correlated with the quantum system of $A$ and $B_\zs$ and (partly) determine their outcomes, i.e., the state $\rho^\lambda_{AB_\zs}$ can also depend on $\lambda$. For such a model, we can then write the correlations generated in a routed Bell experiment as
\begin{equation}
    \label{eq:CVPcorrelations1}
p(a,b|x,y,z) =\\
    \begin{cases}
         \sum_\lambda p(\lambda)\,\Tr\left[\rho^\lambda_{AB_\zs}\, M_{a|x}\otimes M_{b|y,\zs}\right] & \text{ if } z=\zs\\
        \sum_\lambda p(\lambda)\, p(b|y,\lambda)\, p(a|x,\lambda)& \text{ if } z=\zl
    \end{cases}
\end{equation}
where the first line describes quantum correlations between $A$ and $B_\zs$ and the second line classical correlations between $A$ and $B_\zl$. 
These two lines should be coupled by the condition that $p(a|x,\lambda) = \Tr  \left[\rho^\lambda_{A}\, M_{a|x}\right]$, since 
what Alice does cannot causally depend on what happens on Bob's side, and in particular on whether $z=\zs$ or $\zl$. This is the formulation used in \cite{Chaturvedi2022}, which leads to the bound \eqref{eq:CHSHtradeoff}.

The intuition behind the derivation of this bound is as follows.
Assume first for simplicity that the \SP CHSH expression reaches the maximal quantum value $\chsh_\zs=2\sqrt{2}$. Then, by standard self-testing 
results \cite{Cirelson1980, Mayers2004, Supic2020}, it can be inferred that the 
measurement $A_x$ corresponds to a Pauli measurement on a two-dimensional subspace of $A$ that is maximally entangled with $B_\zs$ and acts as the 
identity on any other degrees of freedom. In particular, the measurement outcome of $A_x$ must be fully random and uncorrelated with the classical 
instructions $\lambda$ shared with $B_\zl$. Consequently, we have $p(a|x,\lambda)=p(a|x)=1/2$ for all $\lambda$. Substituting this 
condition in \eqref{eq:CVPcorrelations1} implies that the correlations between $a$ and the output $b$ of $B_\zl$ 
vanishes: $\langle A_x B_{y\zl}\rangle=\sum_{a,b\in\{\pm 1\}} ab\, p(a,b|x,y,\zl)=\frac{1}{2}\sum_{a,b\in\{\pm 1\}} ab \,p(b|y,\zl)=0$ for all $x,y$. This in turn implies that $\chsh_\zl=0$.

The hypothesis $\chsh_\zs=2\sqrt{2}$ is obviously too strong in any real-life experiment. However, the above argument can be refined using the fact that for any value $\chsh_\zs>2$, there is a bound on how much the measurement outcomes of $A_x$ can be correlated to any other system besides $B_\zs$, and in particular to the classical instructions $\lambda$ shared with $B_\zl$: specifically $\left|p(a|x,\lambda)\right|\leq 1/2 +\sqrt{8-\chsh_\zs^2}/4$ for all $\lambda$ \cite{Pironio2010}. Building on this result, CVP arrive at the bound \eqref{eq:CHSHtradeoff}.

\subsection{A strategy based on a fully classical $B_\zl$ that violates the bound \eqref{eq:CHSHtradeoff}}\label{sec:example}

We now present a simple strategy where $B_\zl$ is entirely classical and which aligns with the intuitive notion of short-range quantum correlations discussed in the introduction, but that violates \eqref{eq:CHSHtradeoff}.  This shows that CVP models do not correspond to the notion taken here of what it means for $B_\zl$ to behave classically.

We start from the ordinary quantum strategy that yields the \SP and \LP CHSH expectations (\ref{eq:CHSHefficiencies}). We recall that in this strategy, the source prepares the two-qubit entangled state $(|00\rangle+|11\rangle)/\sqrt{2}$. The first qubit is measured by $A$ in the bases ${Z,X}$, while the second qubit is directed to either $B_\zs$ or $B_\zl$, depending on the switch setting, and is then measured in the bases $(Z\pm X)/\sqrt{2}$.  If the second qubit is directed to $B_\zs$, it has a probability $\eta_\zs$ of being detected, whereas if it is directed to $B_\zl$, it has a lower probability $\eta_\zl$ of being detected.


Consider now an alternative strategy where the source, the measurement device $A$, the switch, and the measurement device $B_\zs$  
all behave as in the ordinary quantum strategy described above. Thus any value $\chsh_\zs\in[0,2\sqrt{2}]$ can be obtained by 
tuning $\eta_\zs$. We only modify what happens in the experiment \emph{after} the second qubit has been directed towards $B_\zl$ 
when the switch has been set to $z=\zl$. In this case, at some location between the switch and $B_\zl$ -- possibly just after 
the switch, but in any case before reaching $B_\zl$ -- the second qubit gets measured in the $Z$ basis yielding an 
outcome $\lambda\in\{\pm 1\}$, as illustrated in Fig. \ref{fig:CVP_counterexample}. This classical outcome is then 
transmitted to $B_\zl$ through some purely classical channel and upon receiving $\lambda$, $B_\zl$ simply outputs it, 
irrespective of which input $y\in\{0,1\}$ is selected. We then have $p(\lambda)=1/2$, $p(b|y,\lambda) = \delta_{b,\lambda}$, 
$p(a|x,\lambda) = \Tr\left[\rho_\lambda A_x\right]$, where $\rho_{1} = |0\rangle\langle 0|$ and $\rho_{-1} = |1\rangle\langle 1|$ 
are the reduced states for Alice conditioned upon $\lambda$. We can then evaluate the probabilities $p(a,b,|x,y,\zl)$ by inserting these 
expressions in \eqref{eq:CVPcorrelations1} or more simply directly evaluate the \LP correlators $\langle A_xB_{y\zl}\rangle$ through
\begin{equation}
    \langle A_xB_{y\zl}\rangle = \langle\phi_+| A_x \otimes Z |\phi_+\rangle\,,
\end{equation}
since whatever the choice of $y$, the `measurement' $B_{y\zl}$ corresponds to an effective $Z$ measurement on Bob's particle.
If $x=0$, i.e, $A_0=Z$, we find $\langle A_0B_{y\zl}\rangle =1$, while if $x=1$, i.e, $A_1=X$, we find $\langle A_1B_{y\zl}\rangle =0$. As claimed, this implies the \LP CHSH value $\chsh_\zl =2$.

\begin{figure}[t]
    \centering
   \includegraphics{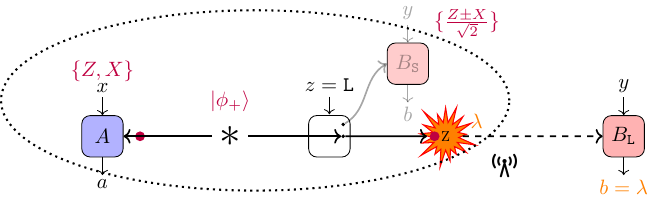}
    \caption{A strategy yielding both $\chsh_\zs=2\sqrt{2}$ and $\chsh_\zl=2$. The source prepares the Bell state $|\phi_+\rangle$, Alice performs $Z$ or $X$ measurements on her qubit, and if $z=\zs$, Bob performs the measurements $(Z\pm X)/\sqrt{2}$ at $B_\zs$. On the other hand, if $z=\zl$, Bob's qubit gets measured in the $Z$ basis and the classical outcome $\lambda$ is transmitted to $B_\zl$, which simply outputs it. The entire quantum part of this experiment, enclosed by the dotted line, may happen, for instance, on an optical table in Gdansk and the classical outcome $\lambda$ sent by email to $B_\zl$ located in Sydney. }
    \label{fig:CVP_counterexample}
\end{figure}

\subsection{Discussion}\label{CVPresources}
The above example shows that the notion of short-range quantum  strategies considered in the present paper does not coincide with the set of CVP strategies. The assumption that the outcomes of $B_\zl$ are determined by classical variables that are already specified at the source is too strong to capture properly the notion of short-range quantum correlations and to detect long-range quantum nonlocality.

If one were to take CVP's original definition as a definition for long-range nonlocality in routed Bell experiments, then one would have to agree that a standard Bell experiment done on an optical table in Gdansk and whose classical measurement outcomes are sent by email to Sydney would feature nonlocality between Gdansk and Sydney, since such a procedure would allow for implementing the strategy of Fig.~\ref{fig:CVP_counterexample} which violates the CVP bound \eqref{eq:CHSHtradeoff}.

In particular, the above example shows that from the observation of a violation of the bound \eqref{eq:CHSHtradeoff}, it \emph{cannot} be inferred that  
\begin{itemize}
    \setlength{\itemsep}{0pt}
    \setlength{\parskip}{0pt}
    \setlength{\parsep}{0pt}
    \item long-range quantum resources are necessary to reproduce the correlations,
    \item entanglement had to be distributed from the source to the faraway device $B_\zl$,
    \item the device $B_\zl$ behaves quantumly, e.g., it performs incompatible measurements, 
    \item fresh quantum randomness is generated after the input $y$ is given to $B_\zl$,
    \item the correlations can be used for secure DIQKD between $A$ and $B_\zl$,
    \item or that any other property typically associated with quantum nonlocality is present at the remote device $B_\zl$.
\end{itemize}

There are actually two distinct questions that can be raised regarding the correlations observed in a routed Bell experiment.
\begin{question}
    Can we trace back the outcomes of the remote device $B_\zl$ to a genuine quantum measurement? 
\end{question}
\begin{question}
     Can we trace back the outcomes of the remote device $B_\zl$ to a genuine quantum measurement occurring at or near $B_\zl$?
\end{question}
Question 1 is addressed by CVP models. If it is not possible to account for the outcomes of the remote device by classical variables predetermined at the source, then some quantum measurement must have taken place between the source and $B_\zl$. However, CVP models say nothing about \emph{where} this quantum measurement happened. It could have taken place near the source, the switch, or in proximity to $B_\zl$.

Question 2, on the other hand, focuses not only on the fact that a quantum measurement took place, but in addition that it took place at the remote location $B_\zl$.  We view this question as the relevant one in the context of routed Bell experiments and the impact of transmission losses on the demonstration of quantum nonlocality over long distances.
Indeed, in a routed Bell experiment that exhibits a \SP violation, it is already clear that the experiment possesses the capability to demonstrate quantum effects at short distances. The interesting question in most applications is not just whether the outcomes of $B_\zl$ also depend on such quantum effects, but whether they depend on quantum effects arising far away from the source.

In the Gdansk-Sydney experiment, it is true that based on the information available in Sydney,  and if the bound \eqref{eq:CHSHtradeoff} is 
violated, one can infer that a quantum measurement must have taken place on the particle sent by the source. But this does not imply that 
nonlocality has been established between Gdansk and Sydney. In particular, the measurement performed on the particle sent by the source might 
have happened well before the input $y$ was provided to $B_\zl$ in Sydney. In contrast, the objective of the present paper is to identify 
conditions under which one can conclude that a quantum measurement took place in Sydney after the input $y$ was provided to $B_\zl$. The  
distinction between these scenarios is illustrated in Fig.~\ref{fig:spacetime_diagrams}, which depicts a spacetime diagram representing three types of correlations 
that can be observed in a routed Bell experiment: genuine long-range quantum correlations, short-range quantum correlations, and CVP correlations.

\begin{figure}[t] 
    \begin{subfigure}{.332\textwidth}
        \includegraphics[width=1\textwidth]{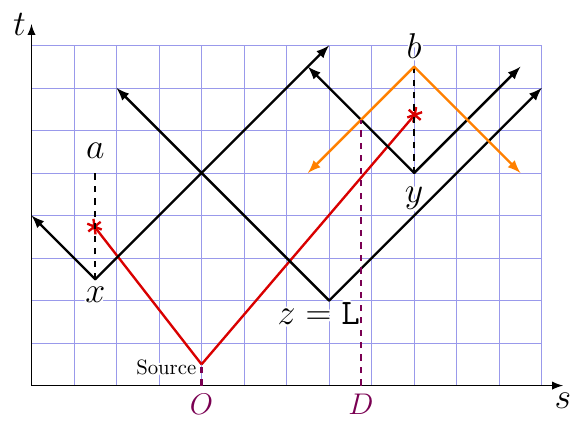} 
        \caption{Long-Range Quantum Correlations}
    \end{subfigure}%
    \begin{subfigure}{.332\textwidth}
        \includegraphics[width=1\textwidth]{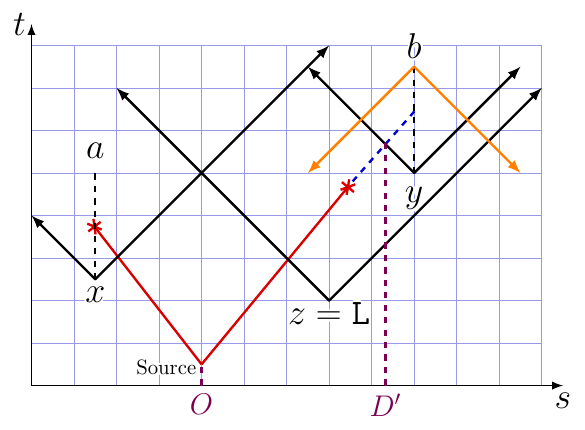} 
        \caption{Short-Range Quantum Correlations}
    \end{subfigure}%
    \begin{subfigure}{.332\textwidth}
        \includegraphics[width=1\textwidth]{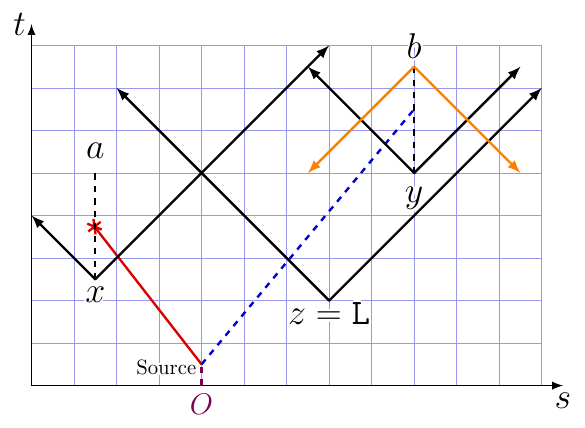} 
        \caption{CVP Correlations}
    \end{subfigure}%
    \caption{Spacetime diagram of routed Bell experiments in the case where $z=\zl$. Red lines represent the transmission of quantum information, 
    blue dotted lines of classical variables.  \emph{(a)} Long-range quantum correlations: a quantum measurement is performed by Bob within the 
    future-light cone of the input choice $y$ and beyond a distance $D$ from the source,  
    corresponding to the point where the future-light cone of $y$ and the past-light cone of $b$ intersect. 
    \emph{(b)} Short-range quantum correlations: a measurement occurs before the particle reaches $B_\zl$, at a distance smaller than $D'$, and outside the future-light cone of $y$.  \emph{c)} CVP correlations: the outcomes of $B_\zl$ are determined by classical variables predetermined at the source. Correlations that cannot be represented by \emph{(c)} models belong to either \emph{(b)} or \emph{(a)} types. Correlations that cannot be represented by \emph{(b)} models are of the \emph{(a)} type.} 
    \label{fig:spacetime_diagrams}
\end{figure}

We note that the strategy depicted in Fig.~\ref{fig:CVP_counterexample} is fully consistent with a device-independent setting, where all components including the switch, the communication channel, and the device $B_\zl$ are untrusted. The intermediate $Z$ measurement that is performed when $z=\zl$, could for instance be implemented by the switch device itself. However, it is worth emphasizing that even if the switch device were to be fully trusted, the strategy could still be applied by performing the intermediate $Z$ measurement somewhere on the transmission channel connecting the switch to $B_\zl$.
Again, this would be fully consistent with device-independent scenarios, where transmission channels between devices are usually assumed to be untrusted and may not behave as expected (in our case, the lossy channel characterizing the transmission line would be replaced by a quantum-classical channel).

One might argue that in a semi-device-independent setting where both the switch and the communication channel are trusted, and losses are nonmalicious, our strategy would not be applicable. However, losses in Bell experiments are a problem only in a scenario where they are untrusted. The detection loophole in standard Bell experiments arises from the possibility for classical models to replace the existing transmission channel with an alternative one where losses are not merely passive events but explicitly depend on the inputs of the devices.
If losses are instead assumed to be innocent and cannot be exploited by underlying classical models, or can only be exploited in a limited way \cite{Putz2016} -- this is the famous fair sampling assumption \cite{Pearle1970,Clauser1974} that one typically aims to avoid when analyzing Bell experiments -- then nonlocality can already be demonstrated in standard Bell experiments regardless of the extent of such losses.  Thus if the aim is to establish nonlocality over long distances, there is no clear incentive for considering routed Bell experiments with trusted transmission channels. 

But even in a hypothetical scenario where both the switch and the transmission channel are trusted (or a scenario based on certain assumptions preventing any quantum measurement between the source and the measurement device $B_\zl$), the conclusion reached by the violation of CVP models would still be limited. Indeed, the strategy depicted in Fig.~\ref{fig:CVP_counterexample} could still be applied with the measurement $Z$ taking place inside the measurement device $B_\zl$ independent of the input $y$. Thus, a violation of the bound \eqref{eq:CHSHtradeoff} would indicate that quantum entanglement has been established between $A$ and $B_\zl$, however, it would not imply other quantum properties typically associated with quantum nonlocality, such as the fact that different input choices $y$ correspond to incompatible measurements. 

The strategy of Fig.~\ref{fig:CVP_counterexample} also has implications for an alternate interpretation of routed Bell experiments, discussed in \cite{Chaturvedi2022}. Instead of considering experiments with a switch that alters the path of the quantum particle and routes it to different measurement devices, one could also consider a (more impractical) scenario in which Bob has a single measurement device that is physically moved in each experimental run either to the close location $B_\zs$ or the faraway location $B_\zl$. However, in light of Fig.~\ref{fig:CVP_counterexample}, to achieve a violation of the bound \eqref{eq:CHSHtradeoff}, it is unnecessary to actually move the measurement device. The same violation can be obtained by consistently leaving the measurement device at the close location $B_\zs$, performing the intermediate $Z$ measurement there, and relaying its classical outcome to the remote location $B_\zl$.

Finally, we point out that the strategy depicted in Fig.~\ref{fig:CVP_counterexample} can also be interpreted from a more foundational perspective. One might for instance consider alternative theories to the standard quantum theory, where entangled particles undergo spontaneous collapse after travelling a certain distance. For example, a pair in the $|\phi_+\rangle=\left(|00\rangle+|11\rangle\right)/\sqrt{2}$ state might collapse with probability 1/2 either to the $|00\rangle$ state or the $|11\rangle$ after particles have travelled a distance $d>D$. Such an alternative theory could then violate Bell inequalities when the measurement devices of Alice and Bob are within distance $D$, but would satisfy standard Bell inequalities when they are at a distance larger than $D$. One could attempt to rule out such theories by performing routed Bell experiments with the remote device $B_\zl$ located at a distance $d>D$. Since the spontaneous collapse described above effectively amounts to doing the intermediate $Z$ measurement in Fig.~\ref{fig:CVP_counterexample}, such an alternative theory could in principle reproduce a \LP CHSH value of $\chsh_\zl=2$  and thus cannot be falsified by a violation of the bound \eqref{eq:CHSHtradeoff}.

Because of all the reasons above, we consider in the present paper models that are alternative to those introduced by CVP and which accommodate strategies such as those depicted in Fig.~\ref{fig:CVP_counterexample}.

\subsection{Monogamy of quantum correlations and the bound \eqref{eq:CHSHtradeoff}}\label{sec:monogamy}
Before introducing more formally our formulation of short-range and long-range quantum correlations in routed Bell experiments, we point out that the bound \eqref{eq:CHSHtradeoff} was already established in a more general setting in \cite{Toner2006}. 

This follows from the fact that CVP's assumption that the measurement results of $B_\zl$ are predetermined at the source is equivalent to the assumption that the source prepares a tripartite $qqc$-state
\begin{equation}
    \label{eq:qqc}
    \rho_{AB_\zs B_\zl} = \sum_{\lambda} p(\lambda)\, \rho_{AB_\zs}^\lambda \otimes |\lambda\rangle\langle \lambda|_{B_\zl}
\end{equation}
yielding,  when measurements are performed on $A$ and on either $B_\zs$ or $B_\zl$, correlations of the form
\begin{equation}
    \label{eq:CVPcorrelations}
    p(a,b|x,y,z) =
        \begin{cases}
             \Tr\left[\rho_{AB_\zs B_\zl}\, M_{a|x}\otimes M_{b|y,\zs}\otimes \mathbb{I}\right] & \text{ if } z=\zs\\
            \Tr \left[\rho_{AB_\zs B_\zl}\, M_{a|x}\otimes \mathbb{I}\otimes M_{b|y,\zl}\right] & \text{ if } z=\zl\,.
        \end{cases}
\end{equation}
Indeed, using the explicit $qqc$-form \eqref{eq:qqc} of the state $\rho_{AB_\zs B_\zl}$, it is easily seen that the above correlations are equivalent to those in~\eqref{eq:CVPcorrelations1}.

More generally, one could consider correlations obtained by measuring a genuine tripartite $qqq$-state $\rho_{AB_\zs B_\zl}$. Such correlations would correspond to a routed Bell experiment where the source produces on Bob's side a pair of quantum systems $(B_\zs,B_\zl)$, but where the nearby measurement device only measures the $B_\zs$ system and the remote measurement device only measures the $B_\zl$ system. 
It was already proven in \cite{Toner2006} that the relation \eqref{eq:CHSHtradeoff} holds for correlations arising from measurements on such  tripartite $qqq$-state $\rho_{AB_\zs B_\zl}$, and thus also for the more restricted case of $qqc$-states corresponding to CVP correlations. It was furthermore already proven in \cite{Toner2006} (below the proof of Lemma~4) that when $\chsh_\zs>2$, it is sufficient to consider $qqc$-states to saturate the bound \eqref{eq:CHSHtradeoff}. 

In \cite{Toner2006}, the bound \eqref{eq:CHSHtradeoff} is interpreted as a monogamy relation: two pair of systems ($A$ and $B_\zs$) in a general tripartite quantum state can lead to a violation of the CHSH inequality, only if the CHSH value for the other pairs of systems ($A$ and $B_\zl$) is limited, even if $B_\zl$ is quantum. It is indeed easy to see that the intuition behind the derivation of the bound \eqref{eq:CHSHtradeoff} presented at the beginning of Section~\ref{sec:CVPmodels} does not rely on the fact that $B_\zl$ is classical, but on such monogamy of quantum correlations.

In Section~\ref{sec:SPvsLP}, we will derive new relations between \SP and \LP tests valid for our general definition of short-range quantum correlations. Unlike the bound \eqref{eq:CHSHtradeoff}, such relations will not follow from the monogamy of quantum correlations. This is because routed Bell experiments should, in general, be considered as bipartite experiments where Bob's entire quantum system is routed either to one measurement device or the other measurement device, and cannot always be viewed as tripartite experiments by dividing Bob's system into a pair of subsystems, one for each measurement location.

\section{Routed Bell experiments and short-range quantum correlations} \label{sec:RoutedBellSRQ}
In this Section and the subsequent ones, we undertake a more formal and systematic analysis of correlations in routed Bell experiments. We consider a general routed Bell experiment as depicted in Fig.~\ref{fig:routedBell}. 
We introduce the following notation, which was already implicitly used in the previous sections. 

The state generated by the source is denoted as $\rho_{AB}$ and Alice's measurements are denoted as $M_{x}$ with POVM elements $M_{a|x}$. On Bob's side, the measurements are denoted as $M_{yz}$ with POVM elements $M_{b|yz}$; these operators depend not only on the local input $y$, but also on the switch value $z$, since the measurements made by the \SP device $B_{\zs}$ and the \LP device $B_{\zl}$ are not necessarily identical.
We assume that Alice has $m_{A}$ input choices and $d_{A}$ possible output results, i.e., $x\in\{0,1,\ldots,m_A-1\}$ and $a\in\{0,1,\ldots,d_A-1\}$. Similarly, Bob's measurement device on the short-path has $m_{B_\zs}$ inputs and $d_{B_\zs}$ outputs, while the device on the long-path has $m_{B_\zl}$ inputs and $d_{B_\zl}$ outputs. As before, the switch input is binary, taking values $z\in\{\zs,\zl\}$ to determine the routing of Bob's particle. 

The correlations in routed Bell experiments are characterized by the conditional probabilities $p=\{p(a,b|x,y,z)\}_{a,b,x,y,z}$. Note that, in any given run, a measurement is performed at Bob's side in only one of the two measurement locations. Hence the conditional probabilities involve a single input $y$ and a single output $b$, with the switch value $z$ indicating the corresponding location.

As in traditional Bell experiments, no-signalling constraints between Alice and Bob are assumed to hold. Specifically, the input and output of Alice should not causally influence Bob's outcome, and vice versa. In particular, the switch input $z$ should not have any causal influence on Alice. Such conditions can be enforced in a relativist framework by appropriately configuring the spacetime setup. In device-independent applications, it is commonly assumed that devices are internally described as black boxes, but are unable to transmit arbitrary external information. Relativistic constraints are typically not necessary in such scenarios. However, caution must be exercised in a routed Bell experiment, particularly in adversarial settings, to ensure that information about the switch input $z$ cannot be obtained after the switch operation has taken place. For instance, it is important to prevent situations where an adversary monitoring the transmission lines could determine wether a quantum particle has taken the short-path or long path and manipulate Alice's particle based on this knowledge before it reaches her measurement device.  Such actions would violate the no-signalling condition.

\subsection{Correlations in routed Bell experiments}
We now define various types of quantum correlations that can be observed in routed Bell experiments. We begin by considering the most general case, where no restrictions are imposed on the long path device $B_\zl$.

\subsubsection{General quantum correlations}
For a general quantum strategy, the correlations in a routed Bell experiment can be expressed as follows
\begin{equation}
    \label{eq:qcorr}
    p(a,b|x,y,z) = \Tr \left[(\mathcal{I}\otimes {C}_z)(\rho_{AB})\, M_{a|x}\otimes M_{b|yz}\right]\,,
\end{equation}
where ${C}_z$ is the CPTP map describing the transmission of Bob's system on the short-path ($z=\zs$) or the long-path ($z=\zl$). The quantum channel describing the transmission of the quantum system on Alice's side is independent of all the input variables $x,y,z$ and can thus be absorbed in the definition of the state $\rho_{AB}$.

The adjoint of the channels ${C}_z$ map the POVM elements $M_{b|yz}$ to valid POVM elements ${C}_z^\dagger (M_{b|yz})=\tilde M_{b|yz}$. Consequently, the above correlations can also be expressed as
\begin{equation}
    \label{eq:qcorr2}
p(a,b|x,y,z)= \Tr \left[\rho_{AB}\, M_{a|x}\otimes {C}_z^\dagger (M_{b|yz})\right] = \Tr \left[\rho_{AB}\, M_{a|x}\otimes \tilde M_{b|yz}\right]\,.
\end{equation}
Thus general correlations in a routed Bell experiment coincide with those of a regular bipartite Bell experiment where Bob has $m_{B_\zs}+m_{B_\zl}$ inputs represented by the pairs $(y,z)\in\{(0,\zs),\ldots,(m_{B_\zs}-1,\zs),(0,\zl),\ldots,(m_{B_\zl}-1,\zl)\}$. This simply expresses that the combined effect of the channel $C_z$ and the subsequent measurement $M_{yz}$ represents an effective measurement $\tilde M_{yz}$. This is illustrated in Fig.~\ref{fig:Routed_Bell_probs}.

\begin{figure}[t]
    \centering
    \includegraphics{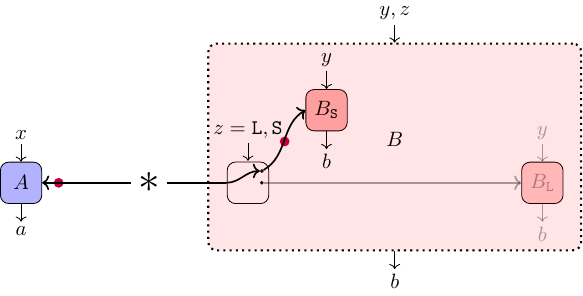}
    \caption{General correlations in a routed Bell experiment, $p(a,b|x,y,z)$ can be viewed as a regular bipartite Bell experiment where the combined effect of the switch and of the subsequent measurement (either $B_\zl$ or $B_\zs$) can be viewed as one effective measurement device for Bob, with input pairs $(y,z)\in\{(0,\zs),\ldots,(m_{B_\zs}-1,\zs),(0,\zl),\ldots,(m_{B_\zl}-1,\zl)\}$. \label{fig:Routed_Bell_probs}}
\end{figure}

We denote by $\Q$ the set of general quantum correlations \eqref{eq:qcorr} or \eqref{eq:qcorr2}. 

\subsubsection{Short-range quantum correlations} \label{sec:srq}
We now define the set of \emph{short-range quantum} (SRQ) correlations, denoted $\SRQ$, as the subset of the correlations \eqref{eq:qcorr} that can be obtained without any entanglement being distributed to $B_\zl$, i.e., as those where the channel $C_\zl$ in \eqref{eq:qcorr} is entanglement-breaking:
\begin{equation}
    \label{eq:srqcorr}
    p(a,b|x,y,z) = \Tr \left[(\mathcal{I}\otimes {C}_z)(\rho_{AB})\, M_{a|x}\otimes M_{b|yz}\right]\text{ with } C_{\zl} \text{ entanglement-breaking.}
\end{equation}

An entanglement-breaking channel $C_\zl$ can be understood as first performing a measurement described by POVM elements $N_\lambda$ on the input system, and then preparing one of the states $\{\rho_\lambda\}$ \cite{Pusey2015}:
\begin{equation}
    C_\zl(\rho) = \sum_\lambda \Tr\left[N_\lambda \rho\right]\,\rho_\lambda\,.
\end{equation}
The adjoint $C^\dagger_\zl$ maps the POVM elements $M_{b|y\zl}$ to
\begin{equation}\label{eq:jm}
\tilde M_{b|y\zl}=C^\dagger_\zl(M_{b|y\zl}) = \sum_\lambda p(b|y,\lambda) N_\lambda\,,  \quad \text{where } p(b|y,\lambda) = \Tr\left[\rho_\lambda\,M_{b|y\zl}\right]\,.
\end{equation}
This is equivalent to the statement that the measurements $\tilde M_{y\zl}$ defined by the operators $\{\tilde M_{b|y\zl}\}$ are jointly-measurable \cite{busch1986unsharp,lahti2003coexistence,heinosaari2016invitation}. That is, they can be reproduced by measuring a parent POVM $\{N_\lambda\}$, regardless of the input $y$, which yields a classical outcome $\lambda$. The final outcome $b$  is then generated according to the probability distribution $p(b|y,\lambda)$, which depends on both $y$ and $\lambda$.

As for the case of general quantum correlations, we can thus view SRQ correlations as bipartite Bell correlations with $m_{B_\zs}\times m_{B_\zl}$ inputs on Bob's side, but with the additional restriction that the subset of measurements corresponding to the input pairs $(y,\zl)$ are jointly-measurable, i.e., the correlations \eqref{eq:srqcorr} can also be expressed as
\begin{equation}
    \label{eq:srqcorr2}
p(a,b|x,y,z)=  \Tr \left[\rho_{AB}\, M_{a|x}\otimes \tilde M_{b|yz}\right]\text{ with the operators } \tilde M_{b|y\zl}\text{ jointly-measurable}\,.
\end{equation}

Using the definition of joint-measurability provided by \eqref{eq:jm}, we can explicitly express SRQ correlations as follows
\begin{equation}
    \label{eq:srqcorr3}
    p(a,b|x,y,z) =
        \begin{cases}
            \Tr \left[\rho_{AB}\, M_{a|x}\otimes \tilde M_{b|y\zs}\right] & \text{ if } z=\zs\\
            \sum_\lambda p(b|y,\lambda)\,\Tr \left[\rho_{AB}\, M_{a|x}\otimes N_{\lambda}\right] & \text{ if } z=\zl\,.
        \end{cases}
\end{equation}
Operationally, this can be understood as follows: if the switch selects the short path $(z=\zs)$, then the correlations are obtained by measuring a shared entangled state $\rho_{AB}$ as in a regular Bell experiment. If the switch selects the long path $(z=L)$, then a fixed measurement $\{N_\lambda\}$ is performed on Bob's system yielding a classical outcome $\lambda$. This classical outcome is then transmitted to $B_\zl$, which based on the input $y$ selects the output $b$ according to the distribution $p(b|y,\lambda)$. The example of Section~\ref{sec:example} clearly falls in this category. This aligns perfectly with our notion, formulated  in the Introduction, that SRQ correlations should not allow for the distribution of quantum information to $B_\zl$. Hence, the most general thing to do is to measure Bob's particle beyond the switch, and subsequently send a classical message to $B_\zl$.

The two equivalent definitions (\ref{eq:srqcorr}) and (\ref{eq:srqcorr2}) of SRQ correlations imply that they can be established without entanglement between $A$ and $B_\zl$, or, equivalently, when the device $B_\zl$ only performs compatible measurements. In contrast, long-range quantum (LRQ) correlations, defined as the set $\LRQ = \Q \backslash \SRQ$, certify both incompatible measurements at $A$ and $B_\zl$ and entanglement between them. This is in line with our discussion in Section \ref{CVPresources}.

\subsubsection{Fully quantum marginal correlations}
As mentioned in Section~\ref{sec:monogamy},  a subclass of general quantum correlations in routed Bell scenarios are those where the source prepares on Bob's side a pair of systems $B=(B_\zs, B_\zl)$ and the switch routes the first subsystem to the nearby device $B_\zs$ if $z=\zs$ and the second subsystem to $B_\zl$ if $z=\zl$. The resulting correlations are
\begin{equation}
    \label{eq:qqcorrelations}
    p(a,b|x,y,z) =
        \begin{cases}
             \Tr\left[\rho_{AB_\zs B_\zl}\, M_{a|x}\otimes M_{b|y,\zs}\otimes I\right] & \text{ if } z=\zs\\
            \Tr \left[\rho_{AB_\zs B_\zl}\, M_{a|x}\otimes I\otimes M_{b|y,\zl}\right] & \text{ if } z=\zl\,,
        \end{cases}
\end{equation}
and correspond to bipartite marginals of the tripartite $qqq$-correlations
\begin{equation}
    p(a,b_\zs,b_\zl|x,y_\zs,y_\zl) =
             \Tr\left[\rho_{AB_\zs B_\zl}\, M_{a|x}\otimes M_{b|y,\zs}\otimes M_{b|y,\zl}\right]\,.
\end{equation}
We refer to such correlations as $qq$-marginal correlations and denote the set of such correlations as $\QQ$.

\subsubsection{Quanutum-classical marginal correlations}
If we further restrict the state $\rho_{AB_\zs B_\zl}$ in the above correlations to be a $qqc$-state as in \eqref{eq:qqc}, i.e., \begin{equation}
    \label{eq:qqc2}
    \rho_{AB_\zs B_\zl} = \sum_{\lambda} p(\lambda)\, \rho_{AB_\zs}^\lambda \otimes |\lambda\rangle\langle \lambda|_{B_\zl}\,,
\end{equation}
then we get the class of correlations considered in \cite{Chaturvedi2022}, as pointed out in Section~\ref{sec:monogamy}. From now on, we will refer to such correlations as $qc$-marginal correlations and denote the corresponding set as $\QC$.
\subsubsection{Relations between the above correlations}
The following figure depicts the relation between the sets of correlations defined above.
\begin{equation}
   \includegraphics{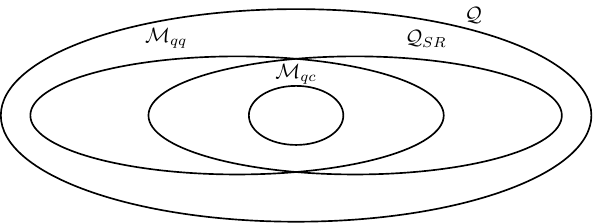}   
\end{equation}

The inclusions $\QC\subseteq \QQ \subseteq \Q$ and $\QC\subseteq \SRQ\subseteq \Q$ are obvious. They are actually strict, i.e., $\QC\subset\QQ \subset \Q$ and $\QC\subset \SRQ\subset \Q$. That $\QC \neq \QQ$ and $\SRQ \neq \Q$ is easy to see. The example of Section~\ref{sec:example} and the fact that $\QC$ and $\QQ$ satisfy the bound \eqref{eq:CHSHtradeoff}, as pointed out in Section~\ref{sec:monogamy}, imply $\QC\neq \SRQ$ and $\QQ \neq \Q$. 

It is further not difficult to see that $\QQ$ and $\SRQ$ are incomparable, meaning that there exist correlations that belong to one set but not the other, and vice versa. Indeed, the example of Section~\ref{sec:example} shows that $\SRQ \nsubseteq \QQ$, while $\QQ \nsubseteq \SRQ$ follows for instance from the fact that correlations in $\QQ$ can yield a \LP CHSH value $\chsh_\zl>2$, while correlations in $\SRQ$ cannot.

\subsection{Characterization through semidefinite programming hierarchies} \label{sec:SDPcharacterization}
All of the above sets can be outer-approximated through semidefinite programming (SDP) hierarchies for noncommutative polynomial optimization \cite{NPA2007,NPA08,PNA10}. This is because we can express the different types of correlations $p$ above as 
\begin{equation}
    p(a,b|x,y,z) = \Tr\left[\rho\, M_{a|x}\, M_{b|yz}\right]
\end{equation}
where the $d_A\times m_{A}$ measurement operators $M_{a|x}$ and the $d_{B_\zs}\times m_{B_\zs}+d_{B_\zl}\times m_{B_\zl}$ measurement operators $M_{b|y,z}$ are projectors and satisfy specific commutation relations depending on the type of correlations $p$:
\begin{align}
    &[M_{a|x},M_{b|yz}]= 0 && \text{ if } p\in\Q\,,&\\
    &[M_{a|x},M_{b|yz}]= 0, \, [M_{b|y\zl},M_{b'|y'\zl}]=0 && \text{ if } p\in\SRQ\,,& \label{eq:SRQ-sdp-characterization}\\
    &[M_{a|x},M_{b|yz}]= 0, \, [M_{b|y\zs},M_{b'|y'\zl}]=0 && \text{ if } p\in\QQ\,,&\\
    &[M_{a|x},M_{b|yz}]= 0, \, [M_{b|y\zs},M_{b'|y'\zl}]=0,\,[M_{b|y\zl},M_{b'|y'\zl}]=0  && \text{ if } p\in\QC\,.&
\end{align}
These representations naturally fit in the framework of noncommutative polynomial optimization. 

The above follow from the fact that in each of the representations \eqref{eq:qcorr2}, \eqref{eq:srqcorr2}, \eqref{eq:qqcorrelations}, the tensor product structure between different subsystems can be replaced by commutation relations. Specifically, the tensor product structure between $A$ and $B$, common to all types of correlations, can be replaced by the commutations relations $[M_{a|x},M_{b|yz}]= 0$. In the case of $\QQ$ and $\QC$, the additional product structure between $B_\zs$ and $B_\zl$ leads to the commutation relations $[M_{b|y\zs},M_{b'|y'\zl}]=0$. Furthermore, we can without loss of generality assume the measurements to be projective by dilating the local Hilbert spaces if necessary. The requirement of joint-measurability in \eqref{eq:srqcorr2} to define $\SRQ$ is then equivalent to the condition that the operators $M_{b|y\zl}$ commute with each other \cite{Uola2014}, i.e., to $[M_{b|y\zl},M_{b'|y'\zl}]=0$. Lastly, the condition that the subsystem $B_\zl$ is classical in the case of $\QC$ is equivalent to the condition that the operators $M_{b|y\zl}$ commute with all other operators, leading also to the additional commutation relations $[M_{b|y\zl},M_{b'|y'\zl}]=0$ in that case.
Note that, strictly speaking, the replacement of the tensor product structure by commutation relations represents a relaxation for infinite-dimensional quantum systems \cite{Slofstra2019,Zhengfen2022}. However, any commuting infinite-dimensional quantum correlations that are described by our current physical theories can be approximated arbitrarily well by tensor-product quantum correlations.

In the case of $\SRQ$, an alternative characterization is possible \cite{Pauwels2022}. Indeed, in \eqref{eq:srqcorr3}, we can assume without loss of generality that the outcome $\lambda$ of the measurement $\{N_\lambda\}$ specifies deterministically the outcome $b$ for each input $y$ of $B_\zl$. That is, we can assume that $\lambda = \boldsymbol{\beta} = (\beta_0,\ldots,\beta_{m_{B_\zl}-1})$, where $\beta_y$ is the output for input $y\in\{0,\ldots,m_{B_\zl}-1\}$. We can then write SRQ correlations as
\begin{equation} \label{eq:srqcorr4}
    p(a,b|x,y,z) =
        \begin{cases}
            \Tr \left[\rho_{AB}\, M_{a|x}\otimes M_{b|y\zs}\right] & \text{ if } z=\zs\\
            \sum_{\boldsymbol{\beta}} \delta_{\boldsymbol{\beta}_y,b} \,\Tr\left[\rho_{AB}\, M_{a|x}\otimes N_{\mathbf{\boldsymbol{\beta}}}\right]
& \text{ if } z=\zl\,.
        \end{cases}
\end{equation}
In other words, SRQ correlations are linear combinations of regular bipartite quantum correlations that involve on Bob's side $m_{B_\zs}$ measurements with $d_{B_\zs}$ outcomes, corresponding to the operators $M_{b|y\zs}$, and one additional measurement with $d_{B_\zl}^{m_{B_\zl}}$ outcomes, corresponding to the operators $N_{\boldsymbol{\beta}}$. They can thus be approximated from the outside, like regular bipartite quantum correlations, using the SDP hierarchies \cite{NPA2007,NPA08,PNA10}.

\section{\emph{SP}-enhancement of \LP tests} \label{sec:SPvsLP}
As the example of Section~\ref{sec:example} shows, the short-path CHSH value $\chsh_\zs$ does not constrain the long-path CHSH value $\chsh_\zl$, according to our definition of SRQ correlations. We show in this section, though, that there exist other \LP tests for which a \SP CHSH violation does weaken the conditions under which they witness long-range quantum correlations. We refer to this as a ``\emph{SP}-enhancement'' of the \LP test.

Throughout this section, we assume that all inputs and outputs of the measurement devices are binary. We denote the input values as $x,y\in\{0,1\}$, and for convenience the output values as $a,b\in\{\pm 1\}$. 
We define $A_x=\sum_{a=\pm 1} a\, M_{a|x}$ as the observable corresponding to the average value of $a$ for given input $x$ and we define similarly $B_{yz}$ based on the effective POVMs $\tilde M_{b|yz}$ that appear in definition \eqref{eq:srqcorr2}. In the case where the POVM elements $M_{a|x}$ and $\tilde M_{b|yz}$ are projective (which we can assume without loss of generality), the observables $A_x$, $B_{yz}$ are unitary and square to the identity. 

We use these observables to define the observed quantities $\langle A_x\rangle = \Trace{\left[\rho_{AB}\,A_x\right]} = \sum_{a=\pm 1} a\,p(a|x)$, $\langle B_{yz}\rangle = \Trace{\left[ \rho_{AB}\,B_{yz}\right]} = \sum_{b=\pm1} b\,p(b|y,z)$, and $\langle A_x B_{yz}\rangle = \Trace{\left[ \rho_{AB}\,A_xB_{yz}\right]} = \sum_{a,b=\pm 1}ab ~ p(a,b|x,y,z)$. 
The knowledge of $\langle A_x\rangle$, $\langle B_{yz}\rangle$, and $\langle A_xB_{yz}\rangle$ is equivalent to the knowledge of the full set of probabilities $p(a,b|x,y,z)$.

\subsection{A family of \LP tests}
The \LP tests we are going to consider in this section are based on the following Bell expressions 
\begin{align}
    \mathcal{J}^{\theta}_{\zl} = t_\theta \langle A_0 B_{0\zl} \rangle +  \langle A_0 B_{1\zl} \rangle +  \langle A_1 B_{0\zl} \rangle - t_\theta\langle A_1 B_{1\zl} \rangle\,,  \label{eq:SimpleJM-ineq-bipartite}
\end{align}
where $t_\theta=\tan\theta$ and $\theta$ is a parameter in $\left[0,\pi/4\right]$. The expressions $\mathcal{J}^{\theta}_{\zl}$ for other possible values of $\theta$ can be obtained by relabelling the input and/or outputs of the observables $A_x$ and $B_{y\zl}$. 

Seen as standard Bell expressions, $\mathcal{J}^{\theta}_{\zl}$ satisfy the following local and quantum bounds (see Appendix~\ref{app:SimpleJM-bound-proof}):
\begin{align}
    \mathcal{J}^{\theta}_{\zl} &\leq 2 \qquad\text{(local bound)} ~,  \label{eq:SimpleJM-localbound}\\
    \mathcal{J}^{\theta}_{\zl} &\leq 2/c_\theta \qquad\text{(quantum bound)}\,, \label{eq:SimpleJM-qmbound}
\end{align}
where $c_\theta = \cos\theta$.
The following states and observables
\begin{equation}\label{eq:strat}
    \rho_{AB} = \ketbra{\phi_+}\,, \quad A_0 = X\,,\,A_1=Z\,,\quad B_{0\zl} = s_\theta X + c_\theta Z \,,\, B_{1\zl} = c_\theta X - s_\theta Z\,,
\end{equation}
define the optimal quantum strategies reaching the maximal quantum bound $2/c_\theta$. Note that the measurements on Bob's side are anticommuting and the angle $\theta$ can be seen as a global rotation along the $Y$ axis on the Bloch sphere.

For $\theta=\pi/4$, the expression \eqref{eq:SimpleJM-ineq-bipartite} simply corresponds to the CHSH expression $ \mathcal{J}^{\pi/4}_{\zl}= \mathcal{C}_\zl$ and the local bound \eqref{eq:SimpleJM-localbound} and quantum bounds \eqref{eq:SimpleJM-qmbound} are the usual ones, i.e., $2$ and $2\sqrt{2}$. For $\theta=0$ the local and quantum bounds are both equal to 2, i.e., $\mathcal{J}^0_\zl$ cannot be used to detect any quantum nonlocality. Values of $\theta$ between $0$ and $\pi/4$ lead to a gap between the local and quantum bounds, i.e., the inequalities $\mathcal{J}^{\theta}_{\zl}\leq 2$ correspond to standard Bell inequalities.

In the next two subsections, we will see how the above family of \LP tests can be enhanced by a \SP CHSH test. 

\subsection{\emph{SP}-enhancement with a maximal short-path CHSH value}\label{subsec:SimpleJM}
Assume that in addition to the \LP expression $\mathcal{J}^\theta_\zl$, we observe a \SP CHSH value, which, as a starting point, we assume to be maximal, i.e., $\chsh_\zs=2\sqrt{2}$.
\begin{prop}\label{prop1}
    When $\mathcal{C}_\zs = 2\sqrt{2}$, SRQ correlations satisfy the following bound 
    \begin{equation}
        \mathcal{J}^{\theta}_{\zl} \leq  \sqrt{2}/{c_\theta} \qquad\text{(SRQ bound)} \label{eq:SimpleJM-srqbound}\\\,.
    \end{equation}
\end{prop}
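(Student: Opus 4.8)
The plan is to exploit the two structural facts established earlier: first, that $\mathcal{C}_\zs = 2\sqrt{2}$ forces a near-rigid self-testing structure on Alice's observables and, in particular (via the result quoted in Section~\ref{sec:CVPmodels}), pins down the degree to which Alice's outcomes can be correlated with anything other than $B_\zs$; second, that SRQ correlations on the long path factor through a \emph{jointly-measurable} family $\tilde M_{b|y\zl}$, equivalently through the representation \eqref{eq:srqcorr3}, where a single parent POVM $\{N_\lambda\}$ is measured on Bob's share of $\rho_{AB}$ and the outcome $b$ is then produced by classical post-processing $p(b|y,\lambda)$ depending on the input $y$. I would first substitute the representation \eqref{eq:srqcorr3} into the definition of $\mathcal{J}^\theta_\zl$, writing $\langle A_x B_{y\zl}\rangle = \sum_\lambda \big(\sum_b b\, p(b|y,\lambda)\big)\,\langle A_x\rangle_\lambda$, where $\langle A_x\rangle_\lambda = \Tr[\rho_{AB}\,M_{a|x}\otimes N_\lambda]$ up to normalization. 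Setting $g_y(\lambda) = \sum_b b\,p(b|y,\lambda) \in [-1,1]$ and $q(\lambda) = \Tr[\rho_{AB}\,\mathbb{I}\otimes N_\lambda]$, this turns $\mathcal{J}^\theta_\zl$ into $\sum_\lambda q(\lambda)\big[t_\theta g_0(\lambda)\langle A_0\rangle_\lambda + g_1(\lambda)\langle A_0\rangle_\lambda + g_0(\lambda)\langle A_1\rangle_\lambda - t_\theta g_1(\lambda)\langle A_1\rangle_\lambda\big]$.

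Next I would bound the bracket for each fixed $\lambda$. Grouping terms, the bracket equals $g_0(\lambda)\big(t_\theta\langle A_0\rangle_\lambda + \langle A_1\rangle_\lambda\big) + g_1(\lambda)\big(\langle A_0\rangle_\lambda - t_\theta\langle A_1\rangle_\lambda\big)$; since $|g_y(\lambda)|\le 1$, this is at most $\big|t_\theta\langle A_0\rangle_\lambda + \langle A_1\rangle_\lambda\big| + \big|\langle A_0\rangle_\lambda - t_\theta\langle A_1\rangle_\lambda\big|$. Now the key input: from $\mathcal{C}_\zs = 2\sqrt{2}$ and the monogamy/self-testing bound $|\langle A_x\rangle_\lambda| \le 1/2 + \sqrt{8-\mathcal{C}_\zs^2}/4$ quoted in Section~\ref{sec:CVPmodels} (applied here per $\lambda$, i.e. to the conditional expectations, which is exactly the quantity that bound controls), at the maximal value $\mathcal{C}_\zs = 2\sqrt{2}$ one gets $\langle A_x\rangle_\lambda = 0$ — Alice's marginals conditioned on $\lambda$ are exactly unbiased. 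But that alone would kill $\mathcal{J}^\theta_\zl$ entirely, which overshoots the claimed bound; so the honest argument must instead keep the full two-qubit self-testing picture: $\mathcal{C}_\zs = 2\sqrt{2}$ certifies that $A_0, A_1$ act as anticommuting Pauli observables on a qubit maximally entangled with $B_\zs$, while $N_\lambda$ acts on the \emph{same} qubit register (Bob's share is a single particle routed one way or the other). So the vector $(\langle A_0\rangle_\lambda, \langle A_1\rangle_\lambda)/q(\lambda)$ is the Bloch vector (projected onto the $X$–$Z$ plane) of the post-measurement state Alice holds after $N_\lambda$ is obtained, hence has Euclidean norm $\le 1$.

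That geometric constraint is what makes the bound come out right. With $v=(v_0,v_1)$ the (sub-normalized) Bloch vector so that $\langle A_x\rangle_\lambda = q(\lambda)v_x$ with $v_0^2+v_1^2\le 1$, the bracket is $q(\lambda)\big[\,|t_\theta v_0 + v_1| + |v_0 - t_\theta v_1|\,\big]$, and by Cauchy–Schwarz $|t_\theta v_0+v_1|^2 + |v_0-t_\theta v_1|^2 = (1+t_\theta^2)(v_0^2+v_1^2) \le 1+t_\theta^2 = 1/c_\theta^2$, so the sum of the two absolute values is at most $\sqrt{2}\cdot(1/c_\theta) = \sqrt{2}/c_\theta$. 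Summing over $\lambda$ with $\sum_\lambda q(\lambda) = 1$ gives $\mathcal{J}^\theta_\zl \le \sqrt{2}/c_\theta$, as claimed. The main obstacle — and the step I would write most carefully — is the justification that the relevant Bloch vector has norm at most one: this needs the self-testing conclusion that $A_0$ and $A_1$ really are anticommuting unitaries on a qubit (so that $(\langle A_0\rangle_\psi, \langle A_1\rangle_\psi)$ is genuinely a planar Bloch vector for \emph{any} state $\psi$ of that qubit), together with the fact that $N_\lambda$ acts on Alice's qubit only through the shared maximally entangled state, i.e. steers Alice to a well-defined qubit ensemble $\{q(\lambda), \psi_\lambda\}$. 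I would handle the approximate/robust version (needed because exact self-testing is an idealization) either by citing the robustness of the relevant self-test or, more cleanly for the stated exact hypothesis $\mathcal{C}_\zs=2\sqrt 2$, by invoking the exact self-testing statement and deferring the robust $\mathcal{C}_\zs<2\sqrt2$ case to the following subsection.
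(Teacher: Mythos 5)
Your proof is correct, and it reaches the bound by a route that differs in its second half from the paper's. Both arguments start from the same self-testing input: $\mathcal{C}_\zs=2\sqrt 2$ fixes $\rho_{AB}=\ketbra{\phi_+}$ and $A_0=X$, $A_1=Z$ (up to local isometries and junk), and both ultimately exploit the parent-POVM structure of joint measurability. The paper, however, uses self-testing to read the \LP correlators as Pauli expectations $\tfrac12\Tr[PB_{y\zl}]$ of Bob's effective observables, proves the single joint-measurability inequality $\tfrac12[\Tr(XB_{1\zl})+\Tr(ZB_{0\zl})]\le\sqrt2$ via the operator bound $(-1)^{\beta_0}Z+(-1)^{\beta_1}X\preceq\sqrt2\,\mathbf{1}$ on the parent POVM, and then obtains general $\theta$ by a basis rotation followed by division by $c_\theta$. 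You instead work on Alice's side with the ensemble steered by $\{N_\lambda\}$, bound the classical post-processing by $|g_y(\lambda)|\le 1$, and close with Cauchy--Schwarz using $v_0^2+v_1^2\le 1$ for the conditional Bloch vector, which handles all $\theta$ uniformly and with general (not necessarily deterministic) post-processing. The $\sqrt2$ comes from the same anticommutation fact in both cases, but your version avoids both the reduction to the $\theta=0$ inequality and the rotation trick, at the price of having to justify carefully (as you note) that the steered conditional states live in a genuine qubit Bloch ball after the self-testing isometry.

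One remark on the detour in your second paragraph: the claim that the monogamy bound of [Pironio2010] applies ``per $\lambda$'' to the conditional expectations $\langle A_x\rangle_\lambda$ and would force them to vanish is not merely an overshoot --- it is false in the SRQ setting. That bound concerns classical variables correlated with the source \emph{before} any measurement, whereas here $\lambda$ is the outcome of a measurement $N_\lambda$ performed on Bob's particle only on the long path, which steers Alice's state; the counterexample of Section~\ref{sec:example} has $\langle Z\rangle_\lambda=\pm1$ while $\mathcal{C}_\zs=2\sqrt2$. You correctly abandon this step, and your final argument does not depend on it, so this does not affect the validity of the proof; but the reason to discard it is that its premise fails, not that its conclusion is too strong.
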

\begin{proof}
    To prove this bound, we need to maximize the \LP expression $\mathcal{J}^\theta_\zl$ for SRQ correlations 
    of the form \eqref{eq:srqcorr2}. The fact that $\mathcal{C}_\zs = 2\sqrt{2}$ fixes the shared state $\rho_{AB}$ and the observables $A_x$ and $B_{y\zs}$. Indeed by self-testing \cite{Cirelson1980, Mayers2004, Supic2020}, they must be, up to local isometries, equivalent to the two-qubit state $|\phi_+\rangle = (\ket{00} + \ket{11})/\sqrt{2}$ and the qubit observables $A_0 = X$, $A_1=Z$, $B_{0\zs} = (Z+X)/\sqrt{2}$, $B_{1\zs} = (X-Z)/\sqrt{2}$. Consequently, the \LP correlators are equal to
    \begin{align}\label{eq:pmcorrelators}
        \langle A_0 B_{y\zl}\rangle &= \langle \phi_+|X \otimes B_{y\zl}|\phi_+\rangle =\frac{1}{2}\Trace{\left[XB_{y\zl}\right]}\,,\\
        \langle A_1 B_{y\zl}\rangle &= \langle \phi_+|Z \otimes B_{y\zl}|\phi_+\rangle =\frac{1}{2}\Trace{\left[ZB_{y\zl}\right]}\,.\nonumber
    \end{align}
    We can thus write the \LP expression $\mathcal{J}^\theta_\zl$ as
    \begin{align}
        \mathcal{J}^\theta_\zl = \frac{1}{2}\left[t_\theta \Tr(X B_{0\zl}) + \Tr(X B_{1\zl}) + \Tr(Z B_{0\zl}) -t_\theta \Tr(Z B_{1\zl})\right] \, . \label{eq:Jtheta} 
    \end{align}
    We now need to bound the above expressions for observables $B_{0\zl}$ and $B_{1\zl}$ that are jointly-measurable. In the case $\theta=0$, it is shown in \cite{Pusey2015} that
    \begin{equation}
        \frac{1}{2} \left[ \Trace(XB_{1\zl}) + \Trace(ZB_{0\zl}) \right] \le \sqrt{2} \,, \label{eq:SimpleJM-ineq}
        \end{equation}
    whenever $B_{0\zl}$ and $B_{1\zl}$ are jointly-measurable. We rederive this result in Appendix~\ref{app:SimpleJM-proof} for completeness. The above joint-measurability inequality holds for any pair of observables $B_{0\zl}$ and $B_{1\zl}$ and is independent of the basis in which we write it. Making the change of basis $Z\rightarrow s_\theta X + c_\theta Z$, $X\rightarrow c_\theta X - s_\theta Z$, where $c_\theta = \cos{\theta}$ and $s_\theta = \sin{\theta}$, we can rewrite the above inequality as
    \begin{align}
        \frac{1}{2} \left[ s_\theta\Tr(X B_{0\zl}) + c_\theta\Tr(X B_{1\zl}) + c_\theta\Tr(ZB_{0\zl}) - s_\theta \Tr(Z B_{1\zl}) \right]\leq \sqrt{2}\,. \label{eq:SimpleJM-ineq-2}
    \end{align}
    Dividing by $c_\theta$, the left-hand side becomes equal to \eqref{eq:Jtheta} and the right-hand side to $\sqrt{2}/c_\theta$, proving \eqref{eq:SimpleJM-srqbound}.
\end{proof}

The intuition behind the above Proposition and its proof is that when Alice and Bob observe the maximal value $\chsh_\zs=2\sqrt{2}$, their \LP correlators are, by self-testing, directly associated to the Pauli expectations $\Tr{\left[PB_{y\zl}\right]}$, where $P={I},X,Z$, of the observables $B_{y\zl}$. This means that they can perform tomography, restricted to the $Z\shortminus X$ plane, of these observables.
If from the knowledge of these Pauli expectations it is possible to conclude that the two observables $B_{0\zl}$ and $B_{1\zl}$ are not jointly measurable, then, following the definition of SRQ correlations in Subsection~\ref{sec:srq}, the resulting correlations are not SRQ. Inequalities that rule out SRQ correlations are thus directly related to inequalities that rule out joint-measurability, such as \eqref{eq:SimpleJM-ineq} and \eqref{eq:SimpleJM-ineq-2}. In fact, it is the existence of these inequalities for joint-measurability that led us to introduce the \LP expressions \eqref{eq:SimpleJM-ineq-bipartite}.

The SRQ bound \eqref{eq:SimpleJM-srqbound}, the local bound \eqref{eq:SimpleJM-localbound}, and the general quantum bound \eqref{eq:SimpleJM-qmbound} of 
$\mathcal{J}^{\theta}_{\zl}$ are plotted in Fig.~\ref{fig:bounds_simple_JM} as a function $\theta$. 
For any value $0\leq\theta<\pi/4$, the SRQ bound is strictly smaller than the local bound, i.e., the \SP CHSH test weakens the condition to witness long-range quantum correlations based on the $\mathcal{J}^{\theta}_{\zl}$ test. When $\theta=\pi/4$, the expression $\mathcal{J}_{\zl}^{\pi/4}$ is simply the CHSH expression $\chsh_\zl$. In this case, the SRQ bound and local bound coincide, i.e., the \SP CHSH test does not weaken the condition to witness long-range quantum correlations using the $\chsh_\zl$ test, as already noted in Subsection~\ref{sec:example}.
\begin{figure}[t!]
    \centering
    \includegraphics{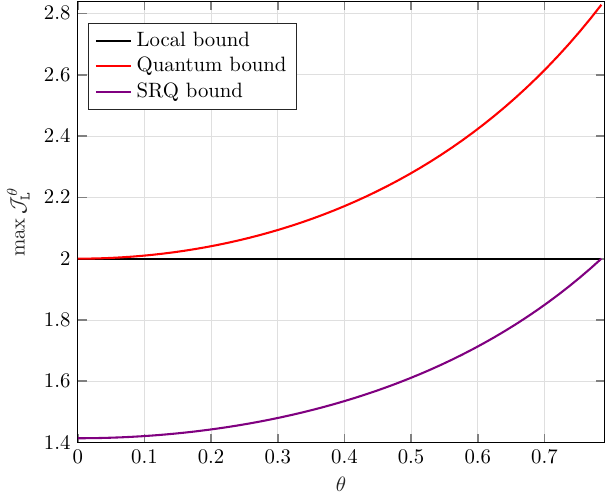}
    \caption{The SRQ bound (assuming $\chsh_\zs=2\sqrt{2}$), the local bound, and the general quantum bound of 
    $\mathcal{J}^{\theta}_{\zl}$ as a function of $\theta$.}
    \label{fig:bounds_simple_JM}
\end{figure}

Note that by adding the short-path measurements 
\begin{equation}\label{eq:stratSP}
    B_{0\zs} = \frac{X+Z}{\sqrt{2}}, \quad B_{1\zs} = \frac{X-Z}{\sqrt{2}}\,,
\end{equation}
to the strategy \eqref{eq:strat}, both $\mathcal{C}_\zs$ and $\mathcal{J}^\theta_{\zl}$ can simultaneously reach their maximal quantum values of $2\sqrt{2}$ and $2/c_\theta$, respectively.

Interestingly, in the case $\theta=0$, the \emph{SP}-enhancement of the \LP test based on $\mathcal{J}_{\zl}\equiv \mathcal{J}^0_{\zl}$ is maximal. Indeed, in this case the standard local bound \eqref{eq:SimpleJM-localbound} is
\begin{equation}
    \mathcal{J}_\zl = \langle A_0 B_{1\zl}\rangle + \langle A_1 B_{0\zl}\rangle \leq 2\,.
\end{equation}
But this coincides with the quantum bound \eqref{eq:SimpleJM-qmbound} and thus the above inequality cannot be violated by quantum theory. That is, it does not represent a proper Bell inequality and cannot be used to witness long-range quantum nonlocality. However, when supplemented with a maximal \SP test, the local bound has to be replaced with the more strict SRQ bound
\begin{equation}
    \mathcal{J}_\zl = \langle A_0 B_{1\zl}\rangle + \langle A_1 B_{0\zl}\rangle  \leq \sqrt{2} < 2\,.
\end{equation}
This bound is now smaller than the quantum bound, hence it does represent a proper witness of long-range quantum nonlocality. The strategy that reaches the maximal quantum values $\mathcal{C}_\zs=2\sqrt{2}$ and $\mathcal{J}_\zl=2$ is defined by eqs.~\eqref{eq:strat} and \eqref{eq:stratSP}. It is obtained by measuring a maximally two-qubit state $|\phi_+\rangle$ state with $Z,X$ observables for $A$ and the remote devices $B_\zl$. These correlations arise in many contexts in quantum information -- e.g. in entanglement-based BB84 \cite{Bennet1984,Bennet1992,Shor2000}, or for witnessing the entanglement of the $|\phi_+\rangle$ state \cite{Yu2005,Guhne2009} -- but they cannot certify any quantum property in a device-independent way since they can be reproduced by purely classical strategies. The result above shows that if we append to this BB84 scenario intermediate $B_\zs$ measurements in the $(X\pm Z)/\sqrt{2}$ basis, then the $\langle {A}{B_\zl}\rangle$ correlations certify long-range quantumness in a device-independent setting.

To illustrate the interest of the \emph{SP}-enhanced \LP tests represented by the family of inequalities \eqref{eq:SimpleJM-srqbound}, consider the optimal correlations defined by eqs.~\eqref{eq:strat} and \eqref{eq:stratSP}, but assume that the quantum particle going to the remote measurement device $B_\zl$ is characterized by a visibility $\nu$, i.e., with probability $1-\nu$ it undergoes completely depolarizing noise. Then the corresponding long-range quantum correlations are 
\begin{equation}\label{eq:Simplecorr}
    \begin{aligned}
        & \langle A_0 B_{1\zl}\rangle = \langle A_1 B_{0\zl}\rangle = \nu c_\theta,\quad \langle A_0 B_{0\zl}\rangle = -\langle A_1 B_{1\zl}\rangle = \nu s_\theta\,,
    \end{aligned}
\end{equation}

Without the switch, the condition to witness non-locality using the $\mathcal{J}^\theta_{\zl}$ expression is $\mathcal{J}^\theta_{\zl}>2$, i.e., for the correlations \eqref{eq:Simplecorr} $2\nu/c_\theta > 2$. That is, a visibility $\nu>c_\theta$ is required. However, we know that in the two-input, two-output regular Bell scenario, a necessary and sufficient condition to witness non-locality is the violation $\chsh>2$ of the CHSH inequality. In the case of the correlations \eqref{eq:Simplecorr}, this condition is equivalent to $2\nu(c_\theta+s_\theta)>2$, or $\nu>1/(c_\theta+s_\theta)$. The required visibility threshold thus goes from $\nu>1$ when $\theta=0$ (i.e., the correlations do not exhibit any nonlocality) to $\nu>1/\sqrt{2}\simeq 0.71$ when $\theta=\pi/4$ (corresponding to the case of maximally robust Tsirelson's correlations). 

In a routed Bell experiment, and since the \SP CHSH test is maximally violated, we can witness long-range quantum correlations whenever $\mathcal{J}^\theta_{\zl}>\sqrt{2}/c_\theta$ for the \LP test, instead of the more constraining criterion $\mathcal{J}^\theta>2$. This happens when $2\nu/c_\theta>\sqrt{2}/c_\theta$, i.e., $\nu>1/\sqrt{2}\simeq 0.71$ for all $\theta$. Thus all correlations defined by the family of strategies \eqref{eq:strat} and \eqref{eq:stratSP} have the same noise tolerance in a routed Bell experiment, which is moreover equal to the noise tolerance of the maximally robust Tsirelson's correlations corresponding to $\theta=\pi/4$ in a regular Bell scenario.

\subsection{\emph{SP}-enhancement with a non-maximal short-path CHSH value}
Obviously, the assumption that $\chsh_\zs=2\sqrt{2}$ is too strong in any realistic experimental setting. We now derive 
new bounds on the value of $\mathcal{J}^{\theta}_{\zl}$ achievable by short-range quantum correlations when the CHSH value in the short-path is not maximal.    
We first consider the case of $\theta=0$, for which we analytically derive the SRQ bound for $\mathcal{J}_\zl \equiv \mathcal{J}_\zl^0$ as a function of $\chsh_\zs$.

\begin{prop}\label{thm:tradeoff-SRQvsLRQ-CHSHvsSimpleJ}
    For any short-range quantum (SRQ) correlations, the following inequality holds
    \begin{align}
        \mathcal{J}_\zl\leq \frac{\mathcal{C}_\zs + \sqrt{8-\mathcal{C}_\zs^2}}{2} \qquad\text{when }\mathcal{C}_\zs \in \left[2,2\sqrt{2}\right]\,. \label{eq:tradeoff-SRQvsLRQ-CHSHvsSimpleJ}
    \end{align} 
\end{prop}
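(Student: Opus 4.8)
The goal is to bound $\mathcal{J}_\zl = \langle A_0 B_{1\zl}\rangle + \langle A_1 B_{0\zl}\rangle$ over all SRQ correlations of the form \eqref{eq:srqcorr2}, subject to the additional constraint $\mathcal{C}_\zs \in [2, 2\sqrt{2}]$. The plan is to follow the same logic as in the proof of Proposition~\ref{prop1}, but replacing the exact self-testing statement (which was available only at $\mathcal{C}_\zs = 2\sqrt{2}$) with the robust self-testing / correlation-monogamy bound already invoked in Section~\ref{sec:CVPmodels}: for any quantum state and observables achieving $\mathcal{C}_\zs$, and for any additional POVM element $N_\lambda$ on Bob's side, one has $|p(a|x,\lambda)| \le 1/2 + \sqrt{8-\mathcal{C}_\zs^2}/4$, or equivalently, writing things in terms of correlators, the expectation $\langle A_x \rangle_\lambda$ conditioned on the classical flag $\lambda$ produced by the parent POVM $\{N_\lambda\}$ is bounded in absolute value by $\sqrt{8-\mathcal{C}_\zs^2}/2$.

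First I would write the SRQ correlations in the explicit form \eqref{eq:srqcorr3}: when $z = \zl$, $\langle A_x B_{y\zl}\rangle = \sum_\lambda q(\lambda)\, \langle A_x\rangle_\lambda\, \langle b\rangle_{y,\lambda}$, where $q(\lambda) = \Tr[\rho_{AB} (I\otimes N_\lambda)]$, $\langle A_x\rangle_\lambda = \Tr[\rho_{AB}(M_{a|x}\otimes N_\lambda)] \cdot (\text{sign sum}) / q(\lambda)$ is Alice's conditional correlator, and $\langle b\rangle_{y,\lambda} = \sum_b b\, p(b|y,\lambda) \in [-1,1]$ is the conditional bias of Bob's post-processed output. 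Then $\mathcal{J}_\zl = \sum_\lambda q(\lambda)\big(\langle A_0\rangle_\lambda \langle b\rangle_{1,\lambda} + \langle A_1\rangle_\lambda \langle b\rangle_{0,\lambda}\big)$. Since $|\langle b\rangle_{y,\lambda}|\le 1$, each bracket is at most $|\langle A_0\rangle_\lambda| + |\langle A_1\rangle_\lambda|$, so $\mathcal{J}_\zl \le \sum_\lambda q(\lambda)\big(|\langle A_0\rangle_\lambda| + |\langle A_1\rangle_\lambda|\big)$. Now the key input: I need a bound on $|\langle A_0\rangle_\lambda| + |\langle A_1\rangle_\lambda|$, not just on each term separately, because using $\sqrt{8-\mathcal{C}_\zs^2}/2$ twice would give $\sqrt{8-\mathcal{C}_\zs^2}$, which is weaker than the claimed right-hand side. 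The sharper statement one should extract from \cite{Pironio2010} (or re-derive) is a joint constraint of the form $|\langle A_0\rangle_\lambda|^2 + |\langle A_1\rangle_\lambda|^2 \le$ something, or more precisely a bound coupling the pair $(\langle A_0\rangle_\lambda, \langle A_1\rangle_\lambda)$ for a fixed $\lambda$ while $\mathcal{C}_\zs$ is fixed globally. Using Jordan's lemma one reduces to qubits: in each two-dimensional block Alice's observables are $A_0 = \cos\mu\, Z + \sin\mu\, X$, $A_1 = \cos\mu\, Z - \sin\mu\, X$, the entangled state is (WLOG) a two-qubit pure state, and $N_\lambda$ restricted to the block is some effect; one then optimizes, and I expect the extremal configuration to give exactly $|\langle A_0\rangle_\lambda| + |\langle A_1\rangle_\lambda| \le \max\{\mathcal{C}_\zs, \sqrt{8-\mathcal{C}_\zs^2}\}/\sqrt 2 \cdot(\ldots)$ — in any case, after the optimization and the convexity/Jensen step $\sum_\lambda q(\lambda)(\cdots) \le \max_\lambda(\cdots)$, the bound should collapse to $(\mathcal{C}_\zs + \sqrt{8-\mathcal{C}_\zs^2})/2$.

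The cleanest route is probably to go back to observables rather than conditional correlators. Write $\mathcal{J}_\zl \le \frac12\sum_\lambda q(\lambda)\,\Tr\big[\rho_\lambda(|A_0 \text{-part}| + |A_1\text{-part}|)\big]$ is awkward; instead, note $B_{0\zl}$ and $B_{1\zl}$ jointly measurable means (Jordan + the qubit characterization of joint measurability of two $\pm1$ observables) that on each qubit block $\|B_{0\zl} + B_{1\zl}\|_{\mathrm{op}} + \|B_{0\zl} - B_{1\zl}\|_{\mathrm{op}} \le 2$, or equivalently the norms $(\|r_0\|, \|r_1\|)$ of their Bloch vectors satisfy $\|r_0+r_1\| + \|r_0 - r_1\| \le 2$. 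Simultaneously, the constraint $\mathcal{C}_\zs \le 2\sqrt{2}$ with the stated value pins down, via Jordan, the geometry of $\rho_{AB}$, $A_x$, $B_{y\zs}$ in each block, and in particular caps how much of Alice's two observables can be "seen" from Bob's side outside the entangled part. I would then set up the single-block optimization of $\mathcal{J}_\zl$ subject to (i) the joint-measurability inequality on $(B_{0\zl}, B_{1\zl})$ and (ii) the $\mathcal{C}_\zs$ constraint transferred to a bound on Alice's reduced-state/steering geometry, solve it (a small calculus problem, likely with a Lagrange multiplier or a clever trigonometric substitution $\mathcal{C}_\zs = 2\sqrt{2}\cos\gamma$), and finally combine blocks by convexity.

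\textbf{Main obstacle.} The delicate point is step two: getting the \emph{joint} bound on $(\langle A_0\rangle_\lambda, \langle A_1\rangle_\lambda)$ — or equivalently correctly transferring the global constraint $\mathcal{C}_\zs$ into a per-$\lambda$ (or per-block) restriction — rather than just bounding each correlator separately. The naive term-by-term bound overshoots to $\sqrt{8-\mathcal{C}_\zs^2}$; producing the tight $(\mathcal{C}_\zs + \sqrt{8-\mathcal{C}_\zs^2})/2$ requires exploiting that when $\lambda$ is strongly correlated with, say, $A_1$ (pushing $|\langle A_1\rangle_\lambda|$ up toward $1$), the entanglement that feeds the $\mathcal{C}_\zs$ violation must reside in the complementary direction, which in turn limits $|\langle A_0\rangle_\lambda|$ and the corresponding Bob correlator. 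Handling the convex combination over $\lambda$ without losing the constant — i.e., checking the bound is attained by a deterministic (single-$\lambda$) extremal point, so Jensen is not lossy — is the other place where care is needed, but I expect it to go through because the right-hand side of \eqref{eq:tradeoff-SRQvsLRQ-CHSHvsSimpleJ} is concave in $\mathcal{C}_\zs$ in a way compatible with the extremal structure.
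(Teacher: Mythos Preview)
Your primary approach has a genuine gap: the guessing-probability bound $|\langle A_x\rangle_\lambda|\le \sqrt{8-\mathcal{C}_\zs^2}/2$ that you import from Section~\ref{sec:CVPmodels} does \emph{not} hold for SRQ correlations. That bound was derived for CVP models, where $\lambda$ is a classical flag fixed at the source and hence lives in a register independent of (commuting with) $B_\zs$'s measurement operators. In the SRQ model \eqref{eq:srqcorr3}, the parent POVM $\{N_\lambda\}$ acts on the \emph{same} system that carries the $B_{y\zs}$ observables, and there is no reason for $N_\lambda$ to commute with $B_{y\zs}$. The example of Section~\ref{sec:example} is an explicit counterexample: take $\mathcal{C}_\zs=2\sqrt{2}$ and let $\{N_\lambda\}$ be the $Z$-basis measurement on Bob's qubit; then $\langle A_1\rangle_\lambda=\langle Z\rangle_{\rho_\lambda}=\pm 1$, so $|\langle A_1\rangle_\lambda|=1$, not $0$. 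Thus your chain $\mathcal{J}_\zl\le \sum_\lambda q(\lambda)(|\langle A_0\rangle_\lambda|+|\langle A_1\rangle_\lambda|)\le \sqrt{8-\mathcal{C}_\zs^2}$ would yield $\mathcal{J}_\zl\le 0$ at $\mathcal{C}_\zs=2\sqrt{2}$, contradicting the achievable SRQ value $\sqrt{2}$ from Proposition~\ref{prop1}. (This is also why you ended up puzzled about the direction of the ``overshoot''.) Your secondary route via Jordan's lemma plus the qubit joint-measurability criterion is a more sensible starting point, but you have not actually carried out the optimization, and the key difficulty --- coupling the $\mathcal{C}_\zs$ constraint on $(A_x,B_{y\zs})$ with the joint-measurability constraint on $(B_{0\zl},B_{1\zl})$ across Jordan blocks --- is precisely what is left undone.

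The paper's proof takes an entirely different route and avoids the decomposition over $\lambda$ altogether. It first observes that the region in the $(\mathcal{C}_\zs,\mathcal{J}_\zl)$ plane delimited by \eqref{eq:tradeoff-SRQvsLRQ-CHSHvsSimpleJ} is convex, so the curved bound is equivalent to the family of tangent linear inequalities $\sin u\,\mathcal{C}_\zs+(\cos u-\sin u)\,\mathcal{J}_\zl\le 2$ for $u\in[0,\pi/4]$. Each of these is then promoted to an operator inequality $\mathtt{I}_u=2\,\mathbf{1}-\sin u\,\mathtt{C}_\zs-(\cos u-\sin u)\,\mathtt{J}_\zl\ge 0$ and proved by exhibiting an explicit Sum-of-Squares decomposition in the observables $A_x,B_{y\zs},B_{y\zl}$, using only the SRQ commutation relations \eqref{eq:SRQ-sdp-characterization} (in particular $[B_{0\zl},B_{1\zl}]=0$) and that all observables square to the identity. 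No monogamy input or per-$\lambda$ analysis is used.
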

\begin{proof}
In the $(\mathcal{C}_\zs,\mathcal{J}_\zl)$ plane, the region delimited by \eqref{eq:tradeoff-SRQvsLRQ-CHSHvsSimpleJ} is convex, hence it is equivalent to a family of linear bounds given by the tangent to the curve $\mathcal{J}_\zl= \frac{\mathcal{C}_\zs + \sqrt{8-\mathcal{C}_\zs^2}}{2}$. These linear bounds are
\begin{align}
   \sin u ~\mathcal{C}_\zs + (\cos u -\sin u)\mathcal{J}_\zl \leq 2 ~,~~ u \in \left[0,\frac{\pi}{4}\right]~. \label{eq:tradeoff-SRQvsLRQ-CHSHvsSimpleJ-2}
\end{align}
Instead of directly proving \eqref{eq:tradeoff-SRQvsLRQ-CHSHvsSimpleJ}, we will thus prove the bounds \eqref{eq:tradeoff-SRQvsLRQ-CHSHvsSimpleJ-2} for every $u\in[0,\pi/4]$. 

Let us define the Bell operators
\begin{align}
    \mathtt{C}_\zs & = A_0B_{0\zs} + A_0B_{1\zs} + A_1B_{0\zs} - A_1B_{1\zs}\\
    \mathtt{J}_\zl & = A_0B_{1\zl} + A_1B_{0\zl}\,.
\end{align}
Then since for any quantum state $\ket{\psi}$, $\mathcal{C}_\zs = \bra{\psi}\mathtt{C}_\zs\ket{\psi}$ and $\mathcal{J}_\zl = \bra{\psi}\mathtt{J}_\zl\ket{\psi}$,
proving \eqref{eq:tradeoff-SRQvsLRQ-CHSHvsSimpleJ-2} is equivalent to proving the operator semidefinite constraint
\begin{equation}
    \mathtt{I}_u \vcentcolon= 2 \mathbf{1} - \sin u ~\mathtt{C}_\zs - (\cos u -\sin u)\mathtt{J}_\zl \geq 0 \qquad\text{for all }u\in\left[0,\frac{\pi}{4}\right]\,. \label{eq:tradeoff-SRQvsLRQ-CHSHvsSimpleJ-3}
\end{equation}
We will prove this positivity constraint using a Sum of Squares (SoS) decomposition.

Define the hermitian operators:
\begin{subequations}
    \begin{align}
        \mathtt{C}_{ij \zs} &= \sum_{x,y} (-1)^{\delta_{x,i}\delta_{y,j}}A_xB_{y\zs} ~, \\
\mathtt{J}_{ij \zl} & =A_0B_{(0\oplus i)\zl} + (-1)^{j+1} A_1B_{(1\oplus i)\zl},
    \end{align}
\end{subequations}
for $i,j\in\{0,1\}$, where $\oplus$ denotes addition modulo 2. Thus $\mathtt{C}_\zs = \mathtt{C}_{11\zs}$ and $\mathtt{J}_\zl = \mathtt{J}_{11\zl}$.

Define further
\begin{subequations}\label{eq:SOS-basis}
    \begin{align} 
        P_1(u) &= -c_u(c_u-s_u)\mathtt{C}_{11 \zs} + (c_u^2-s_u^2)\mathtt{J}_{11 \zl} ~, \\
        P_2(u) &= -c_u(c_u+s_u)\mathtt{C}_{01 \zs} + (c_u^2-s_u^2)\mathtt{J}_{01 \zl} ~,\\
        P_3(u) &= s_u(c_u+s_u)\mathtt{C}_{10 \zs} + (c_u^2-s_u^2)\mathtt{J}_{10 \zl} ~,\\
        P_4(u) &= s_u(c_u-s_u)\mathtt{C}_{00 \zs} + (c_u^2-s_u^2)\mathtt{J}_{00 \zl}~ ,
    \end{align}
\end{subequations}
where $c_u=\cos{u}$ and $s_u=\sin{u}$. Then using that the operators $A_x$, $B_{yz}$ obey the commutation relations in \eqref{eq:SRQ-sdp-characterization} and that they square to the identity, it is easily verified that
\begin{align}
    \mathtt{I}_u = \frac{1}{4} \mathtt{I}_u^2 + \frac{s_uP_1(u)^2}{8 c_u(c_u^2 - s_u^2)} + \frac{s_uP_2(u)^2}{8 c_u(c_u + s_u)^2} + \frac{P_3(u)^2}{8 (c_u + s_u)^2} + \frac{P_4(u)^2}{8 (c_u^2 - s_u^2)} ,\quad \text{for }u \in [0,\frac{\pi}{4}[  ~. \label{eq:SOS-proof}
\end{align}
The right-hand side of the above expression is a SoS, hence is positive, which proves \eqref{eq:tradeoff-SRQvsLRQ-CHSHvsSimpleJ-3}. 
The SoS \eqref{eq:SOS-proof} is not valid at $u = {\pi}/{4}$. But for this point \eqref{eq:tradeoff-SRQvsLRQ-CHSHvsSimpleJ-2} is simply the well-known bound $\mathcal{C}_\zs\leq 2\sqrt{2}$ for CHSH. 

To find the above SoS decomposition, we followed the approach in \cite{Bamps2015}. We briefly describe this method in Appendix \ref{app:SOS-proof-methods}, where we also show that the bound \eqref{eq:tradeoff-SRQvsLRQ-CHSHvsSimpleJ-2} is tight.
\end{proof}

For other values of $\theta$, a SRQ bound on $\mathcal{J}^\theta_{\zl}$ as a function of $\chsh_\zs$ can be obtained numerically. Maximizing $\mathcal{J}_\zl^\theta$ for 
a fixed value of $\chsh_\zs$ over $\mathcal{Q}^n_{SR}$, the set of correlations corresponding to the $n^{\rm th}$-level NPA relaxation 
of the set of SRQ correlations \cite{NPA2007,NPA08,PNA10} (see Section \ref{sec:SDPcharacterization}), gives an upper bound on the maximum value of 
$\mathcal{J}_\zl^\theta$. Lower bounds on the maximum value of $\mathcal{J}_\zl^\theta$ can be obtained using 
a see-saw algorithm searching over explicit quantum strategies. 
However, fixing the value of $\chsh_\zs$ is not convenient when running the see-saw algorithm since we initialize on a random state and measurements. Instead, we 
maximize the expressions $\cos u ~\mathcal{C}_{\zs} + \sin u~ \mathcal{J}^{\theta}_{\zl}$ for different values of $u$, yielding the solution $k_u$. The envelope of the curves
$\cos u ~ \mathcal{C}_{\zs} + \sin u ~ \mathcal{J}^{\theta}_{\zl} = k_u$ yields lower-bounds on $\mathcal{J}^\theta_{\zl}$ for given $\chsh_\zs$.

A plot of such bounds for different values of $\theta$ is given in 
Fig. \ref{fig:tradeoff-SRQvsLRQ-CHSHvsSimpleJ}. The upper bounds were obtained using level `$AB$' of the NPA hierarchy.
The upper bounds match the see-saw lower bounds up to numerical precision.
\begin{figure}[t!]
    \centering
   \includegraphics{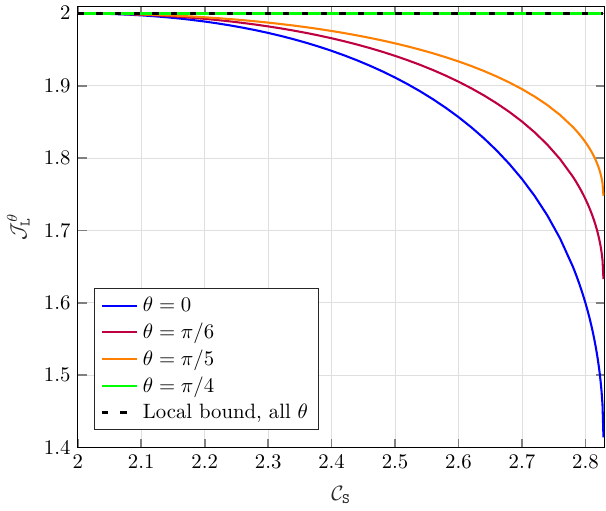}
    \caption{$\mathcal{J}_\zl^\theta$ vs $\mathcal{C}_\zs$ for SRQ strategies, obtained at level `$AB$' of the NPA hierarchy and saturated using a see-saw algorithm.
    \label{fig:tradeoff-SRQvsLRQ-CHSHvsSimpleJ}}
\end{figure}

As before, we illustrate the interest of such bounds on the noise robustness of the correlations induced by the quantum strategies \eqref{eq:strat} and \eqref{eq:stratSP}. We assume this time a depolarizing noise characterized by a visibility $\nu_\zs$ on the path from the source to the devices $A$ and $B_\zs$, which are located close to the source, and a visibility $\nu_\zl$ on the path to the remote device $B_\zl$, with $\nu_\zl\leq \nu_\zs$. The resulting correlations are
\begin{equation}\label{eq:Simpleecorr2}
    \begin{aligned}
        &\langle A_0 B_{0\zs}\rangle = \langle A_0 B_{1\zs}\rangle = \langle A_1 B_{0\zs}\rangle = - \langle A_1 B_{1\zs}\rangle  = \nu_\zs^2 \frac{\sqrt{2}}{2}\,,\\
        & \langle A_0 B_{1\zl}\rangle = \langle A_1 B_{0\zl}\rangle = \nu_\zs \nu_\zl c_\theta,\quad \langle A_0 B_{0\zl}\rangle = -\langle A_1 B_{1\zl}\rangle = \nu_\zs \nu_\zl s_\theta\,,
    \end{aligned}
\end{equation}
yielding $\chsh_\zs = \nu_\zs^2 2\sqrt{2}$, $\chsh_\zl = 2 \nu_\zs \nu_\zl (c_\theta + s_\theta)$, and $\mathcal{J}^\theta_\zl = 2  \nu_\zs \nu_\zl /c_\theta$. 

In a regular Bell scenario, the condition for the correlations \eqref{eq:Simpleecorr2} to exhibit non-locality is the violation of the CHSH inequality $\chsh_\zl>2$, i.e., $\nu_\zl>1/(\nu_\zs (c_\theta+s_\theta))$. The conditions obtained in a routed Bell scenario exploiting the above bounds on $\mathcal{J}^\theta_\zl$ induced by the $\chsh_\zs$ value are depicted in Fig.~\ref{fig:visibility_simpleJ} for different values of $\theta$.

These conditions are based on a specific family of \LP tests, i.e., \eqref{eq:SimpleJM-ineq-bipartite}, which are not necessarily optimal. We thus also directly determined the conditions on $\nu_\zs$ and $\nu_\zl$ required to demonstrate long-range quantum correlations using the full set of correlations \eqref{eq:Simpleecorr2} and the NPA relaxation of the SRQ set at a level intermediate between 3 and 4. These results are depicted in Fig.~\ref{fig:visibility_fix_corr}. We notice that for very high values of $\nu_\zs$, routed Bell experiments based on a \SP CHSH test tolerate lower values of $\nu_\zl$ than standard Bell experiments. The range of values of $\nu_\zs$ for which this \emph{SP}-enhancement is obtained increases as $\theta\rightarrow 0$. 

We note that for the BB84 correlations corresponding to $\theta=0$, using the LP inequality \eqref{eq:tradeoff-SRQvsLRQ-CHSHvsSimpleJ} or the full set of correlations give the same results, hence this inequality seems to be optimal for these correlations. The inequality \eqref{eq:tradeoff-SRQvsLRQ-CHSHvsSimpleJ} is violated by the noisy BB84 correlations whenever $2\nu_\zs \nu_\zl> \sqrt{2}\nu_\zs^2 +\sqrt{2(1-\nu_\zs^4)}$. The minimal value $\nu_\zs$ for which this arises, assuming further $\nu_\zs\geq \nu_\zl$ (since the \SP visibility is higher than the \LP visibility), is $\nu_\zs = 1/(4 - 2\sqrt{2})^{1/4}\simeq 0.9612$.

\begin{figure}[t!]
    \centering
    \begin{subfigure}[b]{0.49\textwidth}
        \includegraphics[width=\textwidth]{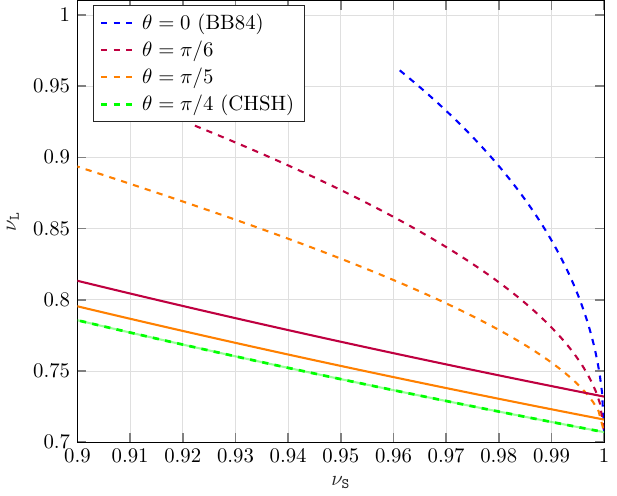}  
        \caption{}
        \label{fig:visibility_simpleJ}
    \end{subfigure}
    \hfill
    \begin{subfigure}[b]{0.49\textwidth}
        \includegraphics[width=\textwidth]{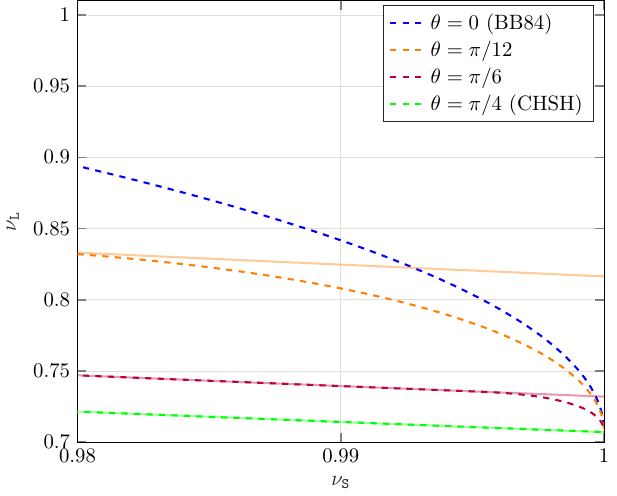}
        \caption{}
        \label{fig:visibility_fix_corr}
    \end{subfigure}
    \caption{Value of $\nu_\zl$ required to demonstrate long-range quantum correlations as a function of $\nu_\zs$ in a standard Bell scenario (solid lines) and routed Bell scenario (dashed lines) for the correlations \eqref{eq:Simpleecorr2}, determined (a) using the \LP test based on $\mathcal{J}^{\theta}_\zl$ and (b) using the full set of correlations \eqref{eq:Simpleecorr2} and the NPA relaxation of the SRQ set at level between 3 and 4.  }
    \label{fig:visibility} 
\end{figure}

\section{Detection efficiency thresholds in routed Bell experiments\label{sec:dec}}
We now consider the effect of detection inefficiencies in routed Bell experiments, the original motivation for considering such experiments.
We denote $\vec{\eta} = (\eta_A,\eta_{B_\zs},\eta_{B_\zl})$, the detector efficiencies of the three measurement devices, that is, the probabilities that these measurement devices `click' and provide a valid output. Disregarding the `no-click' events $\varnothing$ in the statistical analysis opens the so-called detection loophole, and is only valid assuming fair sampling \cite{Pearle1970,Clauser1974}.  There are two common ways to take no-click events into account: we can either count them as a separate, additional outcome $\varnothing$, or we can bin them with one of the other outcomes, say $\varnothing \mapsto +1$\footnote{We can also mix these two ways for different detectors.}. 

Given some correlations $p$ in the ideal situation where measurement devices always click, the non-ideal correlations $p^{\vec{\eta}}$ keeping the no-click outcomes $\varnothing$ as an additional outcome are
\begin{equation}\label{eq:pobs_nobinning}
    \begin{aligned}
        p^{\vec{\eta}}(a,b|x,y,z) &= \eta_{z}\eta_{A}p(a,b|x,y,z) \, , \\
       p^{\vec{\eta}}(a,\varnothing|x,y,z) &= \eta_{A}(1- \eta_{z})p(a|x) \, , \\
       p^{\vec{\eta}}(\varnothing,b|x,y,z) &=(1-\eta_{A})\eta_{z}p(b|y,z) \ , \\
        p^{\vec{\eta}}(\varnothing,\varnothing|x,y,z) &= (1-\eta_{A}) (1-\eta_{z}) \,,
    \end{aligned}
\end{equation}
where $p(a|x)$ and $p(b|y,z)$ are the marginal probabilities of respectively $A$ and $B_z$ in the target implementation.
If instead the no-click outcome is binned according to $\varnothing \mapsto +1$, the ideal correlations $p$ are modified to
\begin{multline} 
    p^{\vec{\eta}}(a,b|x,y,z) = \eta_{A}\eta_{z} p(a,b|x,y,z) + (1-\eta_{A})\eta_{z}p(b|y,z) \delta_{a,+1} \label{eq:pobs_binning} \\
   +  \eta_{A}(1- \eta_{z})  p(a|x) \delta_{b,+1} + (1-\eta_{A}) (1-\eta_{z}) \delta_{a,+1} \delta_{b,+1} \,
    ,
    \end{multline}
where $\delta$ is the Kronecker delta.

\subsection{Universal bounds on critical detection efficiency} \label{subsec:critical-efficiency-universal}

Lower bounds on critical detection efficiencies in regular Bell experiments 
were derived in \cite{Massar2003}. These bounds depend only on the number of measurement settings of Alice and Bob and are independent of the quantum implementation. We now generalize these bounds to i) the case where each party has a different detection efficiency (as \cite{Massar2003} considered only the symmetric case where Alice and Bob have the same detection efficiency) and ii) routed Bell experiments.
\begin{prop}\label{theo:universal-lb-switch}
    Consider a routed Bell experiment where Alice's measurement device has $m_A$ measurement settings and detection efficiency $\eta_A$, and Bob's remote device has $m_{B_\zl}$ measurement settings and detection efficiency $\eta_{B_\zl}$. Then, there exists an SRQ model when the following condition is satisfied,
    \begin{equation}
        \eta_{B_\zl} \le \frac{\eta_A(m_A - 1)}{\eta_A (m_A m_{B_\zl} - 1)- (m_{B_\zl} - 1)} \,, \label{eq:universal-bound-routed}
    \end{equation}
independently of the number of measurement settings $m_{B_\zs}$ and detection efficiency $\eta_{B_\zs}$ of Bob's close device $B_\zs$.
In particular, this bound also applies to standard Bell experiment, which can be seen as the particular case $m_{B_\zs}=0$.
\end{prop}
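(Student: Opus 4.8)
The plan is to explicitly construct a local SRQ model reproducing the binned correlations \eqref{eq:pobs_binning} whenever the inequality \eqref{eq:universal-bound-routed} holds, following the strategy of \cite{Massar2003}. Since SRQ correlations only constrain the long-path device $B_\zl$ (requiring its effective measurements to be jointly measurable, equivalently, obtained from a classical message $\lambda$ sent after an entanglement-breaking channel), the short-path part of the experiment can be taken to be the genuine quantum target correlations — this is why $m_{B_\zs}$ and $\eta_{B_\zs}$ play no role, and why the standard Bell case is recovered by simply deleting the short path ($m_{B_\zs}=0$). So the real content is: for the $z=\zl$ branch, exhibit a classical model in which $A$ shares classical data $\lambda$ with $B_\zl$, each outputs $\pm 1$ or $\varnothing$ (before binning), and the resulting binned statistics match $p^{\vec\eta}$.

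The key steps, in order, are as follows. First, I would fix an arbitrary target quantum behaviour $p(a,b|x,y,\zl)$ with the appropriate no-click structure, and write down the general form of a classical strategy: a shared variable $\lambda$, local response functions for $A$ (given $x$) and $B_\zl$ (given $y$) into $\{+1,-1,\varnothing\}$. The trick from \cite{Massar2003} is that, to "hide" behind a low detection efficiency, each party is allowed to declare $\varnothing$ deterministically on a subset of its inputs chosen as a function of $\lambda$: Alice answers honestly on at most one designated input and says $\varnothing$ on the other $m_A-1$, and similarly $B_\zl$ answers on one designated input and says $\varnothing$ on the other $m_{B_\zl}-1$. Second, I would choose the probability distribution over "who answers which input" so that the induced no-click rates at $A$ and $B_\zl$ match $\eta_A$ and $\eta_{B_\zl}$ respectively, while on the (rare) runs where both parties happen to answer we reproduce the correct conditional correlation; on the runs where a party is silent, binning $\varnothing\mapsto +1$ washes out the would-be answer, so consistency only requires matching the marginals, which is automatic. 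Third, I would carefully count: the probability that both answer is essentially $\Pr[A\text{ answers }x]\cdot\Pr[B_\zl\text{ answers }y]$, and the binned-model equations \eqref{eq:pobs_binning} force a relation between these answer-probabilities, $\eta_A$, $\eta_{B_\zl}$, $m_A$ and $m_{B_\zl}$; solving the resulting linear constraints for feasibility (all probabilities in $[0,1]$) yields exactly \eqref{eq:universal-bound-routed}. Finally, I would assemble the full routed model: on $z=\zs$ run the honest quantum strategy (on a separate quantum system, possible since SRQ imposes nothing there), on $z=\zl$ run the classical model just built, and check the no-signalling / consistency condition that Alice's marginal $p(a|x)$ is the same in both branches — which holds because in the classical branch Alice's marginal is, by construction, the quantum target marginal.

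The main obstacle I anticipate is the bookkeeping in the third step: getting the combinatorics of "Alice answers input $x$ with probability $q_A$, $B_\zl$ answers input $y$ with probability $q_B$" to simultaneously (i) produce the correct single-party no-click rates $1-\eta_A$, $1-\eta_{B_\zl}$ after the $m_A$-fold and $m_{B_\zl}$-fold split, (ii) leave enough weight on the "both answer" event to reconstruct the full correlation term $\eta_A\eta_{B_\zl}p(a,b|x,y,\zl)$ in \eqref{eq:pobs_binning}, and (iii) keep every probability nonnegative. The inequality \eqref{eq:universal-bound-routed} is precisely the threshold at which (iii) first fails, so the delicate part is showing the construction is tight — i.e. that the honest-answer probabilities can be pushed exactly to the boundary — rather than merely sufficient. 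I would handle this by parametrising the model with a single free parameter (the fraction of $\lambda$'s on which a party answers "honestly at all") and optimising it, exactly as in \cite{Massar2003}, then reading off the feasibility condition. A secondary subtlety is making sure the argument is genuinely device-independent in the SRQ sense: the classical $B_\zl$-branch must be expressible in the form \eqref{eq:srqcorr3}–\eqref{eq:srqcorr4}, i.e. the shared $\lambda$ is produced by a fixed POVM $\{N_\lambda\}$ on Bob's routed particle followed by a classical channel to $B_\zl$ — which it manifestly is, since a purely classical shared-randomness strategy is the most restrictive special case of an entanglement-breaking channel.
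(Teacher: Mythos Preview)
Your ``assemble the full routed model'' step contains a genuine gap. You cannot run the honest quantum strategy on $z=\zs$ and a separate classical Massar model on $z=\zl$: Alice does not see $z$, so her behaviour is one fixed strategy, not two branches glued together by a no-signalling check. Having the same marginal $p(a|x)$ in both branches is necessary but far from sufficient --- what must agree is Alice's entire joint behaviour with the source. In your Massar-style model Alice is classical (she outputs $\varnothing$, or $+1$ after binning, on all but one designated input determined by $\lambda$). If she follows that rule, the same classical Alice governs the short-path runs, and an arbitrary quantum target $p(a,b|x,y,\zs)$ (possibly CHSH-violating) cannot be reproduced: a classical Alice makes the $A$--$B_\zs$ correlations local. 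So ``SRQ imposes nothing there'' is wrong; SRQ imposes precisely that whatever Alice does when $z=\zl$ she also does when $z=\zs$.

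The paper repairs exactly this by replacing the ``$B_\zl$ answers on one designated input'' half of the Massar construction with a \emph{genuinely quantum} SRQ component: Alice performs her true quantum measurement on the entangled state, while just after the switch (when $z=\zl$) a uniformly random $y'$ is drawn, the ideal measurement $M_{y'\zl}$ is performed on Bob's particle, and the classical pair $(y',b')$ is forwarded to $B_\zl$, who outputs $b'$ if $y=y'$ and $\varnothing$ otherwise. This is SRQ because the long-path channel is entanglement-breaking, yet Alice is quantum in this component, so the short-path correlations come out exactly as the target. The full model is a convex mixture of three pieces --- (i) the purely local ``Alice answers on one input, Bob samples from the conditional'' strategy, (ii) the quantum SRQ strategy just described, and (iii) both $A$ and $B_\zl$ output $\varnothing$ --- with two mixing weights, applied to the non-binned correlations \eqref{eq:pobs_nobinning}. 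Matching to $p^{\vec\eta}$ and solving for feasibility of the weights yields \eqref{eq:universal-bound-routed}. Your bookkeeping plan and the single-free-parameter optimisation are in the right spirit for the standard Bell case $m_{B_\zs}=0$ (where (ii) collapses to a local strategy), but for the routed case component (ii) is the missing idea.
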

\begin{proof}
We prove the above result by constructing an explicit SRQ model for the correlations $p^{\vec{\eta}}$ given by \eqref{eq:pobs_nobinning} when \eqref{eq:universal-bound-routed} is satisfied. This model is based on a mixture of three different strategies with respective weights $st$, $s(1-t)$, and $1-s$ with $0\leq s,t\leq 1$. 

The first strategy is purely local (hence SRQ) and based on a hidden variable $(a',x')$ shared with all measurement devices and chosen with probability $p(a'|x')/m_A$ where $p(a'|x')$ is the marginal distribution of Alice's and Bob's ideal correlations $p$. If $x=x'$, Alice outputs $a'$, otherwise she outputs $\varnothing$. Bob's device (whether $B_\zs$ or $B_\zl$), on the other hand, outputs $b$ with probability $p(b|a',x',y) = p(a',b|x',y,z)/p(a'|x')$ when his input is $y$.

The second strategy is a SRQ strategy in which the source, Alice's measurement device $A$ and Bob's nearby measurement device $B_\zs$ are as in the quantum strategy yielding the ideal correlations $p$. On Bob's side, if the switch is set to $z=\zl$, then an input $y'$ is selected at random with probability $1/m_{{B_\zl}}$ and the corresponding ideal measurement (the one leading to the correlations $p$) is performed yielding the outcome $b'$. Both $y'$ and $b'$ are relayed to $B_{\zl}$ as a classical message.  If $B_\zl$'s actual input $y$ matches $y'$, it outputs $b=b'$, otherwise, it outputs $\varnothing$. 
Note that if $m_{B_\zs}=0$, i.e, there is no switch and no nearby measurement device, then this strategy is actually equivalent to a purely local one, since it involves a single quantum measurement on Bob's side.

In the third strategy, $A$ and $B_\zl$ always output $\varnothing$, while $B_\zs$ outputs $b$ with probability $p(b|y,\zs)$. 

The correlations obtained by this mixture of three strategies are
\begin{equation}\label{eq:srq-corr}
    \begin{aligned}
        p^{\text{srq}}(a,b|x,y,\zs) &= s \left(\frac{t}{m_A} + 1-t \right) p(a,b|x,y,\zs) ,\\
        p^{\text{srq}}(a,\varnothing|x,y,\zs) &= 0 ,\\
        p^{\text{srq}}(\varnothing,b|x,y,\zs) &= \left( s ~t ~ \frac{m_A - 1}{m_A}   + 1-s \right) p(b|y,\zs) ,\\
        p^{\text{srq}}(\varnothing,\varnothing|x,y,\zs) &= 0 ,\\
        p^{\text{srq}}(a,b|x,y,\zl) &= s \left(\frac{t}{m_A} + \frac{1-t}{m_{B_\zl}}  \right) p(a,b|x,y,\zl) ,\\
        p^{\text{srq}}(a,\varnothing|x,y,\zl) &= s (1-t) \left( \frac{m_{B_\zl} - 1}{m_{B_\zl}}  \right) p(a|x) ,\\
        p^{\text{srq}}(\varnothing,b|x,y,\zl) &= s~ t \left( \frac{m_A - 1}{m_A}  \right) p(b|y,\zl) ,\\
        p^{\text{srq}}(\varnothing,\varnothing|x,y,\zl) &= 1-s.
    \end{aligned}
\end{equation}
These correlations match the correlations $p^{\vec{\eta}}$ given in \eqref{eq:pobs_nobinning} when
$\eta_{B_\zs}=1$ and $\eta_{B_\zl}$ is equal to the right hand side of \eqref{eq:universal-bound-routed}.
It is of course possible in the above model to locally lower the detection efficiencies of $B_\zs$ and $B_\zl$ by instructing part of the time the measurement device to output $\varnothing$, thereby achieving arbitrary values of $\eta_{B_\zs}$ and any $\eta_{B_\zl}$ satisfying \eqref{eq:universal-bound-routed}.
\end{proof}

The above bound places fundamental limits on the distance at which nonlocal correlations can be observed, both for standard and routed Bell experiments.
In particular, the right-hand side of \eqref{eq:universal-bound-routed} is always greater than $1/m_{B_\zl}$, implying that the detection
efficiency of the remote measurement device cannot be lower than $1/m_{B_\zl}$, even when all other detectors are perfect. This lower-bound of $1/m_{B_\zl}$ can also be seen as a consequence of the universal bounds derived in \cite{Masini2023} for general (semi-)device-independent protocols.
In the case where Bob's remote measurement device is doing two measurements, as for the explicit examples considered in this paper and in \cite{Chaturvedi2022},  the detection efficiency of the remote measurement device cannot be lower than $1/2$. 

Though the bound \eqref{eq:universal-bound-routed} applies both to standard and routed Bell experiments, this does not mean that routed Bell experiment cannot be more robust to photon losses than standard Bell experiment. First, the above bound is universal and applies to any quantum strategy. For specific strategies, based on specific states and quantum measurements, routed Bell experiment may provide an advantage over standard Bell experiments, as we will demonstrate in the next subsection.

Second, a more stringent bound than \eqref{eq:universal-bound-routed} might actually apply for standard Bell experiments, leaving the possibility of a gap between the optimal (i.e., optimized over all quantum strategies) photon loss resistance of standard and routed Bell experiments.
In the case of two measurements per party, $m_A=m_{B_\zl}=2$, however, this cannot be the case. Indeed, the bound \eqref{eq:universal-bound-routed} then reduces to
\begin{equation}
    \eta_{B_\zl} \le \frac{\eta_A}{3\eta_A - 1} \, . \label{eq:universal-bound-unrouted-2}
\end{equation}
It was shown in \cite{Cabello2007} that there are quantum implementations that violate the CH inequality \cite{Clauser1974}, and thus demonstrate standard nonlocality, whenever \eqref{eq:universal-bound-unrouted-2} is violated, i.e., the bound \eqref{eq:universal-bound-unrouted-2} is tight for standard Bell experiments -- and routed Bell experiments cannot improve it.
Whether there exists a quantum strategy that demonstrates standard nonlocality when $\eta_B$ violates \eqref{eq:universal-bound-routed} for general values of $m_{A}$ and $m_{B_\zl}$ is an open question. 

\subsection{Analytical detection thresholds for implementations based on an ideal short-path CHSH test}
We now derive the detection efficiency threshold $\eta_{B_\zl}$ of the remote measurement devices for specific quantum correlations. We focus on the natural implementation depicted in Fig.~\ref{fig:Routed_Bell_efficiencies} in which the close detectors have the same efficiency $\eta_A = \eta_{B_\zs}=\eta_\zs$ and the distant detector has a lower efficiency $  \eta_{B_\zl} = \eta_\zl \leq \eta_\zs $. We assume, as in the previous section, that the state produced by the source and the observables implemented by the nearby devices $A$ and $B_\zs$ are
\begin{equation}\label{eq:Simpleqmstrat}
    \rho_{AB} = \ketbra{\phi_+}\,, \quad A_0 = X\,,\,A_1=Z\,,\quad  B_{0\zs} = \frac{X+Z}{\sqrt{2}}, \quad B_{1\zs} = \frac{X-Z}{\sqrt{2}}\,,
\end{equation}
yielding (in the ideal case $\eta_\zs=1$) a maximal CHSH violation in the short path. For the remote device $B_\zl$, we consider two families of strategies. Those, again as in the previous section, where $B_{0\zl}$ and $B_{1\zl}$ anti-commute, i.e., 
\begin{equation}
    B_{0\zl} = s_\theta X+c_\theta Z  ,~ B_{1\zl} = c_\theta X- s_\theta Z  ~~~(\text{anti-commuting $B_\zl$})\label{eq:anticommuting-BL}\,,
\end{equation}
where $c_{\theta} = \cos\theta$ and $s_{\theta} = \sin\theta$ and $\theta\in [0,\pi/4]$. These strategies include the CHSH strategy $(\theta = \pi/4)$ and the BB84 strategy $(\theta = 0)$ as special cases.

The second families of strategies are those where $B_{0\zl}$ and $B_{1\zl}$ correspond to arbitrary measurement in the $X-Z$ plane, which we parametrize as
\begin{align}
    B_{0\zl} &=   s_{\theta_0} X+c_{\theta_0} Z,~ B_{1\zl} =  s_{\theta_1} X + c_{\theta_1}Z ~~~(\text{general $B_\zl$})\,, \label{eq:general-BL}
\end{align}
where $\theta_0 = \theta_+-\theta_-$ and $\theta_1 = \theta_++\theta_-$. Thus $\theta_-$ corresponds to half the angle between the two measurement directions of $B_\zl$ in the Bloch sphere, and $\theta_+$ corresponds to a global rotation. Without loss of generality (by relabelling the outcomes of the measurements), we can restrict the angles $\theta_-$ to the interval $]0, \pi/2[ $ and $\theta_+$ to the interval $[0, \pi/4]$. The anti-commuting strategies \eqref{eq:anticommuting-BL} are a special case of the general strategies \eqref{eq:general-BL} with $\theta_- = \pi/4$ and $\theta_+ = \theta+\pi/4$.

\begin{figure}[t]
    \centering
    \includegraphics{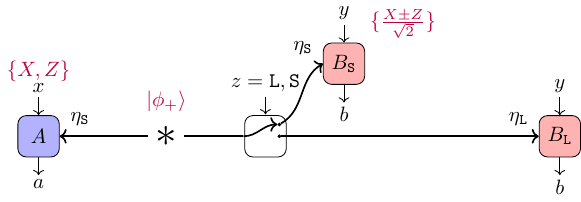}
    \caption{Given some target quantum strategy and some detector efficiency of the close parties $\eta_\zs \equiv \eta_{B_\zs} = \eta_{A}$ we look for the critical efficiency $\eta_\zl \equiv \eta_{B_\zl}$ required to certify nonlocality between the distant parties. \label{fig:Routed_Bell_efficiencies}}
\end{figure}

We will now derive analytically the critical efficiencies $\eta_\zl$ necessary to exhibit long-range quantum correlations when the nearby measurement devices are perfect, i.e., $\eta_\zs = 1$. In the next section, we will use numerical methods for the general case $\eta_\zs < 1$.

\subsubsection{Binned no-click outcomes}
Let us start by considering anticommuting measurements \eqref{eq:anticommuting-BL} for the remote device $B_\zl$ and that the no-click outcome $\varnothing$ is binned with the outcome $+1$. Then a necessary and sufficient condition for the long-path quantum correlations $p(a,b|x,y,\zl)$ to be nonlocal according to the standard notion of nonlocality (i.e, ignoring the switch) is to violate the CHSH inequality $\chsh_\zl > 2$. The quantum implementation given by eqs. \eqref{eq:Simpleqmstrat} and \eqref{eq:anticommuting-BL} yields for $\eta_\zs=1$ the CHSH value $\mathcal{C}_{\zl} = 2 \eta_\zl (c_\theta + s_\theta)$, which violates the local bound when
\begin{align}
    \eta_\zl > \frac{1}{c_\theta + s_\theta}\,. \label{eq:crit-eff-CHSH} 
\end{align}
The required efficiency thus goes from $\eta_\zl > 1$ when $\theta = 0$ (i.e., BB84 correlations do not exhibit nonlocality) to $\eta_\zl > 1/\sqrt{2} \approx 0.71 $ when $\theta = \pi/4$ (CHSH correlations).

Let us now exploit the extra constraints following from the short-path correlations. Since $\mathcal{C}_\zs=2\sqrt{2}$, SRQ correlations satisfy the \LP inequality $\mathcal{J}_{\zl}^{\theta} \le \sqrt{2}/c_\theta$ given in \eqref{eq:SimpleJM-ineq-bipartite}. The value of $\mathcal{J}_{\zl}^{\theta}$ 
for the anti-commuting implementations \eqref{eq:anticommuting-BL} is  $\mathcal{J}_{\zl}^{\theta} = 2 \eta_\zl / c_\theta \,$. 
Long-range quantum correlations are certified when this value exceeds the SRQ bound, i.e., when
\begin{equation}\label{eq:0.71}
\eta_\zl > {1}/{\sqrt{2}} \approx 0.71 \qquad \text{for all } \theta\,. 
\end{equation}
Thus, all the anti-commuting implementations \eqref{eq:anticommuting-BL} have the same tolerance to detection losses, which is moreover equal to the tolerance of the maximally loss-tolerant CHSH correlations.

We can reduce the critical efficiency in a routed Bell experiment further by using the following \LP inequality
\begin{multline}
    J^{{\theta_+},{\theta_-}}_{\zl} = (c_{\theta_+} + s_{\theta_-} s_{\theta_+}) \langle A_1B_{0\zl} \rangle 
    + (c_{\theta_+} -s_{\theta_-} s_{\theta_+}) \langle A_1 B_{1\zl} \rangle +
         (s_{\theta_+}-s_{\theta_-} c_{\theta_+}) \langle A_0 B_{0\zl} \rangle +\\
         (s_{\theta_+}+s_{\theta_-} c_{\theta_+}) \langle A_0 B_{1\zl} \rangle + 
        c_{\theta_-}( \langle B_{0\zl} \rangle +  \langle B_{1\zl} \rangle ) \le 2, \label{eq:JM-ineq-bipartite}
\end{multline}
where the SRQ bound $J^{{\theta_+},{\theta_-}}_{\zl} \le 2$ is derived assuming $\mathcal{C}_\zs = 2\sqrt{2}$ (see Appendix \ref{app:JM-proof} for a proof).
The value of $J^{{\theta+\pi/4},{\pi/4}}$ for the anti-commuting implementation  \eqref{eq:anticommuting-BL} is $\eta_\zl +\sqrt{2}$ which violates the SRQ bound when
\begin{align}
    \eta_\zl &> 2-\sqrt{2} \approx 0.586 \qquad\text{for all }\theta \label{eq:crit-eff-J-anti-commuting},
\end{align}
which represents a significant improvement over \eqref{eq:0.71}.

Let us now consider general projective measurements for $B_\zl$ of the form \eqref{eq:general-BL}. We then find that the CHSH value is $\mathcal{C}_{\zl} = 2 \eta_\zl {c_{\theta_+}(c_{\theta_-} + s_{\theta_-})}$, which violates the standard local bound when
\begin{align}
    \eta_\zl> \frac{1}{c_{\theta_+}(c_{\theta_-} + s_{\theta_-})}\,.\label{eq:crit-eff-CHSH-general}
\end{align}
This critical efficiency is plotted in Fig.~\ref{fig:crit-eff} as a function of $\theta_-$ for different values of $\theta_+$.
\begin{figure}[t]
    \centering
    \includegraphics[width=0.5\textwidth]{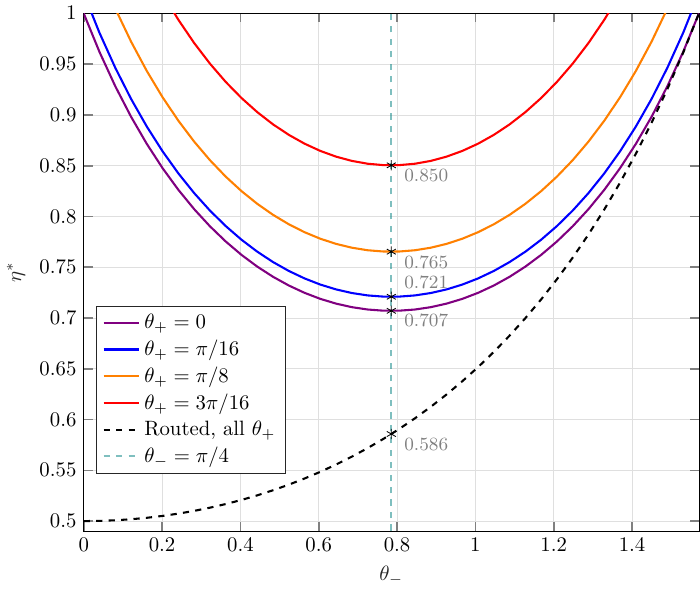}
    \caption{Critical efficiency $\eta_\zl$ for the quantum implementations \eqref{eq:general-BL}. The solid lines correspond to the critical efficiency in a standard Bell experiment, while the dashed line to the critical efficiency in a routed Bell experiment. Values for anti-commuting measurements, corresponding to $\theta_-=\pi/4$, are indicated on the graph.}
    \label{fig:crit-eff}
\end{figure}
If we instead use the SRQ inequality \eqref{eq:JM-ineq-bipartite}, we find that the value of $J^{{\theta_+},{\theta_-}}_{\zl}$ is $2  ( c_{\theta_-} + \eta_\zl  s_{\theta_-}^2 ) $, which violates \eqref{eq:JM-ineq-bipartite} when
\begin{align}
    \eta_\zl &> \frac{1}{1 + c_{\theta_-}}\,. \label{eq:crit-eff-J}
\end{align}
As $\theta_- \rightarrow 0$, this critical efficiency approaches $1/2$ (see Fig.~\ref{fig:crit-eff}),
which saturates the universal lower bound for the detection threshold \eqref{eq:universal-bound-routed}. We provide in Appendix \ref{app:explicit-strategy} an explicit SRQ strategy that 
reproduces the correlations obtained using \eqref{eq:general-BL} when $\eta = 1/(1 + c_{\theta_-})$, which proves that the bound \eqref{eq:crit-eff-J} is tight.

Note that as $\theta_-$ gets smaller, the target strategies become increasingly close to being jointly measurable, and the corresponding correlation close to SRQ correlations, since $B_{0\zl} \approx B_{1\zl}$. Somewhat counterintuitively, the implementations that are most robust to detection losses are precisely the ones that are the least robust against white noise, as happens for Eberhard correlations in the case of standard nonlocality \cite{Eberhard1993}.

\subsubsection{No-click outcomes kept as an additional output}
Let us now consider the situation where the no-click output $\varnothing$ of $B_\zl$ is kept as an additional outcome instead of binning it with $+1$. In a standard Bell scenario with two inputs and two outputs per party, this does not improve the analysis, as follows from \cite{Branciard2011}, i.e. the critical detection efficiencies for anticommuting and general measurements are still given by \eqref{eq:crit-eff-CHSH} and \eqref{eq:crit-eff-CHSH-general}, respectively.

In a routed Bell scenario, however, we do find an improvement by retaining non-detection events in the statistics. In fact, we find that for any anticommuting strategy from the family \eqref{eq:anticommuting-BL}, long-range quantum correlations can be certified whenever
\begin{equation}
    \eta_\zl > 1/2 \quad \text{for all } \theta. \label{eq:threshold-eta-TildeJ}
\end{equation}
This is optimal, as follows from the bound \eqref{eq:universal-bound-routed}. Thus, the maximally noise-robust anti-commuting measurements are also maximally loss-tolerant in a routed Bell experiment when non-detection events are included in the statistics.

To derive this result, we make use of the following \LP inequality
\begin{align}
    \tilde{\mathcal{J}}^{\theta}_{\zl} =  c_\theta \langle A_1 B_{0\zl} \rangle - s_\theta \langle A_1 B_{1\zl} \rangle + s_\theta \langle A_0B_{0\zl}\rangle + c_\theta \langle A_0B_{1\zl} \rangle - \frac{\langle T_{0\zl} \rangle + \langle T_{1\zl}\rangle}{2} \le \frac{1}{2}, \label{eq:TildeJM-ineq-bipartite}
\end{align}
where
  $T_{y \zl} = M_{b=+1|y\zl} + M_{b=-1|y\zl} $, and the SRQ bound $\tilde{\mathcal{J}}^{\theta}_{\zl} \le 1/2$ is derived assuming $\mathcal{C}_\zs = 2\sqrt{2}$ (see Appendix \ref{app:JM_tilde-proof} for proof).
The value of $\Tilde{\mathcal{J}}^{\theta}_\zl$ for the implementation \eqref{eq:anticommuting-BL} is $\Tilde{\mathcal{J}}^{\theta}_\zl= \eta$, which violates \eqref{eq:TildeJM-ineq-bipartite} when \eqref{eq:threshold-eta-TildeJ} holds.

The results obtained in this subsection are summarized in Table~\ref{table}.

\begin{table}[h]
    \centering
 
    \begin{tabular}{llll}
        \toprule
        Strategies & standard Bell (bin. or not bin.) & routed Bell, bin. & routed Bell, not bin. \\
        \midrule
        anti-commuting & $ 0.71\lesssim \frac{1}{c_\theta + s_\theta}\leq 1$ &  $2-\sqrt{2}\approx 0.586$ & $1/2$ \\
        general & $0.71\lesssim\frac{1}{c_{\theta_+}(c_{\theta_-} + s_{\theta_-})}\leq 1$ & $0.5\leq \frac{1}{1 + c_{\theta_-}}<1$ & $1/2^*$\\
        \bottomrule
    \end{tabular}
    \caption[Table - perfect close detectors]{Critical efficiency $\eta_\zl$ for the strategies given by eqs. \eqref{eq:Simpleqmstrat}, and \eqref{eq:anticommuting-BL} (anti-commuting) or \eqref{eq:general-BL} (general) in a standard Bell test or a routed Bell test in the case where the no-click outcome is binned or not. All values are derived analytically and proven to be tight, except for the value in the bottom-right corner which will be proved in an upcoming work \cite{future2024}.}
    \label{table}
\end{table}

\subsection{Numerical detection thresholds for implementations based on a lossy short-path CHSH test}
We now analyze numerically the setup considered in the previous section in the case where $\eta_\zs<1$, i.e., when the short-path CHSH violation is no longer maximal. We will focus on the anticommuting strategies given by \eqref{eq:anticommuting-BL} in the cases $\theta=0$, corresponding to BB84 correlations, and $\theta=\pi/4$, corresponding to CHSH correlations. Given a value $\eta_\zs$ for the short-path detector efficiency, we ask what is the maximal value $\eta_\zl$ for the long-path detector efficiency, so that the corresponding correlations $p^{\vec{\eta}}$ can be reproduced by an SRQ model, i.e., we aim to solve
\begin{align}
	\max_{\eta_{\zl}} \quad  \text{s.t. } p^{\vec{\eta}} &\in \mathcal{Q}_{\rm SR}\,.\label{eq:criteta}
\end{align}
Replacing the set $\mathcal{Q}_{\rm SR}$ by its $n$th-level NPA relaxation $\mathcal{Q}_{\rm SR}^n$, we can obtain an upper-bound on the critical long-path detection efficiency through
\begin{align}
	\max_{\eta_{\zl}} \quad  \text{s.t. } p^{\vec{\eta}} &\in \mathcal{Q}_{\rm SR}^n\,.\label{eq:critetaSDP}
\end{align}
Since the entries of the correlation vector $p^{\vec{\eta}}$ depend linearly on $\eta_\zl$ (once $\eta_\zs$ is fixed), the above problem is a SDP and can be solved numerically using standard packages. The results for a NPA level intermediate between the 3rd and the 4th are plotted in Fig.~\ref{fig:binning_vs_nobinning} (see Appendix \ref{appendix:sdp} for details). 


\begin{figure}[!htb]
    \centering
    \includegraphics{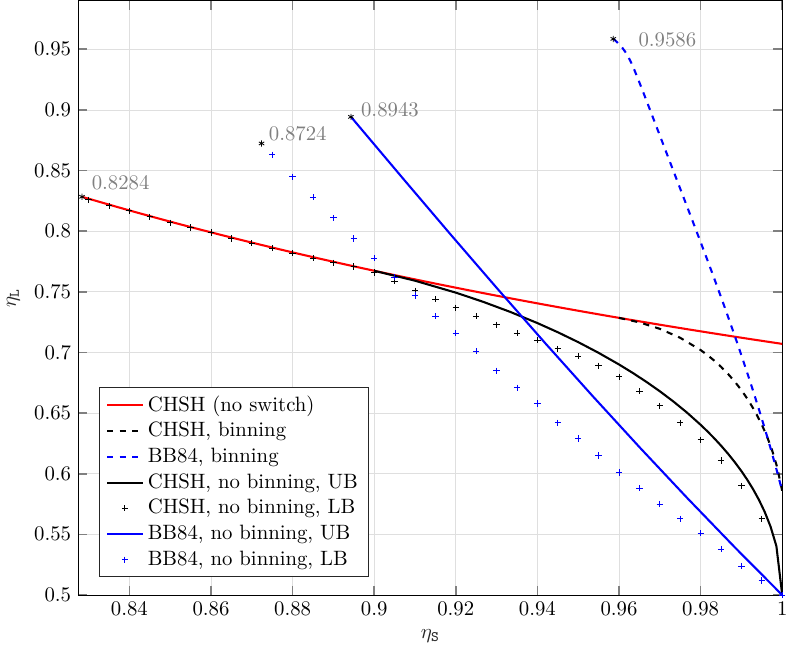}
    \caption{Upper and lower bounds on the critical efficiency $\eta_\zl$ for the distant device $B_\zl$ as a function of the efficiency $\eta_\zs$ for the devices $A$ and $B_\zs$ closer to the source, for the  CHSH correlations (black) and BB84 correlations (blue). The red curve correspond to the case of CHSH correlations in a standard Bell scenario, i.e., ignoring the switch (BB84 correlations in a standard Bell scenario are always local). Dotted (full) lines correspond to NPA upper bounds where the parties bin (keep) their no-click outcomes. The `$+$' values correspond to lower bounds where the parties keep their no-click outcomes. Lower-bounds in the case of binning are at most $1\%$ from the upper-bounds, hence not depicted. The upper bounds were obtained at level `$3 + AAAA+ B_\zs B_\zs B_\zs B_\zs + B_\zl B_\zl B_\zl B_\zl + A A B_\zs B_\zs + A A B_\zl B_\zl + B_\zs B_\zs B_\zl B_\zl $' of the NPA hierarchy, see Appendix~\ref{appendix:sdp} for details. We plot all curves until the point $\eta_\zl=\eta_\zs$. \label{fig:binning_vs_nobinning}}
\end{figure}

In the case of CHSH correlations, we find that routed Bell experiments provide a significant improvement over standard Bell test, especially for high-quality short-range tests. 
For a standard CHSH test, the critical efficiency is $\eta_\zl = \frac{\eta_\zs}{(\sqrt{2}+1)\eta_\zs - 1}$. This value is significantly reduced for routed Bell tests for short-path efficiencies above $\eta_\zs \sim 96.5 \%$ in the binning case and above $\eta_\zs \sim 91 \%$ in the no-binning case.

The BB84 correlations do not exhibit any nonlocality in standard Bell experiments, but, as already noted in the previous sections, do exhibit long-range quantum nonlocality in routed Bell experiments. Interestingly, for high values of $\eta_\zs$, BB84 correlations exhibit nonlocality for lower values of $\eta_\zl$ than CHSH correlations. The effect is much more pronounced in the no-binning case, but it is also present in the binning case (although it is too small to be visible in the figure).

We note that the critical efficiencies for $\eta_\zl$ obtained by solving \eqref{eq:critetaSDP} are not necessarily optimal, but only represent an upper-bound on the lowest admissible $\eta_\zl$, as we rely on a NPA relaxation at a finite level to compute them. We therefore also used heuristic seesaw algorithms to obtain lower-bounds on the critical efficiencies. Given efficiencies $\eta_\zs$ and $\eta_\zl$, we search for explicit SRQ strategies that reproduce the corresponding correlations $p^{\vec{\eta}}$. If we are able to do so, the given value $\eta_\zl$ represent a lower-bound on the critical efficiency necessary to exhibit long-range nonlocality. In the binning case, the lower-bounds obtained through our heuristic search over explicit quantum strategies are at most $1\%$ below the upper-bounds obtained through the NPA method and thus not represented in the figure. In the no-binning case, a small gap exists between the upper-bounds and lower-bounds, which are both plotted in Fig.~\ref{fig:binning_vs_nobinning}. Note that the existence of this gap does not affect our finding that BB84 correlations are more robust than CHSH correlations to losses in the long-path for high values of $\eta_\zs$.

Another important experimental consideration for Bell test implementations is noise. Here, we focus on the simplest case of local white noise: we assume that each party has local visibility $\nu_{B_z}=\nu_{A}\equiv\nu$. This can be modelled by replacing the state $|\phi_+\rangle$ in the ideal implementation by 
\begin{equation}
    \phi_{+}^\nu = \nu^2|\phi_{+}\rangle\langle\phi_+| + (1-\nu^2) \frac{\mathbf{1}}{4} \, ,
\end{equation}
We repeat the analysis for the no-binning case of the previous section for different local visibilities $\nu$. The results are shown in Fig.~\ref{fig:vis}. For CHSH we find numerically that the switch becomes useless for local visibilities below $\sim 0.940$. The BB84 correlations become local, even for perfect efficiency $\eta_\zs=1$ for local visibilities below $\sim 0.960$.

\begin{figure}[t]
    \centering
    \begin{subfigure}[b]{0.49\textwidth}
    \includegraphics[width=\textwidth]{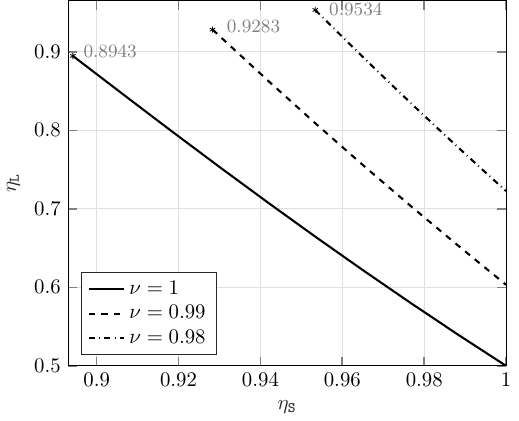}
    \caption{BB84 correlations without binning.}
    \end{subfigure}
    \hfill
    \begin{subfigure}[b]{0.49\textwidth}
    \includegraphics[width=\textwidth]{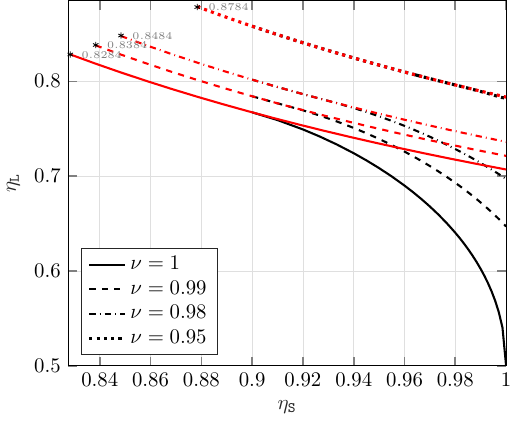}
    \caption{CHSH correlations without binning.}
    \end{subfigure}
    \caption{Upper bounds on the critical efficiency $\eta_\zl$ for the distant device $B_\zl$ as a function of the efficiency $\eta_\zs$ for the devices $A$ and $B_\zs$ closer to the source for different local visibilities $\nu$, when all parties keep their no-click outcomes. These bounds were obtained at the same NPA level as Figure~\ref{fig:binning_vs_nobinning}. For CHSH correlations the critical efficiencies for standard Bell tests (without the switch) are plotted in red. \label{fig:vis}}
\end{figure}

\section{Discussion}
We presented a systematic analysis of routed Bell experiments, an idea recently introduced by CVP \cite{Chaturvedi2022} with the goal of extending the distance over which nonlocality can be demonstrated in lossy experiments. We argued that certifying genuine nonlocality between the remote parties in routed experiments requires ruling out a more general class of `classical' models than those considered by CVP, which we termed short-range quantum models. Indeed, even if the behavior of the remote device might not be predetermined by classical variables set at the source, this does not imply that entanglement or quantum information needs to be distributed to it. 
After formulating the notions of short-range and genuine long-range correlations in routed experiments, we showed how these different correlation sets can be characterized through standard semidefinite programming hierarchies. With these definitions, we find that a short-path CHSH test does not lower the requirements for witnessing long-range nonlocality for a long-path CHSH test. However, borrowing intuition from self-testing and joint-measurability, we identified other long-path tests that do exhibit a `short-path enhancement', i.e. the requirements for observing a long-path quantum advantage are weakened. We then applied this analysis to the original question, namely whether these enhancements also imply improved detection efficiency requirements for nonlocality. With a combination of analytical and numerical tools, we showed that this is indeed the case, albeit the improvement is significantly lower than suggested by CVP's analysis. In particular, we showed that the efficiencies in routed Bell experiments are still subject to the same universal bounds valid in standard Bell experiments \cite{Massar2003,Masini2023}. When both parties have two measurement inputs, these bounds are tight and can already be saturated in standard Bell experiments. As such, the best criticial efficiency for routed experiments cannot be lower than the best (i.e., over all quantum strategies) critical efficiency for standard Bell experiments. Whether this is also the case for experiments with more inputs remains an open question.

Routed Bell experiments nevertheless do offer significant advantages. Indeed, for specific correlations, the introduction of a swich can be used to increase the robustness to losses and noise. For example, strategies exploiting anti-commuting Pauli measurements on a singlet state have single-sided critical efficiency that is at best $1/\sqrt{2} \sim 0.71$ in standard Bell experiments, but which is lowered to $1/2$ in routed experiments. This is the same single-sided critical efficiency as in the Eberhard scheme \cite{Eberhard1993}, but with a much simpler and more noise robust experimental setup. Furthermore, this bound can also be reached using the BB84 strategy, which involves both distant parties measuring $Z,X$ on the singlet. Interestingly, the BB84 correlations are always \emph{local} in the absence of the switch, i.e. the switch can be used to `activate' nonlocality. The idea that performing additional tests  in a Bell experiment can enlarge the set of correlations that can be certified in a device-independent way is similar in spirit to the results of \cite{bowles2018device} based on quantum networks, although routed Bell experiments do not require, contrarily to \cite{bowles2018device}, additional sources of entanglement or performing joint measurements.

The present work opens up several interesting directions for future research. First, it is natural to consider the symmetric extension of routed Bell experiments, where an additional \SP test is placed on both sides of the source. Our methods can be straightforwardly adapted to this situation. Clearly, this enables a further increase of the total distance over which nonlocality can be certified, but the extent of this increase will require further investigation \cite{Lobo2024}.
  
Second, it would be interesting to explore the practical applications of routed Bell experiments, in particular for cryptography and DIQKD. An improvement in the key rate and loss resistance with respect to conventional DI schemes is expected. Note, however, that the switch itself may be a significant source of losses and this has to be taken into account in any realistic analysis. Note further that the resulting performance of routed DI protocols cannot be better than analogous prepare-and-measure semi-DI schemes with a trusted preparation device. This is because at best the addition of the measurement device $B_\zs$ may lead to an exact certifcation of the entangled source and of Alice's device $A$, effectively turning them in a trusted remote preparation device. It follows that simple fully DI schemes based on CHSH or BB84 correlations will have key rates necessarily lower than or equal to CHSH or BB84 one-sided DI prepare-and-measure protocols \cite{woodhead2013quantum,tomamichel2013one}. In an upcoming work, we provide a detailed analysis of device-independent protocols based on routed Bell experiments \cite{Roy2024}.

Finally, it would be interesting to investigate how the idea of routed Bell experiments generalize to the more general setting of no-signalling, but supra-quantum, correlations. The analogues of CVP correlations in this context correspond to partially deterministic correlations introduced in \cite{ErikThesis,Bong2020,partdetpol}. It would be interesting to see if it is possible to define the analogues of short-range quantum correlations and whether `short-path enhancements' of long-path tests are also possible in the no-signalling setting.

\paragraph*{Noted added:} After presentation of an early version of this work at the \href{https://pirsa.org/23040123}{\emph{Causal Inference \& Quantum Foundations Workshop}} and the publication on the arXiv of the first version of the present paper, CVP updated their manuscript from \cite{Chaturvedi2022} to \cite{Chaturvedi2022v2} to take into account some of the points we raise.

\section*{Acknowledgements}
We thank Anubhav Chaturvedi, Giuseppe Viola, Marcin Pawłowski, Nicolas Gisin and Antonio Acín for useful discussions.
We acknowledge funding from the QuantERA II Programme that has received funding from the European Union's Horizon 2020 research and innovation programme under Grant Agreement No 101017733 and the F.R.S-FNRS Pint-Multi programme under Grant Agreement R.8014.21,
from the European Union's Horizon Europe research and innovation programme under the project ``Quantum Security Networks Partnership''(QSNP, grant agreement No 101114043),
from the F.R.S-FNRS through the PDR T.0171.22,
from the FWO and F.R.S.-FNRS under the Excellence of Science (EOS) programme project 40007526,
from the FWO through the BeQuNet SBO project S008323N,
from the Belgian Federal Science Policy through the contract RT/22/BE-QCI and the EU ``BE-QCI'' program.

S.P. is a Research Director of the Fonds de la Recherche Scientifique - FNRS. J.P. and E.P.L are FRIA grantees of the F.R.S.-FNRS.

Funded by the European Union. Views and opinions expressed are however those of the authors only and do not necessarily reflect those of the European Union. The European Union cannot be held responsible for them.

The code used in this work is available on GitHub at \url{https://github.com/eplobo/RoutedBell}.

\section{Appendix}
\appendix

\section{Local and quantum bounds on \texorpdfstring{$\mathcal{J}^{\theta}$}{}} \label{app:SimpleJM-bound-proof}
In the absense of a switch, $\mathcal{J}^{\theta}$ is a regular Bell expression and we can derive its local and quantum bounds using standard methods.
The bounds for $\theta \in [0,\pi/4]$ are given by \eqref{eq:SimpleJM-localbound} and \eqref{eq:SimpleJM-qmbound}, i.e.,
\begin{align}
    \mathcal{J}^{\theta} = t_\theta \langle A_0 B_{0} \rangle +  \langle A_0 B_{1} \rangle +  \langle A_1 B_{0} \rangle - t_\theta \langle A_1 B_{1} \rangle &\leq 2 \qquad\text{(local bound)} ~,  \label{appeq:SimpleJM-localbound} \\
            &\leq 2/c_\theta \qquad\text{(quantum bound)}\, . \label{appeq:SimpleJM-qmbound}
\end{align}
The local bound can be proved quite easily by explicitly checking the value of $\mathcal{J}^{\theta}$ for all possible local deterministic strategies, which corresponds to checking all possible assignments of $\pm 1$ to the observables $A_x$ and $B_y$. It is saturated by assigning $A_x = B_y = +1$.

To prove the quantum bound, on the other hand, takes a little more work. Let us denote the Bell operator corresponding to $\mathcal{J}^{\theta}$ by $\mathtt{J}^{\theta}$
\begin{align}
     \mathtt{J}^{\theta}  &=  t_\theta A_0  B_{0}  +   A_0  B_{1} +   A_1  B_{0} -  t_\theta A_1  B_{1}
\end{align}
 and assume without loss of generality (since we do not fix the dimension of the Hilbert space) that all the measurements are projective. Then, using that the observables $A_x, B_y$ are unitary operators which square to the identity, it is easily verified that
\begin{align}
    \left(c_{\theta} \mathtt{J}^{\theta}\right)^{2} &= 2\mathbf{1} +A_1A_0  ( c^2_{\theta} B_{0} B_{1} -  s^2_{\theta}  B_{1} B_{0}) +A_0A_1  ( c^2_{\theta} B_{1} B_{0} -  s^2_{\theta}  B_{0} B_{1}).
\end{align}
Denoting the the spectral norm by $\norm{\cdot}$, and using $\norm{(\cdot)^2} = \norm{\cdot}^2$, we obtain
\begin{align}
  c_{\theta}^2 \norm{\mathtt{J}^{\theta}}^2 = \norm{\left(c_{\theta}\mathtt{J}^{\theta}\right)^2}& \le 2 + \norm{A_1A_0  ( c^2_{\theta} B_{0} B_{1} -  s^2_{\theta}  B_{1} B_{0})} + \norm{A_0A_1  ( c^2_{\theta} B_{1} B_{0} -  s^2_{\theta}  B_{0} B_{1})} , \nonumber \\
   & \le 2 + \norm{A_1A_0} \cdot \norm{  c^2_{\theta} B_{0} B_{1} -  s^2_{\theta}  B_{1} B_{0}} + \norm{A_0A_1} \cdot \norm{   c^2_{\theta} B_{1} B_{0} -  s^2_{\theta}  B_{0} B_{1}},
\end{align}
where we have used that fact that for any two operators $P,Q$, $\norm{P + Q} \le \norm{P} + \norm{Q}$, and $\norm{PQ} \le \norm{P} \norm{Q}$. Further, since the observables are dichotomic, we have $\norm{A_0 A_1} \le \norm{A_0} \norm{A_1} \le 1 $, with similar bounds being valid for other pairs of observables. Therefore,
\begin{align}
    c_{\theta}^2 \norm{\mathtt{J}^{\theta}}^2& \le 2 + \norm{ c^2_{\theta} B_{0} B_{1}} + \norm{ s^2_{\theta}  B_{1} B_{0}} +  \norm{ c^2_{\theta} B_{1} B_{0}} + \norm{ s^2_{\theta}  B_{0} B_{1}}     \nonumber \\ 
   & \le 2 +   c^2_{\theta}  +  s^2_{\theta} +  c^2_{\theta}  +s^2_{\theta}   = 4  .
\end{align} 
Dividing by $c^2_{\theta}$ and taking the square root implies \eqref{appeq:SimpleJM-qmbound}. 

\section{SRQ bounds through joint-measurability inequalities} 
In this section we prove tight SRQ bounds on various \LP expressions of Sections \ref{sec:SPvsLP} and \ref{sec:dec}. These bounds are valid when the \SP CHSH test is maximal and are then equivalent to joint-measurability (JM) inequalities. 

\subsection{Bound on $\mathcal{J}^\theta_\zl$} \label{app:SimpleJM-proof}
The bound \eqref{eq:SimpleJM-srqbound} for $\mathcal{J}^\theta_\zl$ follows, as shown in Proposition~\ref{prop1},  from the following JM inequality
\begin{align}
    \frac{1}{2} \left[  \Tr(XB_{1\zl})+ \Tr(ZB_{0\zl})  \right] \le \sqrt{2}\,, \label{appeq:SimpleJM-ineq} 
\end{align}
valid for all qubit observables $B_{0\zl},B_{1\zl}$ that are jointly-measurable. This JM inequality was already introduced in \cite{Pusey2015}. We prove it below for completeness.
\begin{proof}
Restricting $B_\zl$ to be jointly-measurable means that $B_{0\zl}$ and $B_{1\zl}$ can be written as the marginals of a single parent POVM $\{N_{\beta_0 \beta_1}\}$, as shown in \eqref{eq:srqcorr4}. i.e.,
\begin{equation}
\begin{aligned}
	B_{0\zl} = M_{0|0} - M_{1|0} =   N_{00} + N_{01} - N_{10} - N_{11} \,, \\
	B_{1\zl} = M_{0|1} - M_{1|1} = N_{00} - N_{01} + N_{10} - N_{11} \,,
\end{aligned}
\end{equation}
where the outcome set $\{\pm 1\}$ has been relabeled $\{0,1\}$ for convenience. By rewriting \eqref{appeq:SimpleJM-ineq} in terms of $N_{\beta_0 \beta_1}$, we get
\begin{align}
    \frac{1}{2} \sum_{\beta_0 \beta_1} \Tr(C_{\beta_0 \beta_1} N_{\beta_0 \beta_1} ) \le \sqrt{2}, \label{appeq:SimpleJM-ineq-2}
\end{align}
where,
\begin{equation}
   C_{\beta_0 \beta_1} = (-1)^{\beta_0}Z + (-1)^{\beta_1} X.
\end{equation}
Proving \eqref{appeq:SimpleJM-ineq} is now equivalent to proving \eqref{appeq:SimpleJM-ineq-2}, with the restriction that the operators $N_{\beta_0 \beta_1}$ must be a valid POVM, i.e., $N_{\beta_0 \beta_1} \succeq 0$ and $\sum_{\beta_0 \beta_1}N_{\beta_0 \beta_1}= \mathbf{1}$. Maximizing the LHS of \eqref{appeq:SimpleJM-ineq-2} under these constraints is then an instance of semidefinite programming (SDP). The dual formulation of the SDP provides a certificate for the maximum value, which can be used to construct an analytic proof of \eqref{appeq:SimpleJM-ineq-2}. We provide such an analytic proof below which we constructed from the dual SDP.

Note first the following matrix inequalities, which can be easily verified by explicitly solving for the eigenvalues 
\begin{align}\label{eq:c}
    C_{\beta_0 \beta_1}  \preceq\sqrt{2} \mathbf{1} \qquad\forall \beta_0,\beta_1 .
\end{align}
We then get
\begin{align}
    \frac{1}{2} \sum_{\beta_0 \beta_1} \Tr(C_{\beta_0 \beta_1} N_{\beta_0 \beta_1} ) &\le  \frac{\sqrt{2}}{2} \sum_{\beta_0 \beta_1} \Tr(N_{\beta_0 \beta_1})\nonumber \,, \\
    & = \frac{\sqrt{2}}{2} \Tr(\mathbf{1}) = \sqrt{2}\,,
\end{align}
where in the first line we used \eqref{eq:c} and that $N_{\beta_0 \beta_1} \succeq 0$ and to get the second line we used that $\sum_{\beta_0 \beta_1}N_{\beta_0 \beta_1}= \mathbf{1}$.
\end{proof}
A joint-measurement strategy that saturates the inequality \eqref{appeq:SimpleJM-ineq} is given by $B_{0\zl}=B_{1\zl} = (Z+X)/\sqrt{2}$, which shows that the bound in \eqref{appeq:SimpleJM-ineq} is tight.
If $B_\zl$ is not restricted to be jointly-measurable, then the bound in \eqref{appeq:SimpleJM-ineq} can be improved to $2$ (which is the algebraic maximal value) using $B_{0\zl} = Z$, $B_{1\zl} = X$.

\subsection{Bounds on $\mathcal{J}_{\zl}^{\theta_+,\theta_-}$} \label{app:JM-proof}
We now derive the \LP inequality \eqref{eq:JM-ineq-bipartite} for ${J}_{\zl}^{\theta_+,\theta_-}$. 
Let us begin by considering the following joint-measurability inequality
\begin{align}
    \frac{1}{2} \left[ \Tr(Z B_{0\zl})  +
    \Tr(Z B_{1\zl}) - s_{\theta_-} \Tr(X B_{0\zl}) + s_{\theta_-} \Tr(X B_{1\zl}) + c_{\theta_-} \Tr(B_{0\zl}+B_{1\zl}) \right]  \le 2, \label{appeq:JM-ineq}
\end{align}
valid when $B_{0\zl}$ and $B_{1\zl}$ are jointly-measurable and $\theta_- \in ]0,{\pi}/{2}[$.
\begin{proof}
The proof is almost identical to Appendix \ref{app:SimpleJM-proof}. We write \eqref{appeq:JM-ineq} as
\begin{align}
    \sum_{\beta_0 \beta_1} \Tr(C_{\beta_0 \beta_1}({\theta_-}) N_{\beta_0 \beta_1} ) \le 2,
\end{align}
where,
\begin{equation}
    C_{\beta_0 \beta_1}({\theta_-}) = \delta_{\beta_0 \beta_1} (-1)^{\beta_0} (Z+c_{\theta_-} \mathbf{1}) + (1-\delta_{\beta_0 \beta_1}) (-1)^{\beta_1} s_{\theta_-} X.
\end{equation}
This expression plays a role analoguous to \eqref{appeq:SimpleJM-ineq-2} and as in Appendix~\ref{app:SimpleJM-proof} the bounds now follows from the following matrix inequalities which can easily be verified
\begin{align}
	C_{\beta_0 \beta_1}({\theta_-}) \preceq \mathbf{1} +  c_{{\theta_-}}   Z  \qquad \forall ~ {\beta_0, \beta_1}\,, \theta_- \in ]0,\frac{\pi}{2}[.
\end{align}
We can saturate the inequality \eqref{appeq:JM-ineq} by choosing $B_{0\zl} = B_{1\zl} = Z$.
\end{proof}
It is now easy to prove the \LP inequality \eqref{eq:JM-ineq-bipartite}. First note that we can view \eqref{appeq:JM-ineq} as part of a broader family of joint-measurability inequalities
\begin{multline}
    \frac{1}{2}[
        (c_{\theta_+} + s_{\theta_-} s_{\theta_+}) \Tr(Z B_{0\zl}) + 
        (c_{\theta_+} - s_{\theta_-} s_{\theta_+}) \Tr(Z B_{1\zl}) + 
        (s_{\theta_+} - s_{\theta_-} c_{\theta_+}) \Tr(X B_{0\zl}) + \\
         (s_{\theta_+} + s_{\theta_-} c_{\theta_+}) \Tr(X B_{1\zl}) + 
        c_{\theta_-} \Tr(B_{0\zl}+B_{1\zl}) 
        ] \le 2,
\end{multline}
that are  obtained by rotating the $X\shortminus Z$ plane by an angle $\theta_+$ around the $Y$ axis in the Bloch sphere. Such rotations correspond to a change of basis and do not change the bound. Using the equivalence shown in \eqref{eq:pmcorrelators} and valid when $\mathcal{C}_{\zs} = 2 \sqrt{2}$, we can interpret the family of joint-measurability inequalities above as the \LP inequalities \eqref{eq:JM-ineq-bipartite}.

Note that if the $B_{y\zl}$ observables are not restricted to be jointly-measurable, then the bound in \eqref{appeq:JM-ineq} can be improved to $2\sqrt{1+s_{{\theta_-}}^2}$, i.e.,
\begin{align}
    \frac{1}{2}\left[ \Tr(Z B_{0\zl})  +
    \Tr(Z B_{1\zl}) - s_{\theta_-} \Tr(X B_{0\zl}) + s_{\theta_-} \Tr(X B_{1\zl}) + c_{\theta_-} \Tr(B_{0\zl}+B_{1\zl})\right]  \le & 2\sqrt{1+s_{{\theta_-}}^2} . \label{appeq:JM-ineq-qm-bound}
\end{align}
\begin{proof}
To prove this, we follow a similar approach as before, but without the restriction to joint measurements. Let us re-express the above inequality as
\begin{align}
    \frac{1}{2} \sum_{b, y} \Tr(C_{b,y}({\theta_-}) M_{b|y\zl} ) \le ~2\sqrt{1+s_{{\theta_-}}^2},
\end{align}
where,
\begin{equation}
    C_{b,y}({\theta_-}) = (-1)^b (Z+c_{\theta_-} \mathbf{1}) + (-1)^{b + y +1} s_{\theta_-} X.
\end{equation}
Now, consider the following matrix inequalities, which can be easily verified
\begin{equation}
    \begin{aligned}
        C_{b,0}({\theta_-}) \preceq \lambda_0 \mathbf{1} + \lambda_1 X + \lambda_3 Z\,,\quad \forall b ,\\
        C_{b,1}({\theta_-}) \preceq \lambda_0 \mathbf{1} - \lambda_1 X + \lambda_3 Z\,,\quad\forall b,
    \end{aligned}
\end{equation}
where,
\begin{equation}
    \lambda_0 = \sqrt{1+s_{\theta_-}^2}, \quad
    \lambda_1 = -s_{\theta_-} + s_{\theta_-} \sqrt{ \frac{2-2c_{\theta_-} \sqrt{1+s_{\theta_-}^2}}{1+s_{\theta_-}^2}  } ,  \quad
    \lambda_3 = \frac{c_{\theta_-}}{\sqrt{1+s_{\theta_-}^2}} .
\end{equation}
These inequalities imply
\begin{align}
    \frac{1}{2} \sum_{b, y} \Tr(C_{b,y}({\theta_-}) M_{b|y\zl} ) &\leq  \frac{1}{2} \sum_{b, y} \Tr(  (\lambda_0 \mathbf{1} +(-1)^y \lambda_1 X + \lambda_3 Z)M_{b|y\zl} )\nonumber \,, \\
     &= \frac{1}{2} \sum_{y} \Tr(  (\lambda_0 \mathbf{1} +(-1)^y \lambda_1 X + \lambda_3 Z) ) = 2 \lambda_0 = 2 \sqrt{1+s_{{\theta_-}}^2} \,.
\end{align}
A strategy which saturates the inequality \eqref{appeq:JM-ineq-qm-bound} is given by, $B_{0\zl} + B_{1\zl} = \left(2/\sqrt{1+s_{{\theta_-}}^2}\right) Z$, \\ $ B_{1\zl} - B_{0\zl} = \left( 2s_{{\theta_-}} /\sqrt{1+s_{{\theta_-}}^2} \right) X$.
\end{proof}  
Finally, the local bound for $J^{\theta_+,\theta_-}$ in a standard Bell scenario (without the switch) is given by,
\begin{equation}
    J^{\theta_+,\theta_-} \le 2(c_{\theta_+} +s_{\theta_+}+c_{\theta_-}) ~\text{(local bound)}, \label{appeq:lhv-bound-J}
\end{equation}
where we have restricted $\theta_-$ to the interval $]0, \pi/2[$, and $\theta_+$ to the interval $[0, \pi/4]$. It can be saturated by $A$ and $B$ giving the output $+1$ irrespective of their inputs. When $s_{\theta_+}-s_{\theta_-} c_{\theta_+} \ge 0$, the value in \eqref{appeq:lhv-bound-J} is also the algebraic maximum of $J^{\theta_+,\theta_-}$, so it cannot be used in a standard Bell scenario to certify nonlocality.

\subsection{Bounds on $\tilde{\mathcal{J}}^{\theta}_{\zl}$} \label{app:JM_tilde-proof}
We now prove the \LP inequality \eqref{eq:TildeJM-ineq-bipartite} for $\tilde{\mathcal{J}}^{\theta}_{\zl}$, which is slightly more general than the ones above since there are three outcomes for  $B_\zl$. Let us begin by considering the following joint-measurability inequality 
\begin{align}
    \frac{1}{2}\Tr\left[Z B_{0\zl}  + X B_{1\zl} - \frac{(T_{0\zl} + T_{1\zl})}{2} \right] \le \frac{1}{2} \label{appeq:TildeJM-ineq},
\end{align}
where,
$~B_{y\zl} = M_{b=+1|y\zl} - M_{b=-1|y\zl}$,
$T_{y \zl} = M_{b=+1|y\zl} + M_{b=-1|y\zl}$\,.
\begin{proof}
    The proof is similar to Sections \ref{app:SimpleJM-proof} and \ref{app:JM-proof} except that there are three outcomes $\{+1,-1,\emptyset\}$ for $B_\zl$ in this scenario, which we relabel $\{0,1,2\}$ for convenience.
Hence, the joint-measurement $\{N_{\beta_0 \beta_1}\}$ of $B_\zl$ has, in general, nine outcomes. In terms of this joint-measurement, $B_\zl$'s observables are given by
\begin{equation}
    \begin{aligned}
        B_{0\zl} = M_{0|0}-M_{1|0} = N_{00} + N_{01} + N_{02} - N_{10} - N_{11} - N_{12}, \\
        B_{1\zl} = M_{0|1}-M_{1|1} = N_{00} + N_{10} + N_{20 }- N_{01}- N_{11} - N_{21}, \\
        T_{0\zl} = M_{0|0}+M_{1|0} = N_{00} + N_{01} + N_{02} + N_{10} + N_{11} + N_{12},\\
        T_{1\zl} = M_{0|1}+M_{1|1} = N_{00} + N_{10} + N_{20} + N_{01} + N_{11} + N_{21},
    \end{aligned} 
\end{equation}
which lets us rewrite \eqref{appeq:TildeJM-ineq} in terms of $N_{\beta_0 \beta_1}$ as
\begin{align}
		  \frac{1}{2} \sum_{\beta_0 \beta_1} \Tr(C_{\beta_0 \beta_1} N_{\beta_0 \beta_1} ) \le \frac{1}{2},
\end{align}
where,
\begin{equation}
    C_{\beta_0 \beta_1} = (1 -\delta_{\beta_0,2}) (-1)^{\beta_0} Z  + (1 -\delta_{\beta_1,2})(-1)^{\beta_1} X  - (2 -\delta_{\beta_0,2}-\delta_{\beta_1,2}) \frac{\mathbf{1}}{2}.
\end{equation}
It is now straightforward to verify the following matrix inequalities
\begin{equation}
	C_{\beta_0 \beta_1} \preceq \frac{\mathbf{1}}{2} \quad \forall ~\beta_0, \beta_1 \, ,
\end{equation}
using which \eqref{appeq:TildeJM-ineq} can be proved as before.

A joint-measurement strategy that saturates the inequality \eqref{appeq:TildeJM-ineq} is given by $M_{0|0}=\frac{1}{4}(\mathbf{1}+Z),~M_{1|0}=\frac{1}{4}(\mathbf{1}-Z),~M_{0|1}=\frac{1}{4}(\mathbf{1}+X),~M_{1|1}=\frac{1}{4}(\mathbf{1}-X)$, which can be shown to be 
jointly-measurable.
\end{proof}
It is now straightforward to prove the SRQ bound \eqref{eq:TildeJM-ineq-bipartite}. As before, we obtain a family of joint-measurability inequalities from \eqref{appeq:TildeJM-ineq} by rotating the $X-Z$ plane by an angle $\theta$ along the 
$Y$ axis in the Bloch sphere, given by
\begin{align}
    \frac{1}{2}\Tr\left[c_\theta Z B_{0\zl} - s_\theta Z B_{1\zl} + s_\theta X B_{0\zl} +
     c_\theta X B_{1\zl}  - \frac{(T_{0 \zl} + T_{1 \zl})}{2} \right] \le \frac{1}{2}, 
\end{align}
which, using the equivalence given in \eqref{eq:pmcorrelators} and valid when $\mathcal{C}_\zs=2\sqrt{2}$, gives the \LP inequality \eqref{eq:TildeJM-ineq-bipartite}.

 If $B_\zl$ is not restricted to be jointly-measurable, but can perform a general quantum measurement, then the bound in \eqref{appeq:TildeJM-ineq} can be improved to $1$:
\begin{align}
    \frac{1}{2}\Tr\left[Z B_{0\zl}  + X B_{1\zl} - \frac{(T_{0 \zl} + T_{1 \zl})}{2} \right] \le 1\,, \label{appeq:TildeJM-ineq-qm-bound}
\end{align}
\begin{proof}
    Let us write \eqref{appeq:TildeJM-ineq-qm-bound} as
\begin{align}
   \frac{1}{2} \sum_{b, y \in \{0,1\}} \Tr(C_{b,y} M_{b|y\zl} ) \le 1,
\end{align}
where,
\begin{equation}
    C_{b,y} = (-1)^{b} (\delta_{0,y} Z + \delta_{1,y} X ) - \frac{\mathbf{1}}{2}.
\end{equation}
Then the bound \eqref{appeq:TildeJM-ineq-qm-bound} follows from the following matrix inequalities
\begin{equation}
	C_{b,y} \preceq \frac{\mathbf{1}}{2} \quad\forall ~b,y \, .
\end{equation}
Choosing $B_{0\zl}=Z,~B_{1\zl}=X$ saturates the inequality \eqref{appeq:TildeJM-ineq-qm-bound}.
\end{proof}
The local and quantum bounds on $\tilde{\mathcal{J}}^{\theta}$ in a standard Bell scenario (without the switch) are given by
\begin{align}
    \tilde{\mathcal{J}}^{\theta} &\le 
    \begin{cases}
         2 c_\theta  - 1 & \text{ if } \theta \in \left[0,\frac{1}{2} \sin^{-1}(\frac{3}{4})\right]\\
          c_\theta + s_\theta - \frac{1}{2} & \text{ if } \theta \in \left[\frac{1}{2} \sin^{-1}(\frac{3}{4}),\frac{\pi}{4}\right]
    \end{cases} \qquad \text{(local bound)}, \label{appeq:TildeJM-ineq-local-bound} \\
    &\le 1 \qquad \text{(quantum bound)}.
\end{align} 
\begin{proof}
Let us first prove the local bound. Since $\tilde{\mathcal{J}}^{\theta}$ is linear in the expectation values, 
it must be maximized over a deterministic local strategy indexed by a local hidden variable, say,  $\lambda$.
Therefore,
\begin{align}
    \tilde{\mathcal{J}}^{\theta} \le \tilde{\mathcal{J}}^{\theta}_{\lambda} &=  \langle A_1\rangle_{\lambda} ( c_\theta  \langle B_{0} \rangle_{\lambda} -  s_\theta  \langle B_{1}\rangle_{\lambda} )  +  \langle A_0\rangle_{\lambda} (  s_\theta \langle B_{0}\rangle_{\lambda} +  c_\theta  \langle B_{1}\rangle_{\lambda} ) - \frac{\langle T_{0\zl}\rangle_{\lambda} + \langle T_{1\zl}\rangle_{\lambda}}{2},
\end{align}
where
\begin{equation}\label{appeq:tildeJM-corrs}
    \begin{aligned} 
        \langle A_x \rangle_{\lambda} &= \sum_{a \in \{ \pm 1\}} a~ p(a|x,\lambda) ~~,\forall x \in \{0,1\},  \\
        \langle B_y \rangle_{\lambda} &= \sum_{b \in \{ \pm 1\}} b~ p(b|y,\lambda) ~~,\forall y \in \{0,1\},  \\
        \langle T_y \rangle_{\lambda} &= \sum_{b \in \{ \pm 1\}} p(b|y,\lambda) ~~,\forall y \in \{0,1\}.
    \end{aligned}
\end{equation}
From \eqref{appeq:tildeJM-corrs} it is clear that $\absolutevalue{\langle B_y \rangle_{\lambda}} \le \langle T_y \rangle_{\lambda} $ and $\absolutevalue{\langle A_x \rangle_{\lambda}} \le 1$. Therefore,
\begin{align}
    \tilde{\mathcal{J}}^{\theta} &\le  \langle A_1\rangle_{\lambda} ( c_\theta  \langle B_{0} \rangle_{\lambda} -  s_\theta  \langle B_{1}\rangle_{\lambda} )  +  \langle A_0\rangle_{\lambda} (  s_\theta \langle B_{0}\rangle_{\lambda} +  c_\theta  \langle B_{1}\rangle_{\lambda} ) - \frac{ \absolutevalue{\langle B_{0\zl} \rangle_{\lambda}} + \absolutevalue{\langle B_{1\zl} \rangle_{\lambda}}}{2} ,\nonumber \\
    &\le \absolutevalue{( c_\theta  \langle B_{0} \rangle_{\lambda} -  s_\theta  \langle B_{1}\rangle_{\lambda} ) } + \absolutevalue{ (  s_\theta \langle B_{0}\rangle_{\lambda}+  c_\theta  \langle B_{1}\rangle_{\lambda} )} - \frac{ \absolutevalue{\langle B_{0} \rangle_{\lambda}} + \absolutevalue{\langle B_{1} \rangle_{\lambda}}}{2}.
\end{align}
The maximum of the right-hand side above occurs when $\langle B_{0} \rangle_{\lambda} = \langle B_{1} \rangle_{\lambda} = 1$ if
$\theta \in [0, (\sin^{-1}(3/4))/2]$, and when $\langle B_{0} \rangle_{\lambda} = 1,~ \langle B_{1} \rangle_{\lambda} = 0$ if
$\theta \in [(\sin^{-1}(3/4))/2,\pi/4]$. Substituting these values in the above inequality we get the desired bound,
which can be saturated using the follwing local strategy: When $\theta \in [0, (\sin^{-1}(3/4))/2]$, $A$ and $B$ always output $+1$. When $\theta \in [(\sin^{-1}(3/4))/2,\pi/4]$, $A$ always outputs $+1$ as before, but $B$ outputs $+1$ if $y=0$, and $\emptyset$ if $y=1$.

The quantum bound was determined numerically using level-1 of NPA hierarchy. 
It is tight, as it can be saturated by the quantum strategy given in \eqref{eq:Simpleqmstrat} and \eqref{eq:anticommuting-BL}, with $B_{\zl}$ playing the role of $B$.
\end{proof}

\section{Techniques for obtaining SoS decomposition}
\label{app:SOS-proof-methods}
In this section we describe the techniques we used to obtain the SoS decomposition for the family of shifted Bell operators $\mathtt{I}_u$
\eqref{eq:SOS-proof}. These techniques were first used in \cite{Bamps2015}, and we refer the interested reader to this work for more details. The steps are as follows.

We seek to write $\mathtt{I}_u = \sum_{i} K_i^\dagger K_i$ for some operators $K_i$. For this, expand $K_i$ in a basis of monomials $K_i = \sum_{ij} r_{ij} V_j$, where $V_j \in \{\mathbf{1}\} \bigcup \{A_0, ~ A_1\} \otimes \{B_{0\zs}, ~B_{1\zs},~ B_{0\zl},~ B_{1\zl}\}$. Therefore, $\mathtt{I}_u = \sum_{jk} V_j^\dagger  \left( \sum_{i} r_{ij}^* r_{ik} \right) V_k = 
\sum_{jk} V_j^\dagger  M_{jk}  V_k$.  Clearly, $M \succeq 0 $. Conversely, any positive operator $M$ that 
satisfies $\mathtt{I}_u = \sum_{jk} V_j^\dagger  M_{jk}  V_k$ is an SoS for $\mathtt{I}_u$. The aim is to find such a (non-unique) matrix $M$.

We can simplify the problem by utilizing known quantum strategies that saturate the bound on \eqref{eq:tradeoff-SRQvsLRQ-CHSHvsSimpleJ-2}. The existence of such strategies also establish that the bound \eqref{eq:tradeoff-SRQvsLRQ-CHSHvsSimpleJ-2} is tight.
For such strategies we must have, $\langle \mathtt{I}_u \rangle_{\psi} = 0$, and consequently, $\sum_{j} r_{ij} V_{j} \ket{\psi} = 0 ~\forall ~ i$.
Here $\ket{\psi}$ and $V_{j}$ are the known quantum state and measurement operators that yeild $\langle \mathtt{I}_u \rangle_{\psi} = 0$.

We made use of two such strategies which use the maximally entangled state $\ket{\phi_+}$ and the measurements
\begin{equation}
\begin{aligned}
\text{1. } A_0 &= c_u Z + s_u X,~ A_1 = c_u Z - s_u X,~ B_{0\zs} = Z,~ B_{1\zs} = X,~B_{0\zl}=B_{1\zl}=Z, \\
\text{2. } A_0 &= c_u X + s_u Z,~ A_1 = -c_u X + s_u Z,~ B_{0\zs} = Z,~ B_{1\zs} = X,~B_{0\zl}=-X,~B_{1\zl}=X.
\end{aligned}
\end{equation}
This imposes the following linear constraints on the coefficients $r_{ij}$:
\begin{equation}
\scalebox{0.98}{$
\begin{aligned}
r_{i \mathbf{1}} + c (r_{i A_0B_{0\zs}} + r_{i A_1B_{0\zs}} + r_{i A_0 B_{1\zl}} + r_{i A_0B_{0\zl}} + r_{i A_1B_{1\zl}} + r_{i A_1B_{0\zl}} ) + s (r_{i A_0B_{1\zs}} - r_{i A_1B_{1\zs}}) &= 0,  \\
s (-r_{i A_0B_{0\zs}} + r_{i A_1B_{0\zs}} - r_{i A_0 B_{1\zl}} - r_{i A_0B_{0\zl}} + r_{i A_1B_{1\zl}} + r_{i A_1B_{0\zl}}) + c (r_{i A_0B_{1\zs}} + r_{i A_1B_{1\zs}}) &= 0,  \\
r_{i \mathbf{1}} + c (r_{i A_0B_{1\zs}} - r_{i A_1B_{1\zs}} + r_{i A_0 B_{1\zl}} - r_{i A_0B_{0\zl}} - r_{i A_1B_{1\zl}} + r_{i A_1B_{0\zl}} ) + s (r_{i A_0B_{0\zs}} + r_{i A_1B_{0\zs}}) &= 0,  \\
s (r_{i A_0B_{1\zs}} + r_{i A_1B_{1\zs}} + r_{i A_0 B_{1\zl}} - r_{i A_0B_{0\zl}} + r_{i A_1B_{1\zl}} - r_{i A_1B_{0\zl}}) + c (r_{i A_1B_{0\zs}} - r_{i A_0B_{0\zs}}) &= 0.
\end{aligned}$} 
\end{equation}
Imposing these constraints, we find a new 5-dimensional monomial basis which must 
span $K_i$. The elements in this basis are $\{\mathtt{I}_u, P_1(u), P_2(u), P_3(u), P_4(u)\}$ (see~\eqref{eq:SOS-basis}).

A further simplification occurs by symmetry considerations. We note that under the 
transformation $\sigma_{1}=\{A_0\leftrightarrow A_1,~B_{1\zs} \rightarrow -B_{1\zs},~B_{0\zl} \leftrightarrow B_{1\zl}\}$, 
$\{\mathtt{I}_u,~P_1,~P_2\}$ are invariant, whereas $\{P_3,~P_4\}$ flip sign.
Applying this transformation to both sides of $\mathtt{I}_u = \sum_{jk} P_j^\dagger M_{jk}  P_k$, we get
\begin{equation}
\begin{aligned}
    \mathtt{I}_u &= \sum_{jk} \sigma_{1}(P_j^\dagger)  M_{jk}  \sigma_{1}(P_k), \\
        &= \sum_{jk} P_j^\dagger  N_{jk}  P_k ,
\end{aligned} 
\end{equation}  
where $N$ is a new SoS decomposition for $\mathtt{I}_u$ by construction.
The convex combination of SoS decompositions is also a valid SoS decomposition. 
A convenient choice is given by $G := (M+N)/2$ since it is block diagonal. In particular, 
$G$ is made up of $2$ blocks of sizes $3\times3$ and $2\times2$.

Similarly, we make use of another transformation $\sigma_{2} = \{A_1 \rightarrow -A_1, ~ B_{0\zs} \leftrightarrow B_{1\zs},~
B_{1\zl} \rightarrow -B_{1\zl} \}$, under which $\{\mathtt{I}_u,~P_1,~P_3\}$ are invariant and $\{P_2,~P_4\}$ flip sign. Imposing this symmetry to $G$, 
we find that we can choose yet another SoS decomposition $H$ that is block diagonal with $1$ block of size $2\times2$ and $3$ blocks of size $1\times1$.

Finally, we equate the coefficients on both sides of $\mathtt{I}_u = \sum_{jk} P_j^\dagger  H_{jk}  P_k$, and solve
for the non-zero elements of $H$, which leads to the SoS decomposition for $\mathtt{I}_u$ \eqref{eq:SOS-proof}.

\section{SRQ strategies that reproduce the correlations in  \texorpdfstring{\eqref{eq:general-BL}}{} with sufficiently small detection efficiency} \label{app:explicit-strategy}
Here we construct an explicity SRQ strategy for the quantum implementations in \eqref{eq:general-BL} that reproduces the statistics when the detection efficiency of $B_\zl$ is $1/(1 + c_{\theta_-})$, where ${\theta_-} \in ]0,\pi/2[$.
The task is to construct a joint measurement $N_{\beta_0 \beta_1}$, where $N_{\beta_0 \beta_1} \succeq 0 $ and $\sum_{\beta_0 \beta_1} N_{\beta_0 \beta_1} = \mathbf{1}$ for 
${\theta_-} \in ] 0,{\pi/2} [$, so that,
\begin{equation}
\begin{aligned}
    B_{0\zl}^\eta = \eta B_{0\zl} + (1-\eta) \mathbf{1} = N_{00} +  N_{01} -N_{10}-N_{11} , \\
    B_{1\zl}^\eta = \eta B_{1\zl} + (1-\eta) \mathbf{1} = N_{00} -  N_{01} +N_{10}-N_{11} ,
\end{aligned}
\end{equation}
where $\eta = 1/(1 + c_{\theta_-})$.
For $\theta_+ = 0$ the joint-measurement is given by 
\begin{equation}\label{appeq:joint-measurement}
    \begin{aligned}
        N_{00} &=  \left( \frac{c_{\theta_-}}{1+c_{\theta_-}} \right) (\mathbf{1}+Z)\,, \\
        N_{01} &= \frac{1}{2} \left[ -\left( \frac{s_{\theta_-}}{1+c_{\theta_-}} \right) X +  \left( \frac{1-c_{\theta_-}}{1+c_{\theta_-}} \right)  \frac{\mathbf{1}+Z}{2}
        +  \frac{\mathbf{1}-Z}{2} \right]\,, \\
        N_{10} &=   \frac{1}{2} \left[ \left( \frac{s_{\theta_-}}{1+c_{\theta_-}} \right) X +  \left( \frac{1-c_{\theta_-}}{1+c_{\theta_-}} \right)  \frac{\mathbf{1}+Z}{2}
        +  \frac{\mathbf{1}-Z}{2} \right] \,,\\
        N_{11} &= 0 \,  .
    \end{aligned}
\end{equation}
For other values of $\theta_+$, replace $Z$ and $X$ in \eqref{appeq:joint-measurement} with $c_{\theta_+} Z + s_{\theta_+} X$ and $c_{\theta_+} X - s_{\theta_+} Z$, respectively.

\section{SDP used in Fig.~\ref{fig:binning_vs_nobinning}}\label{appendix:sdp}

Here, we provide additional details about the implementation of the SDP problem \eqref{eq:critetaSDP}, which we solved to obtain the bounds on the critical detection efficiency $\eta_\zl$ in Fig.~\ref{fig:binning_vs_nobinning}. The semidefinite relaxations were generated using the python package NCPOL2SDPA \cite{Wittek2015, BrownGit} and solved using MOSEK \cite{mosek}. The code is available at  \url{https://github.com/eplobo/RoutedBell}.

We again focus on the CHSH and BB84 target correlations (\eqref{eq:anticommuting-BL} with $\theta=\pi/4$ and $\theta=0$ respectively). For the binning case, a basis of monomial operators is given by products of the hermitian observables $\{\mathbf{1},A_x,B_{yz}\}$ which satisfy $A_x^2=\mathbf{1}$, $B_{yz}^2=\mathbf{1}$, $[A_x,B_{yz}]=0$ and $[B_{0\zl},B_{1\zl}]=0$. The full set of probabilities $p^{\vec{\eta}}$ is in one-to-one correspondence with the expectation values  $\langle A_x\rangle$, $\langle B_{yz}\rangle$, and $\langle A_xB_{yz}\rangle$, which are given by 
\begin{equation}
\begin{aligned}
   \langle A_x\rangle &= 1 - \eta_\zs , \\
   \langle B_{yz}\rangle &= 1 - \eta_z ,\\
   \langle  A_xB_{y\zs}\rangle &= \eta_\zs^2 (-1)^{x \cdot y} \frac{1}{\sqrt{2}} + (1-\eta_\zs)^2 , \\
   \langle A_xB_{y\zl}\rangle &=  
   \begin{cases}
       \eta_\zs \eta_\zl (-1)^{x \cdot y} \frac{1}{\sqrt{2}} + (1-\eta_\zs)(1-\eta_\zl) \quad \text{for CHSH correlations}, \\
       \eta_\zs \eta_\zl \delta_{x,y} + (1-\eta_\zs)(1-\eta_\zl) \quad \text{for BB84 correlations} .
   \end{cases}  
\end{aligned}
\end{equation}
We solved the SDP problem \eqref{eq:critetaSDP} by imposing the above constraints on the expectations and considering level `$3 + AAAA+ B_\zs B_\zs B_\zs B_\zs + B_\zl B_\zl B_\zl B_\zl + A A B_\zs B_\zs + A A B_\zl B_\zl + B_\zs B_\zs B_\zl B_\zl $' of the hierarchy. 

When the no-click outcomes are not binned, the operators $A_x=M_{0|x}-M_{1|x}$ and $B_{yz}=M_{0|yz}-M_{1|yz}$ do not square to the identity, but we have the relations $A_x^2 = M_{0|x}+M_{1|x}$ and $A_x^3 = M_{0|x}-M_{1|x}=A_x$ and similarly for $B_{yz}$. We thus need to replace the polyonomial constraints $A_x^2 =\mathbf{1}$ and $B_{yz}^2=\mathbf{1}$ from the binning case by the constraints $A_x^3=A_x$ and $B_{yz}^3=B_{yz}$.  The full set of non-binned probabilities  $p^{\vec{\eta}}$ are in one-to-one correspondence with the expectation values  $\langle A_x\rangle$, $\langle A_x^2\rangle$, $\langle B_{yz}\rangle$, $\langle B_{yz}^2\rangle$, $\langle A_xB_{yz}\rangle$, $\langle A_xB_{yz}^2\rangle$, $\langle A_x^2 B_{yz}\rangle$, and $\langle A_x^2B_{yz}^2\rangle$, which, in our specific problem are given by
\begin{equation}\label{eq:expectations_nobinning}
\begin{aligned}
   \langle A_x\rangle &= 0, \quad \langle A_x^2 \rangle = \eta_\zs \, \\
   \langle B_{yz}\rangle &= 0, \quad \langle B_{yz}^2\rangle = \eta_z \,  \\
   \langle A_xB_{yz}^2\rangle &= \langle A_x^2 B_{yz} \rangle = 0 , \quad \langle A_x^2 B_{yz}^2 \rangle =\eta_\zs \eta_z \,\\
   \langle A_xB_{y\zs}\rangle &= \eta_\zs^2 (-1)^{x \cdot y} \frac{1}{\sqrt{2}} \, \\
   \langle A_x B_{y\zl} \rangle &= 
   \begin{cases}
       \eta_\zs \eta_\zl (-1)^{x \cdot y} \frac{1}{\sqrt{2}} \quad \text{for CHSH correlations} \,\\
       \eta_\zs \eta_\zl \delta_{x,y} \quad \text{for BB84 correlations}.
   \end{cases} 
\end{aligned}
\end{equation}
As before, we solved the corresponding SDP problem \eqref{eq:critetaSDP} at level `$3 + AAAA+ B_\zs B_\zs B_\zs B_\zs + B_\zl B_\zl B_\zl B_\zl + A A B_\zs B_\zs + A A B_\zl B_\zl + B_\zs B_\zs B_\zl B_\zl $'.

\printbibliography

@unpublished{future2024,
  author    = {Brunner, Nicolas and Lobo, Edwin Peter and Pauwels, Jef and Pironio, Stefano and Sekatski, Pavel and Weilenmann, Mirjam},
  note      = {In preparation}
}

@unpublished{Lobo2024,
  author    = {Lobo, Edwin Peter and Pironio, Stefano},
  note      = {In preparation},
  year      = {2024}
}

@Article{Acin2007,
  author    = {Ac{\'i}n, Antonio and Brunner, Nicolas and Gisin, Nicolas and Massar, Serge and Pironio, Stefano and Scarani, Valerio},
  journal   = {\prl},
  title     = {{Device-Independent Security of Quantum Cryptography against Collective Attacks}},
  year      = {2007},
  pages     = {230501},
  volume    = {98},
  doi       = {10.1103/PhysRevLett.98.230501},
  eprinttype = {arxiv},
  eprint    = {quant-ph/0702152},
  issue     = {23},
  numpages  = {4},
  publisher = {American Physical Society},
}

@article{Wittek2015,
author = {Wittek, Peter},
title = {Algorithm 950: Ncpol2sdpa—Sparse Semidefinite Programming Relaxations for Polynomial Optimization Problems of Noncommuting Variables},
year = {2015},
issue_date = {June 2015},
publisher = {Association for Computing Machinery},
address = {New York, NY, USA},
volume = {41},
number = {3},
eprint = {1308.6029},
eprinttype      = {arxiv},
doi = {10.1145/2699464},
abstract = {A hierarchy of semidefinite programming (SDP) relaxations approximates the global optimum of polynomial optimization problems of noncommuting variables. Generating the relaxation, however, is a computationally demanding task, and only problems of commuting variables have efficient generators. We develop an implementation for problems of noncommuting variables that creates the relaxation to be solved by SDPA--a high-performance solver that runs in a distributed environment. We further exploit the inherent sparsity of optimization problems in quantum physics to reduce the complexity of the resulting relaxations. Constrained problems with a relaxation of order two may contain up to a hundred variables. The implementation is available in Python. The tool helps solve such as finding the ground state energy or testing quantum correlations.},
journal = {ACM Trans. Math. Softw.},
month = {jun},
articleno = {21},
numpages = {12},
keywords = {Semidefinite programming, ground state energy, noncommuting variables, quantum correlations}
}

@misc{BrownGit,
  author = {Brown, Peter},
  title = {{Github Repository for NCPOL2SDPA}},
  url = {https://github.com/peterjbrown519/ncpol2sdpa},
}

@manual{mosek,
   author = "MOSEK ApS",
   title = "Python API for Mosek. Version 10.1.28",
   url = "https://docs.mosek.com/latest/pythonapi/index.html"
 }

@Article{Putz2016,
  author    = {P{\"u}tz, Gilles and Martin, Anthony and Gisin, Nicolas and Aktas, Djeylan and Fedrici, Bruno and Tanzilli, S{\'e}bastien},
  journal   = {\prl},
  title     = {{Quantum nonlocality with arbitrary limited detection efficiency}},
  year      = {2016},
  number    = {1},
  pages     = {010401},
  volume    = {116},
  doi       = {10.1103/physrevlett.116.010401},
  eprinttype = {arxiv},
  eprint    = {1509.07139},
  publisher = {APS},
}

@article{Toner2006,
      title={{Monogamy of Bell correlations and Tsirelson's bound}}, 
      author={Benjamin Toner and Frank Verstraete},
      year={2006},
      eprint = {quant-ph/0611001},
      eprinttype = {arxiv},
      archivePrefix={arXiv},
}

@article{Chaturvedi2022,
  author        = {Chaturvedi, Anubhav and Viola, Giuseppe and Pawłowski, Marcin},
  title         = {{Extending loophole-free nonlocal correlations to arbitrarily large distances}},
  year          = {2022},
  archiveprefix = {arXiv},
  eprinttype = {arxiv},
  eprint          = {2211.14231v1},
}

@article{Chaturvedi2022v2,
  title={{Extending loophole-free nonlocal correlations to arbitrarily large distances}},
  author={Chaturvedi, Anubhav and Viola, Giuseppe and Paw{\l}owski, Marcin},
  journal={npj Quantum Information},
  volume={10},
  number={1},
  pages={7},
  year={2024},
  eprinttype = {arxiv},
  eprint       = {2211.14231v2},
  doi = {10.1038/s41534-023-00799-1},
  publisher={Nature Publishing Group UK London}
}

@Article{Massar2003,
  author    = {Massar, Serge and Pironio, Stefano},
  journal   = {Phys. Rev. A},
  title     = {{Violation of local realism versus detection efficiency}},
  year      = {2003},
  month     = {Dec},
  pages     = {062109},
  volume    = {68},
  doi       = {10.1103/PhysRevA.68.062109},
  eprint    = {quant-ph/0210103},
  eprinttype = {arxiv},
  issue     = {6},
  numpages  = {7},
  publisher = {American Physical Society},
}

@Article{Bamps2015,
  author    = {Bamps, C{\'e}dric and Pironio, Stefano},
  journal   = {\pra},
  title     = {{Sum-of-squares decompositions for a family of Clauser-Horne-Shimony-Holt-like inequalities and their application to self-testing}},
  year      = {2015},
  number    = {5},
  pages     = {052111},
  volume    = {91},
  doi       = {10.1103/physreva.91.052111},
  eprint = {1504.06960},
  eprinttype = {arxiv},
  archivePrefix = "arXiv",
  publisher = {APS},
}

@Article{Brunner2014,
  author    = {Brunner, Nicolas and Cavalcanti, Daniel and Pironio, Stefano and Scarani, Valerio and Wehner, Stephanie},
  journal   = {Rev.\ Mod.\ Phys.},
  title     = {{{B}ell nonlocality}},
  year      = {2014},
  issn      = {1539-0756},
  month     = {Apr},
  number    = {2},
  pages     = {419--478},
  volume    = {86},
  doi       = {10.1103/revmodphys.86.419},
  archivePrefix = {arXiv},
  eprint    = {1303.2849},
  eprinttype = {arxiv},
  groups    = {EA correlations},
  publisher = {American Physical Society (APS)},
}

@Article{NPA2007,
  author    = {Navascu\'es, Miguel and Pironio, Stefano and Ac\'{i}n, Antonio},
  journal   = {Phys. Rev. Lett.},
  title     = {{Bounding the Set of Quantum Correlations}},
  year      = {2007},
  month     = {Jan},
  pages     = {010401},
  volume    = {98},
  doi       = {10.1103/PhysRevLett.98.010401},
  eprint    = {quant-ph/0607119},
  eprinttype = {arxiv},
  groups    = {adaptive, EA correlations},
  issue     = {1},
  numpages  = {4},
  publisher = {American Physical Society},
}

@Article{NPA08,
  author    = {Navascu{\'e}s, Miguel and Pironio, Stefano and Ac{\'i}n, Antonio},
  journal   = {\njp},
  title     = {{A convergent hierarchy of semidefinite programs characterizing the set of quantum correlations}},
  year      = {2008},
  number    = {7},
  pages     = {073013},
  volume    = {10},
  doi       = {10.1088/1367-2630/10/7/073013},
  eprint = {0803.4290},
  eprinttype = {arxiv},
  archivePrefix = "arXiv",
  publisher = {IOP Publishing},
}

@Article{PNA10,
  author    = {Pironio, Stefano and Navascu{\'e}s, Miguel and Acin, Antonio},
  journal   = {SIOPT},
  title     = {{Convergent relaxations of polynomial optimization problems with noncommuting variables}},
  year      = {2010},
  number    = {5},
  pages     = {2157--2180},
  volume    = {20},
  doi       = {10.1137/090760155},
  eprint    = {0903.4368},
  eprinttype = {arxiv},
  publisher = {SIAM},
}

@Article{Pearle1970,
  author    = {Pearle, Philip M.},
  journal   = {Phys. Rev. D},
  title     = {{Hidden-Variable Example Based upon Data Rejection}},
  year      = {1970},
  month     = {Oct},
  pages     = {1418--1425},
  volume    = {2},
  doi       = {10.1103/physrevd.2.1418},
  issue     = {8},
  numpages  = {0},
  publisher = {American Physical Society},
}

@Article{Clauser1974,
  author    = {Clauser, John F. and Horne, Michael A.},
  journal   = {Phys. Rev. D},
  title     = {{Experimental consequences of objective local theories}},
  year      = {1974},
  month     = {Jul},
  pages     = {526--535},
  volume    = {10},
  doi       = {10.1103/PhysRevD.10.526},
  issue     = {2},
  numpages  = {0},
  publisher = {American Physical Society},
}

@Article{Larsson1998,
  author    = {Larsson, Jan-{\AA}ke},
  journal   = {\pra},
  title     = {{Bell’s inequality and detector inefficiency}},
  year      = {1998},
  number    = {5},
  pages     = {3304},
  volume    = {57},
  doi       = {10.1103/physreva.57.3304},
  publisher = {APS},
}

@Article{Shor2000,
  author    = {Shor, Peter W. and Preskill, John},
  journal   = {Phys. Rev. Lett.},
  title     = {{Simple Proof of Security of the BB84 Quantum Key Distribution Protocol}},
  year      = {2000},
  month     = {Jul},
  pages     = {441--444},
  volume    = {85},
  doi       = {10.1103/PhysRevLett.85.441},
  eprint    = {quant-ph/0003004},
  eprinttype = {arxiv},
  issue     = {2},
  numpages  = {0},
  publisher = {American Physical Society},
}

@InProceedings{Bennet1984,
  author    = {Bennett, Charles H. and Brassard, Gilles},
  booktitle = {Proceedings of IEEE International Conference on Computers, Systems, and Signal Processing},
  title     = {{Quantum cryptography: Public key distribution and coin tossing}},
  year      = {1984},
  address   = {India},
  pages     = {175},
  biburl    = {https://www.bibsonomy.org/bibtex/2ca89602a28a4416dfc6a74ffae7e3292/bronckobuster},
  doi       = {10.1016/j.tcs.2014.05.025},
  eprint    = {2003.06557},
  eprinttype = {arxiv},
  location  = {Bangalore},
}

@Article{Bennet1992,
  author    = {Bennett, Charles H. and Brassard, Gilles and Mermin, David},
  journal   = {Phys. Rev. Lett.},
  title     = {{Quantum cryptography without Bell's theorem}},
  year      = {1992},
  month     = {Feb},
  pages     = {557--559},
  volume    = {68},
  doi       = {10.1103/PhysRevLett.68.557},
  issue     = {5},
  numpages  = {0},
  publisher = {American Physical Society},
}

@Article{Pironio2010,
  author  = {Pironio, Stefano and Ac{\'i}n, Antonio and Massar, Serge and de La Giroday, A Boyer and Matsukevich, Dzmitry N and Maunz, Peter and Olmschenk, Steven and Hayes, David and Luo, Le and Manning, T Andrew and others},
  journal = {Nature},
  title   = {{Random numbers certified by Bell’s theorem}},
  year    = {2010},
  number  = {7291},
  pages   = {1021--1024},
  volume  = {464},
  doi     = {10.1038/nature09008},
  eprint  = {0911.3427},
  eprinttype = {arxiv},
}

@Article{Garg1987,
  author    = {Garg, Anupam and Mermin, N David},
  journal   = {\prd},
  title     = {{Detector inefficiencies in the Einstein-Podolsky-Rosen experiment}},
  year      = {1987},
  number    = {12},
  pages     = {3831},
  volume    = {35},
  doi       = {10.1103/physrevd.35.3831},
  publisher = {APS},
}

@Article{Eberhard1993,
  author    = {Eberhard, Philippe H.},
  journal   = {Phys. Rev. A},
  title     = {{Background level and counter efficiencies required for a loophole-free Einstein-Podolsky-Rosen experiment}},
  year      = {1993},
  month     = {Feb},
  pages     = {R747--R750},
  volume    = {47},
  doi       = {10.1103/PhysRevA.47.R747},
  issue     = {2},
  numpages  = {0},
  publisher = {American Physical Society},
}

@Article{Cabello2007,
  author    = {Cabello, Adan and Larsson, Jan-{\AA}ke},
  journal   = {\prl},
  title     = {{Minimum detection efficiency for a loophole-free atom-photon Bell experiment}},
  year      = {2007},
  number    = {22},
  pages     = {220402},
  volume    = {98},
  doi       = {10.1103/physrevlett.98.220402},
  eprint    = {quant-ph/0701191},
  eprinttype = {arxiv},
  publisher = {APS},
}

@Article{Branciard2011,
  author    = {Branciard, Cyril},
  journal   = {Phys. Rev. A},
  title     = {{Detection loophole in Bell experiments: How postselection modifies the requirements to observe nonlocality}},
  year      = {2011},
  month     = {Mar},
  pages     = {032123},
  volume    = {83},
  doi       = {10.1103/PhysRevA.83.032123},
  issue     = {3},
  numpages  = {8},
  eprint = {1010.1178},
  eprinttype = {arxiv},
  publisher = {American Physical Society},
}

@Article{Wilms2008,
  author    = {Wilms, Johannes and Disser, Yann and Alber, Gernot and Percival, Ian C.},
  journal   = {\pra},
  title     = {{Local realism, detection efficiencies, and probability polytopes}},
  year      = {2008},
  number    = {3},
  pages     = {032116},
  volume    = {78},
  doi       = {10.1103/physreva.78.032116},
  publisher = {APS},
  eprint    = {0808.2126},
  eprinttype = {arxiv},
}

@Article{busch1986unsharp,
  author    = {Busch, Paul},
  journal   = {\prd},
  title     = {{Unsharp reality and joint measurements for spin observables}},
  year      = {1986},
  number    = {8},
  pages     = {2253},
  volume    = {33},
  doi       = {10.1103/physrevd.33.2253},
  publisher = {APS},
}

@Article{lahti2003coexistence,
  author    = {Lahti, Pekka},
  journal   = {Int. J. Theor. Phys},
  title     = {{Coexistence and joint measurability in quantum mechanics}},
  year      = {2003},
  pages     = {893--906},
  volume    = {42},
  eprint    = {quant-ph/0211064},
  eprinttype = {arxiv},
  doi       = {10.1023/A:1025406103210},
  publisher = {Springer},
}

@Article{heinosaari2016invitation,
  author    = {Heinosaari, Teiko and Miyadera, Takayuki and Ziman, M{\'a}rio},
  journal   = {J. Phys. A Math. Theor.},
  title     = {{An invitation to quantum incompatibility}},
  year      = {2016},
  number    = {12},
  pages     = {123001},
  volume    = {49},
  doi       = {10.1088/1751-8113/49/12/123001},
  eprint    = {1511.07548},
  eprinttype = {arxiv},
  publisher = {IOP Publishing},
}

@article{Masini2023,
  author    = {Masini, Michele and Ioannou, Marie and Brunner, Nicolas and Pironio, Stefano and Sekatski, Pavel},
  title     = {{Joint-measurability and quantum communication with untrusted devices}},
  year      = {2024},
  eprint = {2403.14785},
  eprinttype = {arxiv},
}

@unpublished{partdetpol,
  author    = {Woodhead, Erik and Pironio, Stefano and Silman, Jonathan},
  title     = {{Partially deterministic polytopes}},
  note      = {In preparation},
}

@Article{bowles2018device,
  author    = {Bowles, Joseph and {\v{S}}upi{\'c}, Ivan and Cavalcanti, Daniel and Ac{\'i}n, Antonio},
  journal   = {\prl},
  title     = {{Device-independent entanglement certification of all entangled states}},
  year      = {2018},
  number    = {18},
  pages     = {180503},
  volume    = {121},
  doi       = {10.1103/physrevlett.121.180503},
  eprint = {1801.10444},
  eprinttype = {arxiv},
  publisher = {APS},
}

@article{Supic2020,
  title={{Self-testing of quantum systems: a review}},
  author={{\v{S}}upi{\'c}, Ivan and Bowles, Joseph},
  journal={Quantum},
  volume={4},
  pages={337},
  year={2020},
  publisher={Verein zur F{\"o}rderung des Open Access Publizierens in den Quantenwissenschaften},
  eprint = {1904.10042},
  eprinttype = {arxiv},
  doi = {10.22331/q-2020-09-30-337}
}

@Article{Uola2014,
  author    = {Uola, Roope and Moroder, Tobias and G\"uhne, Otfried},
  journal   = {\prl},
  title     = {{Joint Measurability of Generalized Measurements Implies Classicality}},
  year      = {2014},
  month     = {Oct},
  pages     = {160403},
  volume    = {113},
  doi       = {10.1103/PhysRevLett.113.160403},
  eprint    = {1407.2224},
  eprinttype = {arxiv},
  issue     = {16},
  numpages  = {5},
  publisher = {American Physical Society},
}

@Article{Mayers2004,
  author        = {Dominic Mayers and Andrew Yao},
  title         = {{Self testing quantum apparatus}},
  month         = {jul},
  year          = {2004},
  archiveprefix = {arXiv},
  doi           = {10.26421/qic4.4-3},
  eprint        = {quant-ph/0307205},
  eprinttype = {arxiv},
  journal       = {Quantum Information and Computation},
  number        = {4},
  pages         = {273--286},
  publisher     = {Rinton Press},
  volume        = {4},
}

@Article{Cirelson1980,
  author    = { Cirel{\textquotesingle}son, Boris C.},
  journal   = {Lett. Math. Phys.},
  title     = {{Quantum generalizations of Bell{\textquotesingle}s inequality}},
  year      = {1980},
  month     = {mar},
  number    = {2},
  pages     = {93--100},
  volume    = {4},
  doi       = {10.1007/bf00417500},
  publisher = {Springer Science and Business Media {LLC}},
}

@article{Roy2024,
  title={{Device-independent quantum key distribution based on routed Bell tests}},
  author={Roy-Deloison, Tristan Le and Lobo, Edwin Peter and Pauwels, Jef and Pironio, Stefano},
  eprint = {2404.01202},
  eprinttype = {arxiv},
  year={2024}
}

@Article{zapatero2023advances,
  author    = {Zapatero, V{\'i}ctor and van Leent, Tim and Arnon-Friedman, Rotem and Liu, Wen-Zhao and Zhang, Qiang and Weinfurter, Harald and Curty, Marcos},
  journal   = {Npj Quantum Inf.},
  title     = {{Advances in device-independent quantum key distribution}},
  year      = {2023},
  number    = {1},
  pages     = {10},
  volume    = {9},
  doi       = {10.1038/s41534-023-00684-x},
  eprint = {2208.12842},
  eprinttype = {arxiv},
  publisher = {Nature Publishing Group UK London},
}

@article{Zhengfen2022,
      title={{MIP*=RE}}, 
      author={Zhengfeng Ji and Anand Natarajan and Thomas Vidick and John Wright and Henry Yuen},
      year={2022},
      eprint ={2001.04383},
      eprinttype = {arxiv},
      archivePrefix={arXiv},
}

@Article{Slofstra2019,
  author    = {William Slofstra},
  journal   = {JAMS},
  title     = {{Tsirelson's problem and an embedding theorem for groups arising from non-local games}},
  year      = {2019},
  month     = {sep},
  number    = {1},
  pages     = {1--56},
  volume    = {33},
  doi       = {10.1090/jams/929},
  eprint    = {1606.03140},
  eprinttype = {arxiv},
  publisher = {American Mathematical Society ({AMS})},
}

@Article{Yu2005,
  author    = {Yu, Sixia and Liu, Nai-le},
  journal   = {\prl},
  title     = {{Entanglement Detection by Local Orthogonal Observables}},
  year      = {2005},
  month     = {Oct},
  pages     = {150504},
  volume    = {95},
  doi       = {10.1103/PhysRevLett.95.150504},
  eprint    = {quant-ph/0412220},
  eprinttype = {arxiv},
  issue     = {15},
  numpages  = {4},
  publisher = {American Physical Society},
}

@Article{Guhne2009,
  author  = {Otfried Gühne and Géza Tóth},
  journal = {Phys. Reps.},
  title   = {Entanglement detection},
  year    = {2009},
  issn    = {0370-1573},
  number  = {1},
  pages   = {1-75},
  volume  = {474},
  doi     = {10.1016/j.physrep.2009.02.004},
  eprint  = {0811.2803},
  eprinttype = {arxiv},
}

@phdthesis{ErikThesis,
    title    = {Imperfections and self testing in prepare-and-measure quantum key distribution},
    school   = {Universit\'e libre de Bruxelles},
    author   = {Erik Woodhead},
    year     = {2014},
    url ={https://dipot.ulb.ac.be/dspace/bitstream/2013/209185/1/23388d86-c59c-449c-b88a-1e97a45f1ad8.txt}
}

@Article{woodhead2013quantum,
  author    = {Woodhead, Erik},
  journal   = {\pra},
  title     = {Quantum cloning bound and application to quantum key distribution},
  year      = {2013},
  number    = {1},
  pages     = {012331},
  volume    = {88},
  doi       = {10.1103/physreva.88.012331},
  eprint    = {1303.4821},
  eprinttype = {arxiv},
  publisher = {APS},
}

@InProceedings{tomamichel2013one,
  author       = {Tomamichel, Marco and Fehr, Serge and Kaniewski, J{\k{e}}drzej and Wehner, Stephanie},
  booktitle    = {Advances in Cryptology--EUROCRYPT 2013: 32nd Annual International Conference on the Theory and Applications of Cryptographic Techniques, Athens, Greece, May 26-30, 2013. Proceedings 32},
  title        = {{One-sided device-independent QKD and position-based cryptography from monogamy games}},
  year         = {2013},
  organization = {Springer},
  pages        = {609--625},
  doi          = {10.1007/978-3-642-38348-9_36},
  eprint       = {quant-ph/0003004},
  eprinttype = {arxiv},
}

@Article{Bong2020,
  author    = {Kok-Wei Bong and An{\'{i}}bal Utreras-Alarc{\'{o}}n and Farzad Ghafari and Yeong-Cherng Liang and Nora Tischler and Eric G. Cavalcanti and Geoff J. Pryde and Howard M. Wiseman},
  journal   = {\natphys},
  title     = {{A strong no-go theorem on the Wigner's friend paradox}},
  year      = {2020},
  month     = {aug},
  number    = {12},
  pages     = {1199--1205},
  volume    = {16},
  doi       = {10.1038/s41567-020-0990-x},
  eprint = {1907.05607},
  eprinttype = {arxiv},
  publisher = {Springer Science and Business Media {LLC}},
}

@Article{Pauwels2022,
  author    = {Pauwels, Jef and Pironio, Stefano and Cruzeiro, Emmanuel Zambrini and Tavakoli, Armin},
  journal   = {\prl},
  title     = {Adaptive Advantage in Entanglement-Assisted Communications},
  year      = {2022},
  month     = {Sep},
  pages     = {120504},
  volume    = {129},
  doi       = {10.1103/PhysRevLett.129.120504},
  eprint    = {2203.05372},
  eprinttype = {arxiv},
  issue     = {12},
  numpages  = {7},
  publisher = {American Physical Society},
}

@Article{Pusey2015,
  author    = {Matthew F. Pusey},
  journal   = {JOSA B},
  title     = {{Verifying the quantumness of a channel with an untrusted device}},
  year      = {2015},
  month     = {mar},
  number    = {4},
  pages     = {A56},
  volume    = {32},
  doi       = {10.1364/josab.32.000a56},
  eprint    = {1502.03010},
  eprinttype = {arxiv},
  publisher = {The Optical Society},
}
\end{document}